\definecolor{darkgreen}{rgb}{0,0.5,0}
\renewcommand{\ge}{\geqslant}
\renewcommand{\geq}{\geqslant}
\renewcommand{\le}{\leqslant}
\renewcommand{\leq}{\leqslant}
\def\resp{\emph{resp.\ }}
\def\Cscr{\mathcal{C}}
\def\Sg{\mathbb{S}_g}
\def\G2{G_{\!/\!/}}
\newtheorem{claim}{Claim}
\newtheorem{theorem}[claim]{Theorem}
\newtheorem{lemma}[claim]{Lemma}
\newtheorem{proposition}[claim]{Proposition}
\newtheorem{corollary}[claim]{Corollary}
\newtheorem{fact}[claim]{Fact}
\newtheorem*{question*}{Question}
\theoremstyle{definition}
\newtheorem{definition}{Definition}
 \newtheorem*{theo*}{Theorem}
\newtheorem*{lem*}{Lemma}
\newtheorem*{prop*}{Proposition}
\newtheorem*{def*}{definition}
\newtheorem*{cor*}{Corollary}
\newtheorem*{conj*}{Conjecture}
\newtheorem{ex**}{Example}
\newcommand{\EDPs}{\textsc{EDP}\xspace}
\date{}
\begin{document}
\title{\bf Approximating maximum integral multiflows on bounded genus graphs\vspace{30pt}}

\author{ {\bf Chien-Chung Huang}
\vspace{5pt} \\  CNRS, DI ENS, \\  Universit\'e PSL, Paris, \\ \texttt{ chien-chung.huang@ens.fr}\vspace{20pt}  
\and 
{\bf Mathieu Mari}\vspace{5pt}  \\ University of Warsaw, Poland, \\ \texttt{mathieu.mari@ens.fr} \vspace{10pt}\vspace{-5pt} 
\and 
{\bf Claire Mathieu}\vspace{5pt}  \\  CNRS, IRIF,\\ Universit\'e de Paris, \\ \texttt{claire.mathieu@irif.fr}
\and 
{\bf Jens Vygen}\vspace{5pt}  \\  Research Institute \\for Discrete Mathematics,\\  Hausdorff Center for Mathematics,\\  University of Bonn,\\ \texttt{vygen@or.uni-bonn.de}}


\maketitle

%

\vspace{20pt}

\begin{abstract}
We devise the first constant-factor approximation algorithm for finding an integral multi-commodity flow of maximum total value for instances where the supply graph together with the demand edges can be embedded on an orientable surface of bounded genus.
This extends recent results for planar instances. Our techniques include an uncrossing algorithm, which is significantly more difficult than in the planar case, a partition of the cycles in the support of an LP solution into free homotopy classes, and a new rounding procedure for freely homotopic non-separating cycles.
\end{abstract}

\section{Introduction}

Multi-commodity flows, or \emph{multiflows} for short, are well-studied objects in combinatorial optimization;
see, e.g., Part VII of \cite{schrijver}.
A multiflow of maximum total value can be found in polynomial
time by linear programming. 
Often, a multiflow must be integral, and then the problem
is much harder; the well-known edge-disjoint paths
problem is a special case. Recently, constant-factor approximation algorithms
have been found for maximum edge-disjoint paths and integral multiflows
in \emph{fully planar instances}, i.e., when $G+H$, the supply graph together
with the demand edges, can be embedded in the plane \cite{huangetal20,garg2020}. We generalize
these results to surfaces of bounded genus and devise the
first constant-factor approximation algorithm for that case.

Beyond using some ideas of \cite{garg2020,huangetal20}, we need several new ingredients. 
Like \cite{garg2020}, we start by computing an optimal (fractional) multiflow
and ``uncross'' the cycles in its support as much as possible, but uncrossing
is significantly more complicated on general surfaces than in the plane.
Next, we need to  deal with two cases separately: 
depending on whether most of the fractional multiflow is on separating cycles (that case is similar to the planar case) or on non-separating cycles. 
In the latter case we partition the cycles into free homotopy classes and define a cyclic order in each free homotopy class, which is possible due to the uncrossing and allows for a simple greedy algorithm.

\subsection{Our results} 
The (fractional) \emph{maximum multiflow problem} can be described as follows. 
An instance consists of an undirected graph 
$(V,D \,\dot\cup\, E)$ whose edge set is partitioned into \emph{demand edges}, in $D$, and \emph{supply edges}, in $E$. 
We write $G=(V,E)$, $H=(V,D)$, and $G+H=(V,D \,\dot\cup\, E)$.
Moreover we have a function $u:D\,\dot\cup\, E\to\mathbb{Z}_{>0}$
which defines a \emph{capacity} $u(e)$ for each supply edge $e\in E$
and a \emph{demand} $u(d)$ for each demand edge $d\in D$.
The goal is to satisfy as much of the demand as possible by routing
flow on supply edges. More precisely, we ask for
an $s$-$t$-flow $f^d$ of value at most $u(d)$ for every demand edge $d=\{t,s\}$ such that the total flow on each supply edge is at most its capacity and
the total value of all those flows is maximum.

It is well known that every $s$-$t$-flow can be decomposed into
flow on $s$-$t$-paths and on cycles, and for integral flows there is an integral decomposition. 
The cycles in such a decomposition do not contribute to the value of the $s$-$t$-flow and can be ignored. 
An $s$-$t$-path in $(V,E)$ together with the demand edge $d=\{t,s\}$ forms a \emph{$D$-cycle}:
a cycle in $G+H$ that contains exactly one demand edge.
If we let $\Cscr$ denote the set of all $D$-cycles in $G+H$,
we can write the maximum multiflow problem equivalently as

\begin{equation}
\max \sum_{C\in \mathcal{C}} f_C \hbox{ \ s.t. }\left\{
\begin{array}{ll}
\sum_{C\in \mathcal{C}: C \ni e}f_C\leq  u(e) &\text{ for all } e\in D\,\dot\cup\,E\\
f_C \geq 0 &\text{ for all } C \in \mathcal{C}
\end{array}\right.
\label{equ:multiFlowLP}
\end{equation}

In some previous works, the problem has been defined with $u(d)=\infty$ for $d\in D$, and
this variant is easily seen to be equivalent.
We call the linear program \eqref{equ:multiFlowLP} the \emph{maximum multiflow LP}.
The \emph{maximum {integral} multiflow} problem is identical, except that the flow must be integral:
\begin{equation}
\max \sum_{C\in \mathcal{C}} f_C \hbox{ \ s.t. }\left\{
\begin{array}{ll}
 \sum_{C\in \mathcal{C}: C \ni e}f_C\leq  u(e) &\text{ for all } e\in D\,\dot\cup\,E \\
 f_C \in\mathbb{Z}_{\geq 0} &\text{ for all } C \in \mathcal{C}
\end{array}\right.
\label{equ:integralmultiFlowLP}
\end{equation}

The special case where $u(e)=1$ for every edge $e\in D\,\dot\cup\,E$ is known
as the \emph{maximum edge-disjoint paths} problem. Even that special case is unlikely to have a
constant-factor approximation algorithm for general graphs (see Section \ref{sec:relatedwork}). 
Our main result is a constant-factor approximation algorithm in the case when
$G+H$ can be embedded on an orientable surface of bounded genus.

\begin{theorem}\label{thm:main}
There is a polynomial-time algorithm that takes as input an instance $(G,H,u)$ 
of the maximum integral multiflow problem 
such that $G+H$ is embedded on an orientable surface of genus $g$, and which
outputs an integral multiflow whose value is at most a factor $O(g^2\log g)$ smaller than
the value of any fractional multiflow.
\end{theorem}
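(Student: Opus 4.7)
The plan is to follow an LP-rounding blueprint generalizing the planar approach of \cite{garg2020,huangetal20} to an orientable surface $\Sscr$ of genus $g$ on which $G+H$ is embedded. First, I solve the linear program \eqref{equ:multiFlowLP} and obtain an optimal fractional multiflow $f^*$ of LP value $\opt$, supported on a polynomial-size family $\Cscr^*\subseteq\Cscr$ of $D$-cycles (via standard flow decomposition). Next, I apply an \emph{uncrossing} subroutine that modifies $f^*$---preserving both its value and its feasibility---so that the cycles in the new support are pairwise non-crossing in a topologically minimal sense: for example, cycles in the same free homotopy class become disjoint (or share arcs without transversal crossings), and any two cycles cross the minimum number of times allowed by their homotopy classes.

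The uncrossing step is the main technical obstacle. On the plane, two crossing $D$-cycles $C_1,C_2$ can be swapped into the two cycles formed by the components of their symmetric difference, which are again $D$-cycles (one demand edge each) using the same supply edges. This no longer works on a surface of positive genus: the symmetric difference can split into several components, some containing zero or two demand edges and therefore not valid $D$-cycles, and a naive swap can also fail to decrease any obvious crossing count. To cope with this, the plan is to design a richer repertoire of local swap moves that respect the demand-edge structure on each cycle, stay within the space of fractional $D$-cycle multiflows, and decrease a carefully chosen monovariant (such as the flow-weighted total crossing number, or a lexicographic vector of crossings per pair of homotopy classes) so that only polynomially many moves are required.

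Once the support has been uncrossed, I split $\Cscr^*$ into the separating $D$-cycles $\Cscr^*_{\mathrm{sep}}$ and the non-separating ones $\Cscr^*_{\mathrm{ns}}$. At least one of the two families carries a fractional value of at least $\opt/2$, so at a factor-$2$ loss I handle them separately. In the separating case, each $C\in\Cscr^*_{\mathrm{sep}}$ bounds a subsurface of $\Sscr$, and the non-crossing property forces the bounded subsurfaces to form a laminar family; the planar rounding of \cite{garg2020,huangetal20} can then be applied inside the innermost regions and iterated outward, losing only a constant factor. In the non-separating case, I group the cycles by their free homotopy class on $\Sscr$; thanks to the non-crossing property the support lives in $O(g)$ distinct classes (by the classical bound on pairwise disjoint non-isotopic simple closed curves on a genus-$g$ surface), and within each class the cycles inherit a natural cyclic order along the annular neighborhood they span, reducing the problem to a one-dimensional cycle-packing instance that admits a simple greedy constant-factor algorithm.

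Finally, to combine these per-class solutions into a single feasible integral multiflow I pay a further factor $O(g)$ coming from the fact that cycles of different homotopy classes can share supply edges (so that capacity must be divided among classes), plus an additional $O(\log g)$ factor from bucketing supply-edge capacities into geometric scales so the greedy argument applies uniformly within each bucket. Multiplying the constant loss inside each case by the $O(g)$ homotopy classes, the $O(g)$ inter-class interaction factor, the $O(\log g)$ capacity bucketing, and the initial factor-$2$ case split yields the claimed $O(g^2\log g)$ approximation ratio, proving Theorem \ref{thm:main}. I expect by far the most delicate part of the argument to be the uncrossing of Stage~2, since even defining the right notion of ``minimally crossing'' support on a genus-$g$ surface and finding a terminating local-move algorithm that preserves the $D$-cycle structure is what distinguishes this work from the planar case.
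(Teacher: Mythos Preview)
Your overall blueprint matches the paper's (LP, uncross, split into separating versus non-separating, laminar/half-integral rounding in the separating case, homotopy-class analysis in the non-separating case), but the accounting for the $O(g^2\log g)$ factor is wrong in a way that reflects a genuine gap.

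The key error is in the non-separating branch. You claim that after uncrossing ``the support lives in $O(g)$ distinct classes, by the classical bound on pairwise disjoint non-isotopic simple closed curves.'' But your own description of uncrossing (correctly) says only that cycles end up crossing the \emph{minimum} number of times allowed by their homotopy classes. On a surface of positive genus, two non-separating cycles in different free homotopy classes may have geometric intersection number~$1$; they \emph{cannot} be made disjoint. What the paper actually proves (Lemma~\ref{lem:uncrossing}) is that one can arrange any two cycles in the support to cross \emph{at most once}, not zero times. Hence the $3g-3$ bound for pairwise disjoint curves does not apply. The correct bound on the number of free homotopy classes represented by a system of curves pairwise crossing at most once is $O(g^2\log g)$, and this is a recent and nontrivial result of Greene~\cite{greene2018curves} (Theorem~\ref{theorem:topologycurves} in the paper), not the classical $O(g)$ bound.

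Consequently, the entire $O(g^2\log g)$ loss in the paper comes from a \emph{single} step: picking the one free homotopy class with the largest flow value and discarding the rest. Within that class the cycles are genuinely non-crossing (Lemma~\ref{lemma:homotopicdonotcross}), admit a cyclic order (Lemma~\ref{lem:cyclicOrder}), and the greedy algorithm loses only a factor~$2$ (Lemma~\ref{lemma:apxratiogreedy}). There is no attempt to combine solutions across classes, so your proposed extra $O(g)$ ``inter-class interaction'' factor and $O(\log g)$ ``capacity bucketing'' factor are neither needed nor justified; they happen to multiply to the right answer by coincidence. A smaller point: in the separating case the rounding loses $\Theta(\sqrt{g})$ via the map color theorem (Lemma~\ref{lemma:integralflow}), not a universal constant, though this is dominated by $O(g^2\log g)$ anyway.
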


See Section \ref{sec:outline} for an outline of the algorithm and the proof. 
It is worth pointing out that almost all known 
hardness results for the maximum edge-disjoint paths problem 
hold even when $G$ is planar (see Section \ref{sec:relatedwork}). 
Theorem~\ref{thm:main}, along with the two 
recent papers~\cite{garg2020,huangetal20}, 
highlight that for tractability one needs more than the planarity of $G$ alone. 
The topology of $G+H$ together plays an important role. 

The dual LP of \eqref{equ:multiFlowLP} is:
\begin{equation}
\min \sum_{e \in D\dot\cup E} u(e) y_e   
\hbox{ \ s.t. }\left\{
\begin{array}{ll}
 \sum_{e \in C} y_e \geq  1    & \text{ for all } C \in \mathcal{C} \\
 y_e  \geq 0 &  \text{ for all } e \in D\,\dot\cup\,E
\end{array}\right.
\label{equ:multiCUTLP}
\end{equation}
and this may be called the \emph{minimum fractional multicut problem}.
The \emph{minimum multicut problem} results from
replacing the inequality $y_e \geq 0$ in~(\ref{equ:multiCUTLP}) by $y_e \in \{0,1\}$ 
for all edges $e \in D\,\dot\cup\,E$. 
Again, many previous works considered the equivalent special case where $u(d)=\infty$ for $d\in D$,
in which case no dual variables for demand edges are needed. 
By weak duality, the value of any multiflow is at most the capacity of any multicut.

Using Theorem~\ref{thm:main} and a previous result
of~\cite{tardos1993}, we obtain (in Section \ref{sec:conclusion}): 

\begin{corollary} 
For any instance $(G,H,u)$ of the maximum integral multiflow problem such that
$G+H$ is embedded on an orientable surface of genus $g$, 
the minimum capacity of a multicut is at most $O(g^{3.5} \log g)$ times 
the maximum value of an integral multiflow. 
\label{cor:flowCutGap}
\end{corollary}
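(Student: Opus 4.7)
The plan is to combine Theorem~\ref{thm:main} with a classical integrality-gap bound for the multicut LP on $K_{r,r}$-minor-free graphs, and to apply that bound via the observation that a graph embedded on a surface of genus $g$ is $K_{r,r}$-minor-free for $r=O(\sqrt{g})$.

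First, LP duality between~\eqref{equ:multiFlowLP} and~\eqref{equ:multiCUTLP} tells us that the maximum fractional multiflow equals the minimum fractional multicut, so Theorem~\ref{thm:main} directly yields
\begin{equation*}
\text{min fractional multicut} \;=\; \text{max fractional multiflow} \;\leq\; O(g^2 \log g)\cdot \text{max integer multiflow}.
\end{equation*}
Next, I would invoke the result of~\cite{tardos1993}, which asserts that on any $K_{r,r}$-minor-free graph the minimum integer multicut is at most $O(r^3)$ times the minimum fractional multicut. To see that this applies to $G+H$, recall two standard facts: edge contractions and deletions cannot increase the genus of an embedded graph, and by Ringel's theorem $K_{r,r}$ itself has genus $\lceil(r-2)^2/4\rceil$. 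Hence any graph of genus $g$ is $K_{r,r}$-minor-free as soon as $r > 2+2\sqrt g$, so in particular $G+H$ is $K_{r,r}$-minor-free with $r=O(\sqrt g)$, which yields
\begin{equation*}
\text{min integer multicut} \;\leq\; O(g^{1.5})\cdot \text{min fractional multicut}.
\end{equation*}
Chaining the two estimates gives min integer multicut $\leq O(g^{1.5})\cdot O(g^2\log g)\cdot$ max integer multiflow $= O(g^{3.5}\log g)\cdot$ max integer multiflow, which is exactly Corollary~\ref{cor:flowCutGap}.

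The main obstacle lies in extracting the result of~\cite{tardos1993} in precisely this form: the cubic exponent on $r$ is essential, since it is what converts $r=O(\sqrt g)$ into the $O(g^{1.5})$ factor that, when combined with Theorem~\ref{thm:main}'s $O(g^2\log g)$, produces the stated $O(g^{3.5}\log g)$. Everything else is routine bookkeeping via weak LP duality and the elementary topological fact relating the genus of $K_{r,r}$ to $r$.
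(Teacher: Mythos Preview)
Your proposal is correct and follows essentially the same route as the paper: combine Theorem~\ref{thm:main} with the Tardos--Vazirani bound~\cite{tardos1993}, using that a genus-$g$ graph excludes $K_{r,r}$ as a minor for $r=O(\sqrt{g})$. The only cosmetic differences are that the paper derives the minor-freeness via a direct Euler-formula count (rather than citing Ringel's exact genus of $K_{r,r}$) and applies the minor-free hypothesis to the supply graph $G$ rather than to $G+H$ (which is immaterial since $G\subseteq G+H$).
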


In general the integral multiflow-multicut gap\footnote{There is a closely related, but different, notion of integral flow-cut gap introduced in~\cite{CHEKURI2013}: 
they study the smallest constant $c$ such that 
whenever $u(C\cap E)\ge u(C\cap D)$ for every cut $C$
(the cut condition), there is an integral multiflow satisfying all demands
and violating capacities by at most a factor $c$.},
and even the integrality gap of \eqref{equ:multiFlowLP},
can be as large as $\Theta(|D|)$, even when $G$ is planar 
and $G+H$ is embedded in the projective plane~\cite{Garg1997}; see Section~\ref{section:proofmain}.
In this paper we consider orientable surfaces only.
Corollary \ref{cor:flowCutGap} states that the gap becomes constant 
when $G+H$ has bounded genus. So far 
very few such constant integral multiflow-multicut gaps are known, for example 
when $G$ is a tree \cite{Garg1997}, or 
when $G+H$ is planar, as recently shown in~\cite{garg2020,huangetal20}.

Finally, in Section~\ref{sec:g-square-apx},
we obtain an improved approximation ratio (but not with respect to the LP value):

\begin{theorem}\label{thm:g-square-apx}
There is a polynomial-time algorithm that takes as input an instance $(G,H,u)$ 
of the maximum integral multiflow problem 
such that $G+H$ is embedded on an orientable surface of genus $g$, and which
outputs an integral multiflow whose value is at most a factor $O(g^2)$ smaller than
the optimum.
\end{theorem}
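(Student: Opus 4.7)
The plan is to re-run the entire pipeline that proves Theorem~\ref{thm:main} (uncross; split into separating and non-separating parts; group non-separating cycles by free homotopy class; treat each piece), but to start from an optimal \emph{integral} multiflow $f^*$ of value OPT rather than from an optimal LP solution. The $\log g$ factor in Theorem~\ref{thm:main} arises from the rounding step that converts fractional flow inside a single free homotopy class of non-separating cycles into an integral routing; when the starting collection is already an integer multi-set of $D$-cycles satisfying all capacity constraints, this rounding disappears, which is where the saving comes from.

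Concretely, I would view $f^*$ as an integral multi-set $\Fscr$ of $D$-cycles (with multiplicities $f^*_C$) and first feed it into the uncrossing subroutine of Theorem~\ref{thm:main}. Uncrossing only rearranges edges among cycles without changing either the total number of cycles or the number of times any edge is used, so on an integral input it produces an integral, capacity-feasible multi-set $\Fscr'$ of the same size OPT whose members are pairwise uncrossed. Then I split $\Fscr'$ into $\Fscr_{\rm sep}$ (separating $D$-cycles) and $\Fscr_{\rm non}$ (non-separating $D$-cycles) and keep the heavier of the two halves.

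In the non-separating case, I partition $\Fscr_{\rm non}$ by free homotopy class. Because the cycles are pairwise uncrossed on $\Sg$, only $O(g)$ distinct non-separating free homotopy classes can be represented (the size of a pants decomposition of $\Sg$ is $3g-3$), so the heaviest class carries at least $\mathrm{OPT}/O(g)$ cycles; that class, being a sub-multi-set of $\Fscr'$, is already a feasible integral multiflow and is returned as is. In the separating case, after cutting $\Sg$ along a suitable collection of $O(g)$ non-separating simple curves one obtains a planar instance into which $\Fscr_{\rm sep}$ embeds, and invoking the constant-factor approximation for fully planar instances from \cite{garg2020, huangetal20} yields an integral multiflow of value $\Omega(|\Fscr_{\rm sep}|/g) = \Omega(\mathrm{OPT}/g^2)$; an extra factor $g$ is lost here because the cutting can destroy up to a $1/O(g)$ fraction of the separating cycles that happen to cross a cut curve.

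The main obstacle is bookkeeping rather than a new idea: one has to check that every structural lemma used in the proof of Theorem~\ref{thm:main} (uncrossing, the homotopy partition, the reduction of the separating part to a planar sub-instance) carries through with integral inputs and integral outputs, and in particular that the uncrossing operation never increases the load on any edge. Once this is verified, picking the better of the separating and non-separating candidates gives an integral multiflow of value $\Omega(\mathrm{OPT}/g^2)$, which is the bound claimed in Theorem~\ref{thm:g-square-apx}.
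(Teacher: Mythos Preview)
Your proposal has two genuine gaps.

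First, and fatally, you begin by ``starting from an optimal integral multiflow $f^*$ of value OPT'' and feeding it into the uncrossing subroutine. But computing such an $f^*$ is precisely the NP-hard problem we are trying to approximate; you cannot use it as input to a polynomial-time algorithm. The pipeline in Theorem~\ref{thm:main} starts from an optimal \emph{fractional} multiflow because that is obtainable via linear programming. If you start from the LP solution instead, the rest of your argument (which relies on the input already being a feasible integral multi-set, so that ``no rounding is needed'') collapses.

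Second, you assert that after uncrossing the cycles are ``pairwise uncrossed'' and hence fall into at most $3g-3$ non-separating free homotopy classes. But Lemma~\ref{lem:uncrossing} only guarantees that any two cycles cross \emph{at most once}; on a surface of positive genus two non-separating cycles can cross exactly once and cannot be uncrossed further. The $3g-3$ pants-decomposition bound requires pairwise disjoint curves and does not apply. With at-most-once crossings the relevant bound on the number of classes is Greene's $O(g^2\log g)$ (Theorem~\ref{theorem:topologycurves}), which is exactly where the $\log g$ in Theorem~\ref{thm:main} originates --- not, as you say, from a fractional-to-integral rounding inside a single class (Lemma~\ref{lemma:apxratiogreedy} loses only a factor~$2$, regardless of integrality).

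For comparison, the paper keeps the fractional pipeline of Theorem~\ref{thm:main} and removes the $\log g$ differently: it forms the intersection graph of the free homotopy classes, uses Przytycki's theorem to bound its maximum degree by $O(g^2)$, greedily colours it with $O(g^2)$ colours, and keeps the heaviest colour class $\mathcal{K}$. The classes in $\mathcal{K}$ are then mutually non-crossing, so after trimming capacities on two ``extreme'' cycles per class they can be rounded independently by the greedy algorithm. The trimming costs an additive $O(g^2\log g)$, and the small-OPT regime is handled exactly via Robertson--Seymour.
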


Whether a quadratic dependence on $g$ is necessary remains open.
However, we note in Section~\ref{section:proofmain} that 
the integrality gap of the maximum multiflow LP can depend at least linearly on $g$.

\subsection{Related Work\label{sec:relatedwork}}

{\bf Approximation algorithms and hardness for integral multiflows.}
Most of the hardness results for the maximum integral multiflow problem
follow from the special case of the maximum edge-disjoint paths problem ($\EDPs$). 
The decision version of $\EDPs$ is one of Karp's original NP-complete problems~\cite{Karp75}, 
and remains NP-complete even in many special cases~\cite{Naves2008},
including the case of interest in this paper, 
namely even when $G+H$ is planar~\cite{Middendorf1993}.  
In terms of approximation, $\EDPs$ is APX-hard~\cite{Chuzhoy05}.
Assuming that NP$\,\not\subseteq\,$DetTIME$(n^{O(\log n)})$, where $n=|V|$,
there is no $n^{o(1/\sqrt{\log n})}$ 
approximation for $\EDPs$, 
even when $G$ is planar and sub-cubic
~\cite{ChuzhoyKN17}. 
Assuming that for some positive $\delta$, NP$\,\not\subseteq\,$RandTIME$(2^{n^\delta})$,
there is no $n^{O(1/(\log\log n)^2)}$ approximation  for $\EDPs$, 
even when $G$ is planar and sub-cubic
~\cite{ChuzhoyKN18}. 
 As far as we know, no stronger hardness result is known for integral mutliflows. 

On the positive side, $\EDPs$ can be solved in polynomial time when the number of demand edges is bounded by a constant~\cite{ROBERTSON199565}. 
The same holds for integral multiflows when $G+H$ is planar \cite{Sebo1993}.
For exact algorithms in various special cases, see the survey~\cite{Naves2008}. In general, the best known approximation guarantee for $\EDPs$ and maximum integral multiflows 
is $O(\sqrt{n})$~\cite{chekurikhannashepherd06}. Approximation algorithms 
with better approximation ratios for various special cases have been designed. We refer the readers to the survey~\cite{Costa2005} and to
\cite{Garg1997,KawarabayashiK13,Naves2008} and the references therein. \\

{\bf Recent work on the planar case. }
Recently, 
\cite{garg2020} and~\cite{huangetal20} gave constant-factor approximation 
algorithms for maximum integer multiflows when $G+H$ is planar. 
Both papers proceed by first obtaining a half-integral multiflow 
and then using the four color theorem to round it to an integral solution 
(similar to Section~\ref{sec:separating}). 
The main difference between the two works is the way such half-integral multiflows are obtained.
In~\cite{garg2020}, it is constructed by uncrossing a fractional multiflow 
(see Section~\ref{sec:uncrossing} for a definition) 
to construct a certain network matrix, which is known to be totally unimodular; 
in \cite{huangetal20}, such a half-integral multiflow 
is obtained by rounding a feasible solution of a related problem
in the planar dual graph of $G+H$. 
Both approaches do not extend to higher genus graphs in a straightforward way, 
because the dual of a cycle
is no longer a cut in general and cycles cannot always be uncrossed. \\

{\bf Minimum multicut problem.} 
The minimum multicut problem is NP-hard even when there are only
three demand edges \cite{Dahlhaus1994}. 
In general, assuming that the Unique Games conjecture holds, there is no $O(1)$-approximation~\cite{Chawla2006}, 
but a $O(\log |D|)$-approximation algorithm~\cite{Garg1996}. 
Better approximations also have been shown for special cases;
see~\cite{Garg1997,tardos1993} and the references therein. 
In particular, when $G+H$ is planar, \cite{Klein2014} gave an approximation scheme. 
When $G$ has genus $g$, an FPT-approximation scheme 
with parameters of $g$ and $|D|$ has been proposed~\cite{Verdiere2018}. \\

{\bf Tools from topology.} 
The design of multiflows on surfaces is closely related to the properties of sets of curves on a surface. 
In a recent breakthrough, Przytycki \cite{Piotr15} proved that the maximum number of essential curves on a closed surface of genus $g$ such that no two of them are freely homotopic or intersect more than once is $O(g^3)$, improving on the previous exponential upper bound by \cite{Malestein}. Very recently, this number was shown to be $O(g^2\log g)$ by \cite{greene2018curves}, which almost matches the lower bound $\Omega(g^2)$ on the size of such sets \cite{Malestein}. 
We will use this result in Section \ref{sec:nonseparating}.

\section{Preliminaries}

Consider an instance $(G,H,u)$ of the maximum integral multiflow problem, and
let $G+H=(V,E \,\dot\cup\, D)$ be the graph whose edge set is the disjoint union of the edge sets of the supply graph $G=(V,E)$ and the demand graph $H=(V,D)$. 
Throughout the paper, we assume that the graph $G+H$ is connected, otherwise,    we can run the algorithm on each of its connected components.\\

\vspace{-5pt}
{\bf Graphs on surfaces.} 
Surfaces are either orientable or non-orientable; 
in this paper we only consider closed orientable surfaces. 
A closed orientable surface of genus $g$ can be seen as a connected sum of $g$ tori, 
or equivalently a sphere with $g$ handles attached on it, 
where $g$ is called the \emph{genus} of the surface. 
Given an integer $g\ge 0$, all closed surfaces with genus $g$ are mutually homeomorphic, 
and we refer to any one of them as $\mathbb{S}_g$. 
For instance, $\mathbb{S}_0$ is the sphere and $\mathbb{S}_1$ is the torus.

A (multi)graph has \emph{genus $g$} or is a \emph{genus-$g$ graph}, if it can be drawn on $\Sg$ without edge crossings, but not on $\mathbb{S}_{g-1}$. A genus-$g$ graph may have several non-equivalent embeddings on $\Sg$, but all of them satisfy the same invariant, called the \emph{Euler characteristic}:
$\#\text{Faces}-\#\text{Edges}+\#\text{Vertices} = 2-2g$. 

A simple application of Euler's formula gives the following upper bound 
on the coloring number of genus-$g$ graphs, when $g\ge 1$. 

\begin{theorem}(Map color theorem) \label{thm:mapcolor} 
A genus-$g$ graph can be colored in polynomial time with at most $\chi_g \le \lfloor \frac{7+\sqrt{1+48g}}{2}\rfloor$ colors. 
\end{theorem}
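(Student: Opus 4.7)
The plan is to apply a classical degeneracy argument based on Euler's formula. One may assume without loss of generality that the graph is simple (parallel edges do not affect the chromatic number), loopless (loops preclude proper coloring), and connected (otherwise color each component independently). Let $n$, $m$, and $f$ denote the numbers of vertices, edges, and faces of a fixed embedding on $\Sg$. Euler's formula gives $n - m + f = 2 - 2g$, and since every face is bounded by at least three edges, $3f \le 2m$; combining these two facts yields the classical edge bound $m \le 3n + 6g - 6$, and hence $2m \le 6n + 12g - 12$.

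The key structural claim is that whenever $n > \chi_g$, there exists a vertex of degree at most $\chi_g - 1$. Suppose for contradiction that every vertex has degree at least $\chi_g$. Then $n\chi_g \le 2m \le 6n + 12g - 12$, i.e., $n(\chi_g - 6) \le 12g - 12$. On the other hand, the definition $\chi_g = \lfloor (7+\sqrt{1+48g})/2 \rfloor$ gives $\chi_g + 1 > (7+\sqrt{1+48g})/2$, equivalently $2\chi_g - 5 > \sqrt{1+48g}$. Squaring (valid because $\chi_g \ge 7$ when $g \ge 1$) and simplifying yields $(\chi_g-2)(\chi_g-3) > 12g$, which rearranges to $(\chi_g+1)(\chi_g-6) > 12g-12$. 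Since $n \ge \chi_g + 1$, this contradicts the previous inequality.

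Once this claim is in hand, the algorithm is immediate: iteratively locate a vertex of degree at most $\chi_g - 1$ in the current subgraph (which still embeds on $\Sg$ and therefore satisfies the claim whenever it retains more than $\chi_g$ vertices), remove it, and push it onto an elimination stack; stop when at most $\chi_g$ vertices remain. Coloring the vertices in reverse order of removal, each vertex has at most $\chi_g - 1$ already-colored neighbors --- or is among the last $\chi_g$ colored --- so a color in $\{1,\dots,\chi_g\}$ is always available. Each step runs in polynomial time.

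The argument is conceptually routine; the only point requiring real care is the arithmetic tying the floor in the definition of $\chi_g$ to the edge bound, which is precisely what the squaring computation above establishes, and which explains why the induction threshold $n>\chi_g$ is exactly the right one.
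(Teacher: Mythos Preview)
Your argument is precisely Heawood's classical degeneracy proof, which is exactly what the paper invokes for $g\ge 1$: locate a vertex of small degree via Euler's formula, delete it, recurse, and color greedily in reverse. The arithmetic linking the floor in $\chi_g$ to the edge bound is correct.

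The one omission is the case $g=0$. You implicitly restrict to $g\ge 1$ when you write ``valid because $\chi_g\ge 7$ when $g\ge 1$'', and indeed the contradiction step requires $\chi_g-6>0$ so that $n\ge\chi_g+1$ implies $n(\chi_g-6)\ge(\chi_g+1)(\chi_g-6)$; for $g=0$ one has $\chi_0=4$ and the degeneracy claim is simply false (the icosahedron is planar and $5$-regular). The paper handles $g=0$ by citing the algorithmic four-color theorem separately, and you should do the same.
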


For $g=0$, this is an algorithmic version of the 4-color theorem~\cite{Robertson1997}. 
For $g\geq 1$, the coloring is obtained in polynomial time by a simple recursive algorithm that removes a vertex of minimum degree and colors the remaining graph~\cite{Heawood1890}. 
For additional details and results about graphs on surfaces see e.g. \cite{MoharT01, Verdiere_topologicalalgorithms}. \\

\vspace{-5pt}
{\bf Combinatorial embeddings.} Given a graph, let $\delta(v)$ denote the set of edges incident to a vertex $v$, and $\delta(U)$ the set of edges with exactly one endpoint in vertex set $U$. 
Given an embedding of a graph on an orientable surface, and an arbitrary orientation of this  surface, 
for each vertex $v$, a clockwise cyclic order can be defined on the edges of $\delta(v)$. 
Note that contracting an edge $e=\{u,v\}$ results in removing $e$ 
from $\delta(u)$ and from $\delta(v)$ and concatenating the orders 
to obtain the clockwise cyclic order of the edges around the vertex created by the contraction. 
Using these orders together with the incidence relation between edges and faces, embeddings become purely combinatorial objects. For additional details see, e.g., \cite{MoharT01}, Chapter 4. \\

\vspace{-5pt}
{\bf Graph duality.} 
Given an embedding of a genus-$g$ graph $G$ on $\Sg$, there exists a uniquely defined dual graph, denoted as $G^*$. 
This graph can be embedded on the same surface as $G$. 
There exists a bijection between the faces of $G$ and the vertices of $G^*$,
a bijection between the vertices of $G$ and the faces of $G^*$, 
and a bijection between the edge sets of $G$ and of $G^*$. 
Moreover, the embeddings of $G$ and $G^*$ are consistent: with this bijection, every edge only crosses its dual edge, every face only contains its corresponding dual vertex and reciprocally. 
For notational simplicity, the latter bijection 
is implicit. \\

 \vspace{-5pt}
{\bf Cycles and cuts.} 
A \emph{path} in a graph $G$ is a sequence $(v_0,e_1,v_1,\ldots ,e_k,v_k)$ for some $k\ge 0$, where $v_0,\ldots,v_k$ are distinct vertices and $e_i=\{ v_{i-1},v_i\}$ is an edge for all $i=1,\ldots,k$. 
A \emph{cycle} in a graph $G$ is a sequence $(v_0,e_1,v_1,\ldots ,e_k,v_k)$ 
such that $v_1,\ldots,v_k$ are distinct vertices, $\{v_{i-1},v_i\}$ is an edge for all $i=1,\ldots,k$, and $v_0=v_k$. Sometimes we view cycles as edge sets or as graphs.
A \emph{cut} is an edge set $\delta(U)$ for some proper subset $\emptyset\neq U \subset V$. 
A cut $\delta(U)$ is \emph{simple} if both $U$ and $V\setminus U$ are connected. 
We say that an edge set $F$ in a graph is a (simple) \emph{dual cut} if the corresponding set of edges $F^*$ in the dual is a (simple) cut. 
 A cycle $C$ in $G$ is called \emph{separating} if it is a dual cut, and \emph{non-separating} otherwise. 
 Note that every separating cycle is a simple dual cut.\\

\vspace{-5pt}
{\bf Homotopy.} Given a surface $\mathbb{S}$, 
a (simple) \emph{topological cycle} is a continuous injective map $\gamma$ from the unit cycle $S^1:=\{z\in\mathbb{C} , ~||z||=1 \}$ to $\mathbb{S}$.  
Two topological cycles $\gamma_1$ and $\gamma_2$ are \emph{freely homotopic} 
if there exists a continuous function $\varphi:[0,1]\times S^1\rightarrow \mathbb{S}$ such that $\varphi(0,\cdot)=\gamma_1$ and 
$\varphi(1,\cdot)=\gamma_2$. 
Intuitively, cycle $\gamma_1$ is transformed into cycle $\gamma_2$ by continuously moving it on the surface. Free homotopy is an equivalence relation. 

Given an embedding of the graph $G+H$ on $\mathbb{S}$, we say that a cycle $C$ in $G+H$ \emph{is represented\footnote{Topological cycles are considered up
to orientation-preserving reparameterization. Therefore, a cycle in $G
+ H$ may be represented by a topological
cycle from two classes, one for each orientation: the class of
$\gamma$ and the class of $\gamma'$ where $\gamma'(e^{i\theta} ) =
\gamma(e^{-i\theta} )$. 
} by} a topological cycle $\gamma$ of $\mathbb{S}$ if the image of $\gamma$ is the embedding of $C$ on $\mathbb{S}$. 
Two cycles in $G+H$ are freely homotopic if and only if they can be represented by two freely homotopic topological cycles. 
\begin{figure}
\centering
  \includegraphics[width=0.8\linewidth]{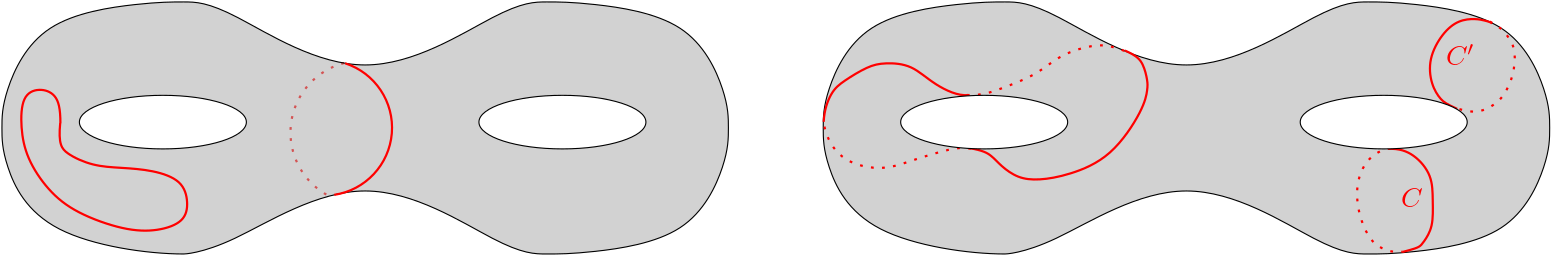}
  \caption{Some cycles on an orientable surface of genus $2$. On the left, two \emph{separating} cycles. On the right, three \emph{non-separating} cycles. $C$ and $C'$ are \emph{freely homotopic} and their union disconnects the surface.   }
  \label{fig:threekindsofcycles}
\end{figure}
In the sequel, we use the following well-known fact. 

\begin{fact}
If two cycles $C$ and $C'$ are freely homotopic, then their symmetric difference is a dual cut. If $C$ and $C'$ are additionally disjoint and non-separating, then their union is a simple dual cut. 
\label{fact:homotopic}
\end{fact}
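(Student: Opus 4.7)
The plan is to reduce both parts to a single topological statement about the cellular $\mathbb{F}_2$-homology of $\mathbb{S}_g$ induced by the embedding of $G+H$. The key identification, which I would state at the start of the proof, is the following: viewing the 2-cells of $\mathbb{S}_g$ as the faces of $G+H$, a set $F$ of edges arises as the $\mathbb{F}_2$-boundary of a 2-chain if and only if $F^*$ is a cut in $(G+H)^*$. Indeed, a 2-chain is a subset $U$ of faces, and its $\mathbb{F}_2$-boundary is the set of edges lying between $U$ and its complement, which under the bijection between faces of $G+H$ and vertices of $(G+H)^*$ is exactly $\delta^*(U)$. So ``dual cut'' is synonymous with ``$\mathbb{F}_2$-boundary of a set of faces''.

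For the first part, the plan is to show that freely homotopic cycles are $\mathbb{F}_2$-homologous, so that their symmetric difference is a boundary and hence a dual cut. Let $\gamma,\gamma'$ be topological cycles representing $C,C'$ and $\varphi:[0,1]\times S^1\to\mathbb{S}_g$ a free homotopy between them. After a generic perturbation of $\varphi$ relative to the boundary, we may assume $\varphi$ is transverse to the 1-skeleton of $G+H$; the preimages of the open 2-cells then partition the annulus into finitely many pieces, and the 2-chain $X\in C_2(\mathbb{S}_g;\mathbb{F}_2)$ counting (modulo~$2$) how many times each face is covered satisfies $\partial X = C + C'$ in $C_1(\mathbb{S}_g;\mathbb{F}_2)$. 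This yields $C\triangle C'=\partial X$, which is a dual cut by the identification above. (A more combinatorial alternative I could use is to realise any free homotopy between cycles drawn on an embedded graph as a sequence of elementary ``face-flip'' moves that add the boundary of one face to the cycle, each of which preserves the $\mathbb{F}_2$-homology class.)

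For the second part, disjointness of $C$ and $C'$ gives $C\triangle C'=C\cup C'$, so by the first part there is a set $U$ of faces with $\delta^*(U)=C\cup C'$. To show this cut is \emph{simple} I must verify that both $U$ and its complement are connected in $(G+H)^*$, equivalently that the two open regions of $\mathbb{S}_g\setminus(C\cup C')$ are each connected. Because $C$ is non-separating, cutting $\mathbb{S}_g$ along $C$ produces a connected surface $\mathbb{S}'$ with two boundary circles $C_1,C_2$ (two copies of $C$); the disjoint curve $C'$ now sits inside $\mathbb{S}'$. Since $C'$ is freely homotopic to $C$ in $\mathbb{S}_g$, its image in $\mathbb{S}'$ is isotopic to one of the boundary circles, say $C_1$, and therefore cobounds an annulus with $C_1$. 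Consequently $\mathbb{S}'\setminus C'$ has exactly two components: this annulus, and the complementary subsurface, which is connected since $C_1$ bounds an annulus by itself (so its removal does not disconnect) and still contains $C_2$. Regluing along $C$ and unpacking: $\mathbb{S}_g\setminus(C\cup C')$ has exactly two connected components, so $C\cup C'$ is a simple dual cut.

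The main obstacle is making the first part precise in a self-contained way without invoking too much algebraic topology; I would probably favour the combinatorial face-flip formulation because it stays entirely within the graph-theoretic language used elsewhere in the paper. The rest, including the annulus argument for simplicity in the second part, is standard once the cellular-homology/dual-cut dictionary is stated.
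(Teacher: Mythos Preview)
The paper does not actually prove this statement: it is recorded as a well-known fact, with a single intuitive sentence (``the image of the continuous homotopy function from $C$ to $C'$ on the surface forms an annulus'') and a citation to Epstein. Your proposal is considerably more detailed, and your first part is a genuine self-contained argument that the paper does not supply: the identification of dual cuts with $\mathbb{F}_2$-boundaries of $2$-chains, together with the standard fact that freely homotopic $1$-cycles are $\mathbb{F}_2$-homologous, yields the claim cleanly.

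Your second part, however, has a gap. The assertion that ``its image in $\mathbb{S}'$ is isotopic to one of the boundary circles'' does not follow directly from $C'$ being freely homotopic to $C$ in $\mathbb{S}_g$: the free homotopy may cross $C$, so it need not descend to a homotopy (let alone an isotopy) in the cut-open surface $\mathbb{S}'$. Establishing that disjoint freely homotopic simple closed curves cobound an embedded annulus is precisely the content of Epstein's theorem, which is what the paper cites; you are implicitly invoking it rather than avoiding it. Fortunately, your own first part lets you sidestep this entirely. Since $C\cup C'$ is a dual cut, $(G+H)^*\setminus(C\cup C')^*$ is disconnected, so $\mathbb{S}_g\setminus(C\cup C')$ has at least two components. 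On the other hand, $\mathbb{S}_g\setminus C$ is connected because $C$ is non-separating, and removing the single simple closed curve $C'$ from a connected orientable surface yields at most two components (each component must meet the annular tubular neighbourhood of $C'$, which has only two sides). Hence exactly two components, and the dual cut is simple. This route uses only your part one and elementary topology, with no isotopy claim and no appeal to Epstein.
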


Intuitively, the image of the continuous homotopy function from $C$ to $C'$ on the surface forms an annulus \cite{epstein1966}. See Figure \ref{fig:threekindsofcycles} for an illustration.

\section{Overview\label{sec:outline}}

In this section, we give an overview of our constant-factor approximation algorithm for the
maximum integral multiflow problem when $G+H$ is embedded on an orientable surface $\Sg$ of genus $g$,
where $g$ is bounded by a constant (Theorem~\ref{thm:main}). 
Again, without loss of generality, we assume that $G+H$ is connected.
Here is the main algorithm. 
Steps 1,2,3,4 will be described in detail in Sections 4,5,6,7, respectively.
\begin{enumerate}
\item\label{step1}
Solve the linear program~(\ref{equ:multiFlowLP}) to obtain a (fractional) multiflow $f^*$.
\item\label{step2}
Construct another multiflow $\overline{f}$ such that any two cycles in the support of $\overline{f}$ cross at most once (Lemma~\ref{lem:uncrossing}). 
See Definition~\ref{def:crossing} for the definition 
of ``crossing.'' 
\item\label{step3}
If at least half of the total value of $\overline{f}$ is contributed by separating cycles,
these cycles now form a laminar family. Construct a half-integral multiflow $f^{\text{half}}$ (Theorem~\ref{theorem:halfinteger}), 
and from there, using the map color theorem (Theorem~\ref{thm:mapcolor}), compute an integral multiflow $f'$ (Lemma~\ref{lemma:integralflow}), which is the output.
\item\label{step4}
Otherwise, partition the non-separating cycles in the support of $\overline{f}$ into free homotopy classes.
Pick the class $\mathcal{H}$ with the largest total flow value. 
Remove the flow on all other cycles
and greedily construct an integral multiflow (Lemmas \ref{lem:cyclicOrder} and \ref{lemma:apxratiogreedy}), which is the output.
\end{enumerate}

It can be proved that we only lose a constant factor at every step of the algorithm: see Section~\ref{section:proofmain} for the analysis of the above algorithm, proving Theorem~\ref{thm:main}.

\section{Finding a fractional multiflow (Step~\ref{step1})}\label{section:fractionalmultiflow}

A feasible solution $f$ to the maximum multiflow LP $(\ref{equ:multiFlowLP})$ will be simply called a \emph{multiflow}. 
Recall that $\mathcal{C}$ denotes the set of all $D$-cycles, i.e., all cycles in $G+H$ that contain precisely one demand edge.
We denote by $|f|=\sum_{C\in\mathcal{C}}f_C$ the \emph{value} of $f$, and by
$\mathcal{C}(f):=\{C \in \mathcal{C} \mid f_C>0\}$ the \emph{support} of $f$.
Although formulation~(\ref{equ:multiFlowLP}) has an exponential number of variables, 
it is well known 
that it can be reformulated by polynomially many flow variables and constraints
(see, e.g., \cite{FordFulkerson,Orlin})
and thereby solved in polynomial time:

\begin{proposition}\label{prop:fractionalmutliflow}
There is an  algorithm that finds an optimal solution $f^*$ to the maximum multiflow LP~\eqref{equ:multiFlowLP}  such that $|\Cscr(f^*)|\le |D||E|$. Its running time is polynomial in the size of the input graph.
\end{proposition}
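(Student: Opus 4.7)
The plan is to replace the exponentially-sized cycle formulation \eqref{equ:multiFlowLP} by an equivalent compact edge-flow formulation, solve the latter in polynomial time, and then recover a cycle solution via a standard flow decomposition while controlling the size of its support.

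First, I would introduce, for every demand edge $d=\{s_d,t_d\}\in D$, a variable $f^d_e\ge 0$ for each supply edge $e\in E$ (split into two non-negative variables, one for each of the two possible orientations of $e$). I would then write the usual polynomial-size multicommodity flow LP: for each $d\in D$, the variables $f^d_e$ describe an $s_d$-$t_d$-flow of some value $v_d\le u(d)$ via standard flow-conservation equalities at all vertices $v\notin\{s_d,t_d\}$; the aggregated capacity constraint $\sum_{d\in D} f^d_e\le u(e)$ is imposed for every $e\in E$; and the objective is $\max\sum_{d\in D}v_d$. This LP has $O(|D||E|)$ variables and $O(|D||V|+|E|)$ constraints, so it can be solved in polynomial time by any polynomial-time LP algorithm (e.g.\ the ellipsoid or interior-point method).

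Next, given an optimal edge-flow solution $(f^d)_{d\in D}$, I would apply the classical flow-decomposition theorem commodity by commodity: each $s_d$-$t_d$-flow $f^d$ can be decomposed, in polynomial time, into a non-negative combination of at most $|E|$ flows, each supported on an $s_d$-$t_d$-path or on a cycle. The cycles carry no $s_d$-$t_d$-value and can be discarded without decreasing $v_d$, so after discarding them I obtain, for each $d$, at most $|E|$ paths $P^d_1,\ldots,P^d_{k_d}$ with weights $\lambda^d_i\ge 0$ and $\sum_i\lambda^d_i=v_d$. Concatenating each path $P^d_i$ with the demand edge $d$ produces a $D$-cycle $C^d_i\in\mathcal{C}$, and setting $f^*_{C^d_i}:=\lambda^d_i$ yields a feasible solution to \eqref{equ:multiFlowLP} of the same total value. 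Since the edge-flow LP and the cycle LP have the same optimum (every cycle solution aggregates into a feasible edge-flow solution, and conversely decomposition provides the reverse direction), $f^*$ is optimal. The support bound follows immediately: $|\mathcal{C}(f^*)|\le\sum_{d\in D}k_d\le |D||E|$.

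I do not foresee a conceptual obstacle here; the only points that require a little care are the standard ones, namely writing the edge-flow LP so that flows of different commodities are aggregated correctly against the undirected capacity $u(e)$, and invoking the flow-decomposition theorem in its polynomial-time constructive form (which proceeds by iteratively extracting a path or cycle in the support of a single commodity and reducing the support by at least one edge per extraction, giving the $|E|$ bound per commodity). The equivalence of the two LPs and the polynomial-time solvability of the compact one are classical and can be taken directly from the textbook references already cited, so the proof should be short.
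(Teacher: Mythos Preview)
Your proposal is correct and follows essentially the same approach as the paper: reformulate \eqref{equ:multiFlowLP} as a polynomial-size edge-flow LP with variables indexed by commodity and edge, solve it, then apply flow decomposition commodity by commodity to obtain a cycle solution with support at most $|D||E|$. The paper's proof is more terse (it writes $x^d_e$ for $e\in D\,\dot\cup\,E$ and lets $x^d_d$ play the role of your $v_d$), but the substance is identical.
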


\begin{proof}
By introducing flow variables $x^d_e:=\sum_{C\in\mathcal{C}: d,e\in C} f_C$
for all $d\in D$ and $e\in D\,\dot\cup\,E$ we can maximize the total value 
$\sum_{d\in D}x^d_d$ subject to nonnegativity and flow conservation constraints (for each $d\in D$ and for each vertex).
This is a linear program of polynomial size. 
By flow decomposition, one can then construct a feasible solution to \eqref{equ:multiFlowLP} of the same value and with support at most $|D||E|$.
\end{proof}

Later we will restrict a multiflow to subsets of $D$-cycles. 
For $\mathcal{C}'\subseteq \mathcal{C}$ we define a multiflow $f'$ by
$f'_C:=f_C$ for $C\in \mathcal{C}'$ and $f'_C:=0$ for $C\in\mathcal{C}\setminus\mathcal{C}'$,
and write $f(\mathcal{C}'):=f'$.

\section{Making a fractional flow minimally crossing (Step~\ref{step2})}
\label{sec:uncrossing}

In this section we show that for a given embedding, we can ``uncross'' a multiflow in such a way 
that any two $D$-cycles in the support cross at most once. 
While doing this we will lose only an arbitrarily small fraction of the multiflow value. 

Uncrossing is a well-known technique in combinatorial optimization, but in most cases it is
applied to families of subsets of a ground set $U$. 
Such a family is said to be cross-free  if, for any two of its sets, $A$ and $B$, 
at least one of the four sets $A\setminus B$, $B\setminus A$, $A\cap B$, and $U\setminus (A\cup B)$ is empty. 
Here we want to uncross $D$-cycles in the topological sense, 
and this can be reduced to the above (with some extra care) only if all these cycles are separating 
(which, for example, is always the case if $G+H$ is planar; cf.\ \cite{garg2020}). 

\begin{definition} 
\label{def:crossing}
We say that two $D$-cycles $C_1$ and $C_2$ \emph{cross} if there exists a path $P$ (possibly a single vertex), 
which is a subpath of both $C_1$ and $C_2$, and such that in the embedding, after contracting the edges of $P$, 
the vertex $v$ thus obtained is incident to two edges of $C_1$ and to two edges of $C_2$, all distinct, 
and in the embedding the restriction of the cyclic order of $\delta(v)$ to those four edges 
alternates between an edge of $C_1$ and an edge of $C_2$.

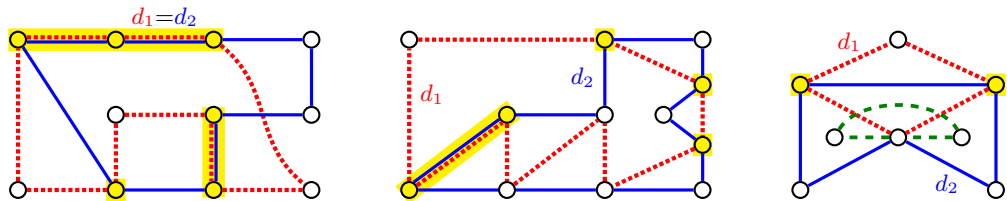
\begin{figure}[htbp]
  \begin{center}
  \begin{tikzpicture}[xscale=1.3, yscale=1]
  \tikzset{node/.style={
  draw=black,thick,circle,inner sep=0em,minimum size=6pt
  }}
  \tikzset{cross/.style={
  draw=red,thick,rectangle,inner sep=0em,minimum size=7pt,fill=red
  }}
  \tikzset{c1/.style={
  ultra thick, densely dotted, red
  }}
  \tikzset{c2/.style={
  very thick, blue
  }}
  \tikzset{c3/.style={
  line width = .5mm, dashed, darkgreen
  }}
  \tikzset{shade/.style={
  line width=3mm, yellow
  }}

\begin{scope}[xshift=0cm]  
  \draw[shade] (0.9,2) to (3.1,2);
  \draw[shade] (1.9,0) to (2.1,0);
  \draw[shade] (3,-0.1) to (3,1.1);
  \node[node] (v1) at (1,0) {};
  \node[node] (v2) at (1,2) {};
  \node[node] (v3) at (2,2) {};
  \node[node] (v4) at (3,2) {};
  \node[node] (v5) at (4,2) {};
  \node[node] (v6) at (2,0) {};
  \node[node] (v7) at (2,1) {};
  \node[node] (v8) at (4,1) {};
  \node[node] (v9) at (3,1) {};
  \node[node] (v10) at (3,0) {};
  \node[node] (v11) at (4,0) {};
  \draw[c1] (v1) to (v2);
  \draw[c1] (v2.15) to (v3.165);
  \draw[c2] (v2.345) to (v3.195);
  \draw[c1] (v3.15) to node[above] {\small\textcolor{red}{$d_1$}\textcolor{black}{$=$}\textcolor{blue}{$d_2$}} (v4.165);
  \draw[c2] (v3.345) to (v4.195);
  \draw[c1] (v4) to[out=315,in=135] (v11);
  \draw[c1] (v11) to (v10);
  \draw[c1] (v9.255) to (v10.105);
  \draw[c2] (v9.285) to (v10.75);
  \draw[c1] (v9) to (v7);
  \draw[c1] (v7) to (v6);
  \draw[c1] (v6) to (v1);
  \draw[c2] (v4) to (v5);
  \draw[c2] (v5) to (v8);
  \draw[c2] (v8) to (v9);
  \draw[c2] (v10) to (v6);
  \draw[c2] (v6) to (v2);
\end{scope}

\begin{scope}[xshift=5cm]  
  \draw[shade] (1.9,2) to (2.1,2);
  \draw[shade] (2.9,1.4) to (3.1,1.4);
  \draw[shade] (2.9,0.6) to (3.1,0.6);
  \draw[shade] (-0.07,-0.07) to (1.07,1.07);
  
  \node[node] (v1) at (0,2) {};
  \node[node] (v2) at (2,2) {};
  \node[node] (v3) at (3,2) {};
  \node[node] (v4) at (1,1) {};
  \node[node] (v5) at (2,1) {};
  \node[node] (v6) at (3,0) {};
  \node[node] (v7) at (0,0) {};
  \node[node] (v8) at (1,0) {};
  \node[node] (v9) at (2,0) {};
  \node[node] (v10) at (3,1.4) {};
  \node[node] (v11) at (2.6,1) {};
  \node[node] (v13) at (3,0.6) {};
  \draw[c1] (v1) to (v2);
  \draw[c1] (v10) to (v2);
  \draw[c2] (v3) to (v2);
  \draw[c1] (v9) to (v5);
  \draw[c1] (v10) to (v13);
  \draw[c1] (v13) to (v9);
  \draw[c2] (v4) to (v5);
  \draw[c1] (v8) to (v5);
  \draw[c1] (v4) to (v8);
  \draw[c1] (v7.30) to (v4.240);
  \draw[c2] (v7) to (v8);
  \draw[c1] (v7) to node[above right] {\small\textcolor{red}{$d_1$}} (v1);
  \draw[c2] (v10) to (v3);
  \draw[c2] (v10) to (v11);
  \draw[c2] (v11) to (v13);
  \draw[c2] (v13) to (v6);
  \draw[c2] (v6) to (v9);
  \draw[c2] (v9) to (v8);
  \draw[c2] (v7.60) to (v4.210);
  \draw[c2] (v5) to node[left] {\small\textcolor{blue}{$d_2$}} (v2);
\end{scope}

\begin{scope}[xshift=9cm]  
  \draw[shade] (-0.1,1.4) to (0.1,1.4);
  \draw[shade] (1.9,1.4) to (2.1,1.4);
  
  \node[node] (v1) at (0,0) {};
  \node[node] (v2) at (2,0) {};
  \node[node] (v3) at (0.35,0.7) {};
  \node[node] (v4) at (1,0.7) {};
  \node[node] (v5) at (1.65,0.7) {};
  \node[node] (v6) at (0,1.4) {};
  \node[node] (v7) at (2,1.4) {};
  \node[node] (v8) at (1,2) {};
  \draw[c1] (v4) to (v6);
  \draw[c1] (v6) to node[above] {\small\textcolor{red}{$d_1$}} (v8);
  \draw[c1] (v8) to (v7);
  \draw[c1] (v7) to (v4);
  \draw[c2] (v1) to (v4);
  \draw[c2] (v4) to node[below] {\small\textcolor{blue}{$d_2$}} (v2);
  \draw[c2] (v2) to (v7);
  \draw[c2] (v7) to  (v6);
  \draw[c2] (v6) to (v1);
  \draw[c3] (v5) to (v4);
  \draw[c3] (v4) to (v3);
  \draw[c3] (v3) to[out=65,in=115] (v5);
\end{scope}
\end{tikzpicture}
\end{center}
\caption{Each of the two figures on the left show two $D$-cycles, $C_1$ (red, dotted) and $C_2$ (blue, solid). 
The edges belonging to $D$ are marked as $d_1$ and $d_2$. 
Edges are arranged at every vertex in the order of their embedding. 
Crossings are marked by yellow shade.
The two $D$-cycles on the left cross three times. 
The two $D$-cycles in the middle cross four times. 
The figure on the right shows two $D$-cycles $C_1$ and $C_2$ that cross twice,
and a third $D$-cycle $C_3$ (green, dashed) that crosses neither $C_1$ nor $C_2$.
Uncrossing $C_1$ and $C_2$ here generates a crossing of $C_3$ with a new $D$-cycle (namely with the triangle containing $d_2$).\label{fig:crossing}}
\end{figure}

\end{definition} 

Two cycles may cross multiple times.
We denote by $cr(C,C')$ the number of times that $C$ and $C'$ cross.
See Figure \ref{fig:crossing} for three examples. 
In contrast to the planar case, it is possible that two cycles cross exactly once and cannot be uncrossed.
The third example in Figure \ref{fig:crossing} shows another difficulty: when uncrossing two $D$-cycles
it might be necessary to generate new crossings with other cycles.

\begin{lemma} 
Let $\epsilon>0$ be fixed.
Given a multiflow $f$ whose support has size at most $|E||D|$, there is a polynomial-time algorithm to construct another   
multiflow $\overline{f}$, of value at least $|\overline{f}|\geq (1 -\epsilon) |f|$, and 
such that any two cycles in the support of 
$\overline{f}$ cross at most once. 
\label{lem:uncrossing}
\end{lemma}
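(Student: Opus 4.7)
The plan is to iteratively \emph{splice} pairs of $D$-cycles in the support of $f$, driven by the potential
$\Phi(f) := \sum_{\{C,C'\} \subseteq \mathcal{C}(f)} cr(C,C')$,
which is bounded by a polynomial in the input size because $|\mathcal{C}(f)| \leq |E||D|$ and any two simple cycles cross at most $|E|$ times. As long as some pair $C_1, C_2 \in \mathcal{C}(f)$ crosses at least twice, I perform one splice that strictly decreases $\Phi$, at the cost of a vanishing amount $\eta$ of flow per step.

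For the splice, pick such a pair $C_1, C_2$ and select two crossings that are \emph{consecutive along $C_1$}, i.e., one of the two arcs of $C_1$ joining them, call it $A_1$, has no other crossing with $C_2$ in its interior. Let $v, w$ be the (contracted) vertices of these two crossings, $A_1'$ the other $C_1$-arc, and $A_2, A_2'$ the two arcs of $C_2$. The four arcs admit exactly two re-gluings into pairs of closed walks, namely $(A_1 \cup A_2,\, A_1' \cup A_2')$ and $(A_1 \cup A_2',\, A_1' \cup A_2)$; a case analysis on the location of the demand edges $d_1 \in C_1$ and $d_2 \in C_2$ (four subcases according to which of the four arcs each demand edge lies in) shows that exactly one of the two re-gluings produces closed walks each containing exactly one demand edge. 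I choose that option, and if a resulting walk is not a simple cycle, decompose it edge-disjointly into simple cycles and retain the unique $D$-cycle among them, discarding the supply-only components. Writing $C_1', C_2'$ for the two resulting $D$-cycles, the flow update transfers $\delta := \min(f_{C_1}, f_{C_2}) - \eta$ units from $\{C_1, C_2\}$ to $\{C_1', C_2'\}$; capacities are preserved because the edge usage of $C_1' \cup C_2'$ is dominated by that of $C_1 \cup C_2$.

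The geometric heart of the argument is twofold. First, $cr(C_1', C_2') \leq cr(C_1, C_2) - 2$: the crossings at $v$ and $w$ are destroyed by the re-gluing (the four edges at each of these vertices are now split between $C_1'$ and $C_2'$ in a non-alternating way), and no new crossings of $C_1'$ with $C_2'$ can arise because $A_1$ has no interior $C_2$-crossings. Second, for every third cycle $C_3 \in \mathcal{C}(f)$ the identity $cr(C_1', C_3) + cr(C_2', C_3) = cr(C_1, C_3) + cr(C_2, C_3)$ holds, by invariance of the edge multiset and of the local cyclic orders at all vertices except $v, w$, together with a short local check at $v$ and $w$ showing that alternations with $C_3$-edges are redistributed between $C_1'$ and $C_2'$ in a sum-preserving way. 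Together these force $\Phi$ to drop by at least $2$ at each iteration, so the algorithm terminates in polynomially many iterations, and choosing $\eta := \epsilon |f|/\Phi(f)$ at the outset keeps the cumulative flow loss below $\epsilon |f|$.

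The main obstacle I expect is ensuring $\Phi$ decreases robustly even when one of $C_1, C_2$ retains residual flow after the swap and therefore coexists in $\mathcal{C}(f)$ with the new cycles $C_1', C_2'$; this introduces cross-terms like $cr(C_1', C_2)$ in $\Phi$ that were not counted before. The remedy is to set $\delta$ so that at least one of $C_1, C_2$ leaves the support at each step (with $\eta$ absorbing the rounding), combined with a local argument bounding the new cross-terms by crossings already present. A second, more topology-specific difficulty is that on a surface of positive genus two simple cycles can be forced by their free homotopy classes to cross exactly once, so no homotopy can make them disjoint; this is consistent with the lemma's target (``at most one crossing'') and is precisely why the consecutive-crossings selection of $v, w$ is invoked only when the pair already has at least two crossings, which is exactly the termination condition.
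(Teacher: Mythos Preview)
Your potential $\Phi(f)=\sum_{\{C,C'\}\subseteq\Cscr(f)}cr(C,C')$ is not monotone under your splice operation, and the problem is exactly the one you flag as the ``main obstacle'' but do not resolve. Suppose $f_{C_1}<f_{C_2}$, so after transferring $\delta$ units $C_1$ leaves the support while $C_2$ stays, and $C_1',C_2'$ enter. Write $S=\Cscr(f)\setminus\{C_1,C_2\}$. Then, using your own claimed equality $cr(C,C_1')+cr(C,C_2')=cr(C,C_1)+cr(C,C_2)$ for every $C$ (in particular for $C=C_2$), a direct calculation gives
\[
\Phi_{\text{new}}-\Phi_{\text{old}}
=\sum_{C\in S}\bigl(cr(C,C_1')+cr(C,C_2')-cr(C,C_1)\bigr)
+cr(C_2,C_1')+cr(C_2,C_2')+cr(C_1',C_2')-cr(C_1,C_2)
=\sum_{C\in S}cr(C,C_2)+cr(C_1',C_2')\ \ge\ 0,
\]
so the potential \emph{increases} whenever $C_2$ crosses anything in $S$. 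The ``local argument bounding the new cross-terms'' you allude to cannot save this, because the extra $\sum_{C\in S}cr(C,C_2)$ comes from the fact that $C_2$ is now counted against three partners $C_1',C_2',C_3$ instead of two $C_1,C_3$. A second, independent gap is your equality claim itself: once you discard the supply-only subcycles (which you do), the edge multiset is not invariant, and the paper's own example (two $D$-cycles $C_1,C_2$ crossing twice, with a third cycle $C_3$ crossing neither, such that uncrossing produces a new crossing with $C_3$) shows that the sum $cr(C_3,C_1')+cr(C_3,C_2')$ can strictly exceed $cr(C_3,C_1)+cr(C_3,C_2)$.

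The paper's remedy addresses both issues at once and is structurally different from yours. It first \emph{discretizes}: round every $f_C$ down to a multiple of $\frac{\epsilon|f|}{|E||D|}$ (this is where the $\epsilon$-loss is spent, once and for all), and represent the resulting flow as a \emph{multiset} $\mathcal{S}$ of at most $|E||D|/\epsilon$ unweighted $D$-cycles. Every uncrossing step replaces one copy of $C_1$ and one copy of $C_2$ by one copy each of $C_1',C_2'$, so $|\mathcal{S}|$ is an invariant and there is no support growth. Termination then uses a lexicographic pair of potentials $(\Phi_1,\Phi_2)$ with $\Phi_1=\sum_{C\in\mathcal{S}}|C|$ and $\Phi_2=\sum_{C,C'\in\mathcal{S}}cr(C,C')$ (with multiplicity): either edges are discarded and $\Phi_1$ drops, or $\Phi_1$ is unchanged and then one proves only the \emph{inequality} $cr(C,C_1')+cr(C,C_2')\le cr(C,C_1)+cr(C,C_2)$ for all $C$, which together with $cr(C_1',C_2')<cr(C_1,C_2)$ makes $\Phi_2$ drop. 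The discretization is essential; without it you cannot bound the number of iterations.
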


\begin{proof} 
First we discretize the multiflow, losing an $\epsilon$ fraction in value; 
then we iteratively modify it, without changing its value, 
to reduce the number of crossings or the total amount of flow on all edges; finally, 
we analyze the process and argue that the number of iterations is polynomially bounded.\\

{\bf Discretization.} The statement is trivial if $|f|=0$. Otherwise,
before uncrossing, we round down the flow on every $D$-cycle to integer multiples of $\frac{\epsilon|f|}{|E||D|}$. 
That is, we define $f'_C:= \frac{\epsilon|f|}{|E||D|}\left\lfloor \frac{|E||D|f_C}{\epsilon|f|} \right\rfloor$ 
for all $C \in \mathcal{C}$. Note that $f'$ is a multiflow.
We claim that $|f'| \geq (1 - \epsilon) |f|$. 
Indeed,
$$|f'| = \sum_{C\in\Cscr}f'_C \ge \sum_{C\in\Cscr(f)} \left(f_C-\frac{\epsilon|f|}{|E||D|}\right) =|f|-|\Cscr(f)|\frac{\epsilon|f|}{|E||D|} \ge |f|-\epsilon|f|.$$

The discretized multiflow $f'$ can be represented by a multi-set $\mathcal{S}$ of unweighted $D$-cycles: 
if $f'_C=k\frac{\epsilon|f|}{|E||D|}$, then $k$ identical copies of cycle $C$ are added to $\mathcal{S}$. 
The number of cycles in $\mathcal{S}$ (counting multiplicities) is at most $\frac{|E||D|}{\epsilon}$ because
$|\mathcal{S}|= \sum_{C\in\mathcal{C}}f'_C\frac{|E||D|}{\epsilon|f|} 
\le \sum_{C\in\mathcal{C}}f_C\frac{|E||D|}{\epsilon|f|} = \frac{|E||D|}{\epsilon}$. \\

{\bf Uncrossing.} To construct $\overline{f}$, we perform a sequence of transformations of the multiflow. 
We will modify $\mathcal{S}$ while maintaining the following invariants:
\begin{enumerate}
\item[(a)]\label{a} The number of elements of $\mathcal{S}$ (counting multiplicities) remains constant.
\item[(b)] \label{b} For every $e\in D\,\dot\cup\,E$, the number of elements of $\mathcal{S}$ (counting multiplicities) 
that contain $e$ never increases.
\end{enumerate}
Thanks to (b), at any stage, $\overline{f}$ is a  multiflow, where $\overline{f}$ is defined by 
$\overline{f}_{\!C} =k\frac{\epsilon|f|}{|E||D|}$ for $C\in\mathcal{C}$, 
where $k$ is the multiplicity of $C$ in $\mathcal{S}$.
Initially $\overline{f}=f'$. Thanks to (a), the value of the multiflow is preserved.
In the following we work only with $\mathcal{S}$.

While there exist two cycles $C_1$ and $C_2$ in 
$\mathcal{S}$
that cross at least twice, do the following \emph{uncrossing} operation (on one copy of $C_1$ and one copy of $C_2$).
Let $d_1$ be the edge in $C_1\cap D$, and let $d_2$ be the edge in $C_2\cap D$. 
Let $P$ and $Q$ be two of the paths where $C_1$ and $C_2$ cross (cf. Definition \ref{def:crossing}),
such that $Q$ contains only edges of $E$. 
Orient $C_1$ so that in that orientation, when traversing the entirety of $P$ and then walking towards $Q$, edge $d_1$ is traversed before reaching $Q$. 
Let $\vec{C_1}$ denote the resulting directed cycle. 
Let $a$ be the first vertex on $P$ in the orientation of $\vec{C_1}$, 
and let $b$ be an arbitrary vertex on $Q$. 
Vertices $a$ and $b$ partition $\vec{C}_1$ into a path $C_1^+$ from $a$ to $b$ that contains $d_1$ and a path $C_1^-$ from $b$ to $a$ that does not contain $d_1$.

\bigskip\noindent {\bf Case 1:} $P$ contains an edge of $D$. 
Then this edge is $d_1=d_2$.  
We orient $C_2$ so that the orientation on $P$ agrees with the orientation of $\vec{C_1}$ on $P$. 
Let $\vec{C_2}$ denote the resulting directed cycle. 
Then the vertices $a$ and $b$ also partition $\vec{C}_2$ into a path $C_2^+$ from $a$ to $b$ that contains $d_2$ and a path $C_2^-$ from $b$ to $a$ that does not contain $d_2$.

\noindent {\bf Case 2:} $P$ contains edges of $E$ only.   
Then we orient $C_2$ so that in that orientation, when traversing the entirety of $P$ and then walking towards $Q$, edge $d_2$ is traversed before reaching $Q$. Let $\vec{C_2}$ denote the  directed cycle.
With that orientation, vertices $a$ and $b$ also partition $\vec{C}_2$ into a path $C_2^+$ from $a$ to $b$ that contains $d_2$ and a path $C_2^-$ from $b$ to $a$ that does not contain $d_2$. 

\bigskip
To obtain $C'_1$, we concatenate $C_1^+$ and $C_2^-$, remove any cycle that does not contain $d_1$, and remove the orientation.
To obtain $C'_2$, we concatenate $C_2^+$ and $C_1^-$, remove any cycle that does not contain $d_2$, and remove the orientation.
Note that $C'_1$ and $C'_2$ are $D$-cycles because
each of $C_1^+$ and $C_2^+$ contains exactly one demand edge,
and $C_1^-$ and $C_2^-$ contain no demand edge.

\begin{figure}[htbp]
  \begin{center}
  \begin{tikzpicture}[xscale=1.3,yscale=1.3]
  \tikzset{node/.style={
  draw=black,thick,circle,inner sep=0em,minimum size=6pt
  }}
  \tikzset{c1o/.style={
  ultra thick, densely dotted, red, ->
  }}
  \tikzset{c2o/.style={
  very thick, blue, ->
  }}
  \tikzset{c1/.style={
  ultra thick, densely dotted, red, 
  }}
  \tikzset{c2/.style={
  very thick, blue, 
  }}
  \tikzset{shade/.style={
  line width=3mm, yellow
  }}

 \begin{scope}[xshift=0cm]  
  \node at (0.5,2.2) {\small (a)};
  \draw[shade] (0.9,2) to (3.1,2);
  \draw[shade] (1.9,0) to (2.1,0);
  \draw[shade] (3,-0.1) to (3,1.1);
  \node at (2,1.75) {\small $P$};
  \node at (2.8,0.5) {\small $Q$};
  \node at (0.8,2) {\small $a$};
  \node at (3,1.2) {\small $b$};

  \node[node] (v1) at (1,0) {};
  \node[node] (v2) at (1,2) {};
  \node[node] (v3) at (2,2) {};
  \node[node] (v4) at (3,2) {};
  \node[node] (v5) at (4,2) {};
  \node[node] (v6) at (2,0) {};
  \node[node] (v7) at (2,1) {};
  \node[node] (v8) at (4,1) {};
  \node[node] (v9) at (3,1) {};
  \node[node] (v10) at (3,0) {};
  \node[node] (v11) at (4,0) {};
  \draw[c1o] (v1) to (v2);
  \draw[c1o] (v2.15) to (v3.165);
  \draw[c1o] (v3.15) to node[above] {\small\textcolor{red}{$d_1$}\textcolor{black}{$=$}\textcolor{blue}{$d_2$}} (v4.165);
  \draw[c1o] (v4) to[out=315,in=135] (v11);
  \draw[c1o] (v11) to (v10);
  \draw[c1o] (v10.105) to (v9.255);
  \draw[c1o] (v9) to (v7);
  \draw[c1o] (v7) to (v6);
  \draw[c1o] (v6) to (v1);
  \draw[c2o] (v2.345) to (v3.195);
  \draw[c2o] (v3.345) to (v4.195);
  \draw[c2o] (v4) to (v5);
  \draw[c2o] (v5) to (v8);
  \draw[c2o] (v8) to (v9);
  \draw[c2o] (v9.285) to (v10.75);
  \draw[c2o] (v10) to (v6);
  \draw[c2o] (v6) to (v2);
\end{scope}

\begin{scope}[xshift=0cm,yshift=-3cm]  
  \node at (0.5,2.2) {\small (b)};
  \draw[shade] (1.9,0) to (2.1,0);
 
  \node[node] (v1) at (1,0) {};
  \node[node] (v2) at (1,2) {};
  \node[node] (v3) at (2,2) {};
  \node[node] (v4) at (3,2) {};
  \node[node] (v5) at (4,2) {};
  \node[node] (v6) at (2,0) {};
  \node[node] (v7) at (2,1) {};
  \node[node] (v8) at (4,1) {};
  \node[node] (v9) at (3,1) {};
  \node[node] (v10) at (3,0) {};
  \node[node] (v11) at (4,0) {};

  \draw[c1] (v2.15) to (v3.165);
  \draw[c1] (v3.15) to node[above] {\small\textcolor{red}{$d_1$}\textcolor{black}{$=$}\textcolor{blue}{$d_2$}} (v4.165);
  \draw[c1] (v4) to[out=315,in=135] (v11);
  \draw[c1] (v11) to (v10);
  \draw[c1] (v10) to (v6);
  \draw[c1] (v6) to (v2);
  \draw[c2] (v1) to (v2);
  \draw[c2] (v2.345) to (v3.195);
  \draw[c2] (v3.345) to (v4.195);
  \draw[c2] (v4) to (v5);
  \draw[c2] (v5) to (v8);
  \draw[c2] (v8) to (v9);
  \draw[c2] (v9) to (v7);
  \draw[c2] (v7) to (v6);
  \draw[c2] (v6) to (v1);
\end{scope}

\begin{scope}[xshift=7cm]  
  \node at (-0.5,2.2) {\small (c)};
  \draw[shade] (1.9,2) to (2.1,2);
  \draw[shade] (2.9,1.4) to (3.1,1.4);
  \draw[shade] (2.9,0.6) to (3.1,0.6);
  \draw[shade] (-0.07,-0.07) to (1.07,1.07);
  \node at (3.2,1.5) {\small $Q$};
  \node at (0.35,0.65) {\small $P$};
  \node at (3.2,1.3) {\small $b$};
  \node at (1,1.2) {\small $a$};

  \node[node] (v1) at (0,2) {};
  \node[node] (v2) at (2,2) {};
  \node[node] (v3) at (3,2) {};
  \node[node] (v4) at (1,1) {};
  \node[node] (v5) at (2,1) {};
  \node[node] (v6) at (3,0) {};
  \node[node] (v7) at (0,0) {};
  \node[node] (v8) at (1,0) {};
  \node[node] (v9) at (2,0) {};
  \node[node] (v10) at (3,1.4) {};
  \node[node] (v11) at (2.6,1) {};
  \node[node] (v13) at (3,0.6) {};

  \draw[c1o] (v1) to (v2);
  \draw[c1o] (v2) to (v10);
  \draw[c1o] (v10) to (v13);
  \draw[c1o] (v13) to (v9);
  \draw[c1o] (v9) to (v5);
  \draw[c1o] (v5) to (v8);
  \draw[c1o] (v8) to (v4);
  \draw[c1o] (v4.240) to (v7.30);
  \draw[c1o] (v7) to node[above right] {\small\textcolor{red}{$d_1$}} (v1);
  \draw[c2o] (v2) to (v3);
  \draw[c2o] (v3) to (v10);
  \draw[c2o] (v10) to (v11);
  \draw[c2o] (v11) to (v13);
  \draw[c2o] (v13) to (v6);
  \draw[c2o] (v6) to (v9);
  \draw[c2o] (v9) to (v8);
  \draw[c2o] (v8) to (v7);
  \draw[c2o] (v7.60) to (v4.210);
  \draw[c2o] (v4) to (v5);
  \draw[c2o] (v5) to node[left] {\small\textcolor{blue}{$d_2$}} (v2);
\end{scope}
 
\begin{scope}[xshift=7cm,yshift=-3cm] 
  \node at (-0.5,2.2) {\small (d)};
  \draw[shade] (1.9,2) to (2.1,2);
  \draw[shade] (2.9,0.6) to (3.1,0.6);
  
  \node[node] (v1) at (0,2) {};
  \node[node] (v2) at (2,2) {};
  \node[node] (v3) at (3,2) {};
  \node[node] (v4) at (1,1) {};
  \node[node] (v5) at (2,1) {};
  \node[node] (v6) at (3,0) {};
  \node[node] (v7) at (0,0) {};
  \node[node] (v8) at (1,0) {};
  \node[node] (v9) at (2,0) {};
  \node[node] (v10) at (3,1.4) {};
  \node[node] (v11) at (2.6,1) {};
  \node[node] (v13) at (3,0.6) {};

  \draw[c1] (v1) to (v2);
  \draw[c1] (v2) to (v10);
  \draw[c1] (v10) to (v11);
  \draw[c1] (v11) to (v13);
  \draw[c1] (v13) to (v6);
  \draw[c1] (v6) to (v9);
  \draw[c1] (v9) to (v8);
  \draw[c1] (v8) to (v7);
  \draw[c1] (v7) to node[above right] {\small\textcolor{red}{$d_1$}} (v1);
  \draw[c2] (v2) to (v3);
  \draw[c2] (v3) to (v10);
  \draw[c2] (v10) to (v13);
  \draw[c2] (v13) to (v9);
  \draw[c2] (v9) to (v5);
  \draw[c2] (v5) to node[left] {\small\textcolor{blue}{$d_2$}} (v2);
\end{scope}
\end{tikzpicture}
\end{center}
\caption{Uncrossing the pairs of $D$-cycles from Figure \ref{fig:crossing}. 
(a) and (b) show an example for Case 1, (c) and (d) an example for Case 2.
The initial situation ($C_1$ red, dotted, and $C_2$ blue, solid) 
and a possible choice of $P,Q,a,b$ and the resulting orientation is shown in (a) and (c).
As the result of the uncrossing operation, shown in (b) and (d), we have the new $D$-cycles
$C'_1$ (red, dotted) and $C'_2$ (blue, solid) with fewer crossings 
among each other.\label{fig:uncrossing}}
\end{figure}
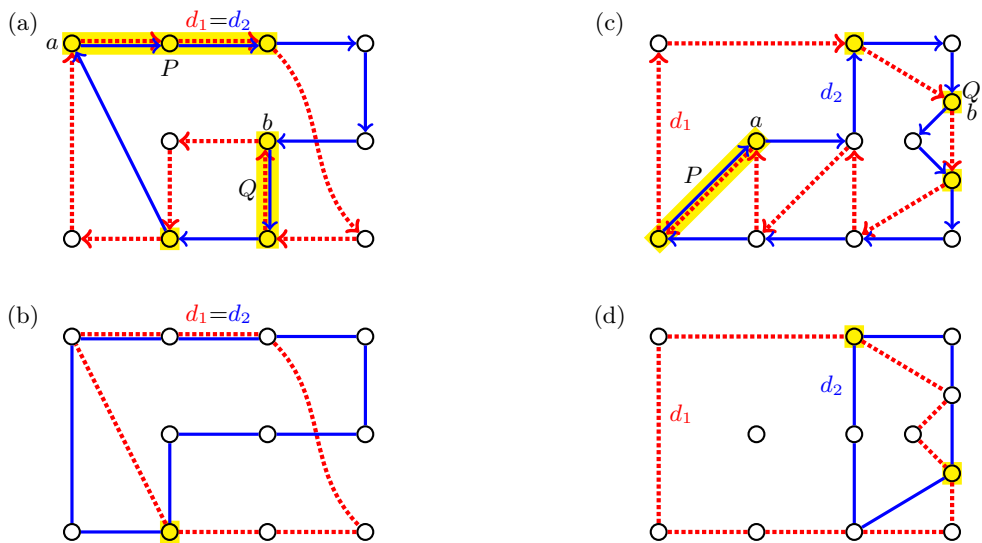

\bigskip
See Figure \ref{fig:uncrossing} for two examples, one for each case.\\

{\bf Analysis.} 
From the construction it follows that $C'_1$ and $C'_2$ are $D$-cycles and
$C'_1 \,\dot\cup\, C'_2\subseteq C_1 \,\dot\cup\, C_2$. 
Hence removing one copy of $C_1$ and $C_2$ from $\mathcal{S}$ and
adding one copy of $C'_1$ and $C'_2$ to $\mathcal {S}$ maintains the invariants (a) and (b).

To show that the after a polynomial number of uncrossing operations any pair of cycles in 
$\mathcal{S}$ crosses at most once, we consider
the total number of edges $\Phi_1=\sum_{C\in\mathcal{S}}|C|$ (counting multiplicities)
and the total number of crossings $\Phi_2=\sum_{C,C'\in\mathcal{S}} cr(C,C')$ 
(where we again count multiplicities). 
Note that $|\mathcal{S}|$ remains constant by invariant (a), 
and $\Phi_1$ never increases by invariant (b).
Moreover $0\le \Phi_1 \le |V| |\mathcal{S}|$ and
$0\le|\Phi_2|\le |V||\mathcal{S}|^2$.
\begin{claim}
Each uncrossing operation either decreases $\Phi_1$ or leaves $\Phi_1$ unchanged and decreases $\Phi_2$.
\label{claim:polynumberofuncrossings}
\end{claim}

To prove 
Claim \ref{claim:polynumberofuncrossings}, consider an uncrossing operation
that replaces $C_1$ and $C_2$ by $C'_1$ and $C'_2$, and suppose that $\Phi_1$ remains
the same, so $C'_1$ consists of $C_1^+$ plus $C_2^-$, and $C'_2$ consists of $C_2^+$ plus $C_1^-$.
We first observe that $cr(C'_1,C'_2) < cr(C_1,C_2)$. 
Indeed, the crossings at $P$ and at $Q$ go away, and no new crossing arises.

Finally we need to show that for any cycle $C\in\mathcal{C}$,
\begin{equation}
    \label{eq:fewercrossings}
    cr(C,C'_1)+cr(C,C'_2) \le cr(C,C_1) + cr(C,C_2). 
\end{equation}

To show \eqref{eq:fewercrossings},
consider a crossing of $C$ and $C'\in\{C'_1,C'_2\}$ at a path $R$.
Let $e'_1=\{v_0,v_1\},\ldots,e'_k=\{v_{k-1},v_k\}$ be the edges of $R$ ($k\ge 0$), 
and let $e_0,e_{k+1},e'_0,e'_{k+1}$ be edges such that
$e_0,e'_1,\ldots,e'_k,e_{k+1}$ are subsequent on $C$ and
$e'_0,e'_1,\ldots,e'_k,e'_{k+1}$ are subsequent on $C'$.
After contracting $R$, the incident edges $e_0,e_0',e_{k+1},e'_{k+1}$ are embedded in this cyclic order.
(Note that $e_0=e_{k+1}$ or $e'_0=e'_{k+1}$ is possible if $k\ge 1$, then contracting $R$ yields a loop.) 
See Figure \ref{fig:prooffewercrossings} (a).

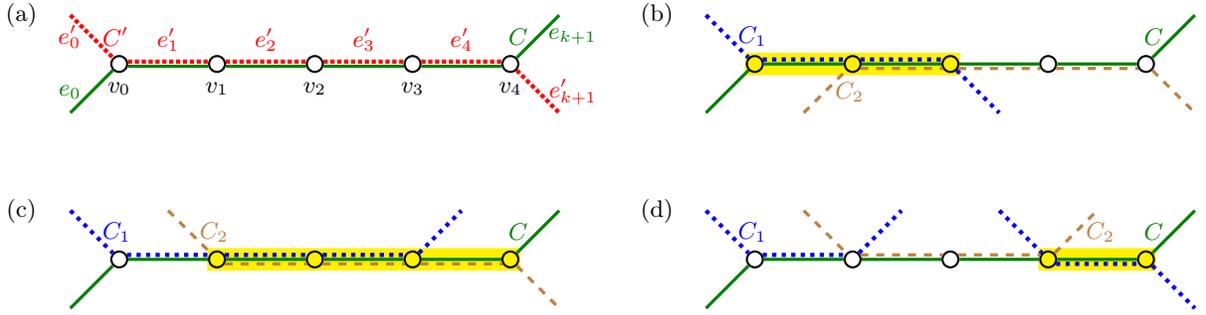
\begin{figure}[htbp]
  \begin{center}
  \begin{tikzpicture}[scale=1.3]
  \tikzset{node/.style={
  draw=black,thick,circle,inner sep=0em,minimum size=6pt
  }}
  \tikzset{cnew/.style={
  ultra thick, densely dotted, red
  }}
  \tikzset{cold/.style={
  very thick, darkgreen
  }}
  \tikzset{c1/.style={
  ultra thick, dotted, blue
  }}
  \tikzset{c2/.style={
  very thick, dashed, brown
  }}
  \tikzset{shade/.style={
  line width=3mm, yellow
  }}

\begin{scope}[xshift=0cm]  
\node at (-1,1.5) {\small (a)};
\node at (0,0.75) {\small $v_0$};
\node at (1,0.75) {\small $v_1$};
\node at (2,0.75) {\small $v_2$};
\node at (3,0.75) {\small $v_3$};
\node at (4,0.75) {\small $v_4$};
  \node[node] (v0) at (0,1) {};
  \node[node] (v1) at (1,1) {};
  \node[node] (v2) at (2,1) {};
  \node[node] (v3) at (3,1) {};
  \node[node] (v4) at (4,1) {};

  \draw[cnew] (-0.5,1.5) to node[left]{\small $e'_0$} node[right]{\small $C'$} (v0);
  \draw[cnew] (v0.15) to node[above]{\small $e'_1$} (v1.165);
  \draw[cnew] (v1.15) to node[above]{\small $e'_2$} (v2.165);
  \draw[cnew] (v2.15) to node[above]{\small $e'_3$} (v3.165);
  \draw[cnew] (v3.15) to node[above]{\small $e'_4$} (v4.165);
  \draw[cnew] (v4) to node[right]{\small $e'_{k+1}$} (4.5,0.5);

  \draw[cold] (-0.5,0.5) to node[left]{\small $e_0$} (v0);
  \draw[cold] (v0.345) to (v1.195);
  \draw[cold] (v1.345) to (v2.195);
  \draw[cold] (v2.345) to (v3.195);
  \draw[cold] (v3.345) to (v4.195);
  \draw[cold] (v4) to node[right]{\small $e_{k+1}$} node[left]{\small $C$} (4.5,1.5);
\end{scope}

\begin{scope}[xshift=6.5cm]  
\node at (-1,1.5) {\small (b)};
  \draw[shade] (-0.1,1) to (2.1,1);
  \node[node] (v0) at (0,1) {};
  \node[node] (v1) at (1,1) {};
  \node[node] (v2) at (2,1) {};
  \node[node] (v3) at (3,1) {};
  \node[node] (v4) at (4,1) {};

  \draw[c1] (-0.5,1.5) to node[right]{\small $C_1$} (v0);
  \draw[c1] (v0.30) to (v1.150);
  \draw[c1] (v1.30) to (v2.150);
  \draw[c1] (v2) to (2.5,0.5);
  \draw[c2] (0.5,0.5) to node[right]{\small $C_2$} (v1);
  \draw[c2] (v1.330) to (v2.210);
  \draw[c2] (v2.330) to (v3.210);
  \draw[c2] (v3.330) to (v4.210);
  \draw[c2] (v4) to (4.5,0.5);

  \draw[cold] (-0.5,0.5) to (v0);
  \draw[cold] (v0) to (v1);
  \draw[cold] (v1) to (v2);
  \draw[cold] (v2) to (v3);
  \draw[cold] (v3) to (v4);
  \draw[cold] (v4) to node[left]{\small $C$} (4.5,1.5);
\end{scope}

\begin{scope}[yshift=-2cm]  
\node at (-1,1.5) {\small (c)};
  \draw[shade] (0.9,1) to (4.1,1);
  \node[node] (v0) at (0,1) {};
  \node[node] (v1) at (1,1) {};
  \node[node] (v2) at (2,1) {};
  \node[node] (v3) at (3,1) {};
  \node[node] (v4) at (4,1) {};

  \draw[c1] (-0.5,1.5) to node[right]{\small $C_1$} (v0);
  \draw[c1] (v0.30) to (v1.150);
  \draw[c1] (v1.30) to (v2.150);
  \draw[c1] (v2.30) to (v3.150);
  \draw[c1] (v3) to (3.5,1.5);
  \draw[c2] (0.5,1.5) to node[right]{\small $C_2$} (v1);
  \draw[c2] (v1.330) to (v2.210);
  \draw[c2] (v2.330) to (v3.210);
  \draw[c2] (v3.330) to (v4.210);
  \draw[c2] (v4) to (4.5,0.5);

  \draw[cold] (-0.5,0.5) to (v0);
  \draw[cold] (v0) to (v1);
  \draw[cold] (v1) to (v2);
  \draw[cold] (v2) to (v3);
  \draw[cold] (v3) to (v4);
  \draw[cold] (v4) to node[left]{\small $C$} (4.5,1.5);
\end{scope}

\begin{scope}[yshift=-2cm,xshift=6.5cm]  
\node at (-1,1.5) {\small (d)};
  \draw[shade] (2.9,1) to (4.1,1);
  \node[node] (v0) at (0,1) {};
  \node[node] (v1) at (1,1) {};
  \node[node] (v2) at (2,1) {};
  \node[node] (v3) at (3,1) {};
  \node[node] (v4) at (4,1) {};

  \draw[c1] (-0.5,1.5) to node[right]{\small $C_1$} (v0);
  \draw[c1] (v0.30) to (v1.150);
  \draw[c1] (v1) to (1.5,1.5);
  \draw[c1] (2.5,1.5) to (v3);
  \draw[c1] (v3.330) to (v4.210);
  \draw[c1] (v4) to (4.5,0.5);
  \draw[c2] (0.5,1.5) to (v1);
  \draw[c2] (v1.30) to (v2.150);
  \draw[c2] (v2.30) to (v3.150);
  \draw[c2] (v3) to node[right]{\small $C_2$} (3.5,1.5);

  \draw[cold] (-0.5,0.5) to (v0);
  \draw[cold] (v0) to (v1);
  \draw[cold] (v1) to (v2);
  \draw[cold] (v2) to (v3);
  \draw[cold] (v3) to (v4);
  \draw[cold] (v4) to  node[left]{\small $C$} (4.5,1.5);
\end{scope}

\end{tikzpicture}
\end{center}
\caption{For each crossing of $C$ with a new cycle $C'\in\{C'_1,C'_2\}$ at a path $R$
there is a crossing of $C$ with one of the old cycles $C_1$ and $C_2$ at a subpath of $R$. 
This crossing is marked with yellow shade in the three examples.\label{fig:prooffewercrossings}}
\end{figure}

Now $e'_0$ belongs to $C_1$ or $C_2$, say $C_1$. 
If $R$ contains neither $a$ nor $b$, then $e'_0,\ldots,e'_{k+1}$ all belong to $C_1$,
and $C_1$ crosses $C$ at $R$.
If $R$ contains either $a$ or $b$, say at $v_i$, then $e'_0,\ldots,e'_i$ belong to $C_1$
and $e'_{i+1},\ldots,e'_{k+1}$ belong to $C_2$. 
Moreover $C_1$ and $C_2$ cross at a path containing $v_i$, so either
$C_1$ crosses $C$ at a subpath of $R$ (Figure \ref{fig:prooffewercrossings}(b)) or 
$C_2$ crosses $C$ at a subpath of $R$ (Figure \ref{fig:prooffewercrossings}(c)).
Finally, if $R$ contains $a$ and $b$, say at $v_i$ and $v_j$ for $0\le i<j\le k$,
then $e'_0,\ldots,e'_i$ and $e'_{j+1},\ldots,e'_{k+1}$ belong to $C_1$
and $e'_{i+1},\ldots,e'_{j}$ belong to $C_2$
(Figure \ref{fig:prooffewercrossings}(d)).
Again, $C_1$ or $C_2$ crosses $C$ at a subpath of $R$. This concludes the proof of Claim \ref{claim:polynumberofuncrossings}.

We can now conclude the proof of Lemma \ref{lem:uncrossing} because
$\Phi_1$ decreases at most $|V| |\mathcal{S}|$ times,
and while $\Phi_1$ is constant, $\Phi_2$ decreases at most $|V||\mathcal{S}|^2$ times, 
so the total number of uncrossing operations is at most $|V|^2|\mathcal{S}|^3\le \frac{|V|^2|E|^3|D|^3}{\epsilon^3}$.
\end{proof}

\section{Separating cycles: routing an integral flow (Step~\ref{step3})} 
\label{sec:separating}

Let $\overline{f}$ result from Lemma \ref{lem:uncrossing}, and let
$\mathcal{C}_\text{sep}$ denote the set of separating cycles in the support of $\overline{f}$.
We now consider the case when the separating cycles contribute at least half to the total flow value,
i.e., $|\overline{f}(\mathcal{C}_\text{sep})|\ge \frac{1}{2}|\overline{f}|$.

This branch of our algorithm consists of two steps:
\begin{enumerate}
    \item Given $\overline{f}(\mathcal{C}_\text{sep})$, construct a half-integral multiflow $f^{\text{half}}$ of value at least $|\overline{f}|/2$;
    \item Given $f^{\text{half}}$, construct an integral multiflow of value at least $|f^{\text{half}}|/\Theta(\sqrt{g})$.
\end{enumerate}

\subsection{Obtaining a half-integral multiflow}

To obtain a half-integral multiflow, we follow the technique used by 
\cite{garg2020} for the case where $G+H$ is planar. 
By the Jordan curve theorem, any cycle in a planar graph is separating. 
As for the plane, the following property is easy to check for higher genus surfaces. 

\begin{proposition}
If $C$ and $C'$ are two cycles embedded on a surface, and $C'$ is a separating cycle, then $C$ and $C'$ must cross an even number of times. 
\label{prop:crossing_separating}
\end{proposition}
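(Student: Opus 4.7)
The plan is to use the topological interpretation of ``separating cycle'' together with a parity argument along $C$. Since $C'$ is a separating cycle, its edge set is a dual cut, i.e.\ the set of faces of $G+H$ partitions as $F_1\,\dot\cup\, F_2$ (both nonempty) so that $C'$ is precisely the set of edges incident to one face of each side. Topologically, this means that $\mathbb{S}_g$ minus the embedded image of $C'$ has two connected components $A$ and $B$, consisting of the open faces in $F_1$ (resp.\ $F_2$) glued along the vertices and edges of $G+H$ not belonging to $C'$. Define a locally constant $\mathbb{Z}/2$-valued ``side'' function $\phi$ on $\mathbb{S}_g\setminus C'$ that takes value $0$ on $A$ and $1$ on $B$.

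Next I would walk around $C$ and track $\phi$. Any edge of $C$ that does not lie on $C'$ has its interior contained in a single component, so $\phi$ is constant on it. It remains to analyse what happens at a maximal shared subpath $R$ of $C$ and $C'$: after contracting $R$ to a vertex $v$, the vertex is incident to two edges $e,e'$ of $C$ not on $C'$, and the two neighboring edges $f,f'$ of $C'$ at $v$ separate the edges around $v$ into two arcs. Using the face-partition description, the faces in one arc belong to $F_1$ and those in the other to $F_2$; hence $\phi$ takes opposite values just ``before'' and just ``after'' $R$ along $C$ if and only if $e$ and $e'$ lie in different arcs, i.e.\ if and only if in the cyclic order the four edges $e,f,e',f'$ alternate. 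By Definition~\ref{def:crossing}, this alternation is exactly what it means for $C$ and $C'$ to cross at $R$.

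Therefore, each crossing of $C$ with $C'$ contributes a change of $\phi$ along $C$, while each non-crossing shared subpath contributes none. Since $C$ is a closed walk, $\phi$ must return to its starting value after one full traversal, so the total number of changes — equivalently, the number of crossings $\mathrm{cr}(C,C')$ — is even.

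The only delicate step is the local analysis at a shared subpath $R$: one must check that, after contracting $R$, the two arcs separated by the two $C'$-edges correspond precisely to the two sides $A$ and $B$ near $v$. This follows from the face-bipartition characterisation of $C'$ as a dual cut, since every face incident to $v$ belongs to $F_1$ or $F_2$ and consecutive faces around $v$ agree unless separated by an edge of $C'$. Once this local claim is verified, the global parity statement is immediate from closedness of $C$.
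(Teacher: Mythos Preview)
Your proof is correct and follows essentially the same approach as the paper's: walk along $C$, track which side of the separating cycle $C'$ you are on, and observe that the side changes precisely at crossings, so closedness of $C$ forces the number of crossings to be even. The paper's proof is a two-line sketch of this same parity argument; your version simply fills in the local analysis at shared subpaths to verify that side changes match the paper's Definition~\ref{def:crossing} of a crossing.
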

\begin{proof}
$C'$ is separating the surface into two sides. While walking along $C$ from a vertex $v$, we go from one side to the other each time we cross $C'$. When we return at $v$, we are on the same side where we started so the number of crossing is even.
\end{proof}

Since any pair of cycles in the support of $\overline{f}$ crosses at most once, 
$\mathcal{C}_\text{sep}$ must be a non-crossing family by Proposition \ref{prop:crossing_separating}. 
In particular, we can show that $\mathcal{C}_\text{sep}$ have a laminar structure. 

We say that a family of subsets of the dual vertex set $V^*$ is laminar if any two members either are disjoint or one contains the other. 
Let us take any face of $G+H$ that we call $\infty$. 
For any cycle $C\in \mathcal{C}_\text{sep}$ we define $\text{in}(C)$ and $\text{out}(C)$ to be the two connected components of $(G+H)^*\setminus C^*$, such that $\infty\in\text{out}(C)$. 
We claim that the family $\mathcal{L}:= \{\text{in}(C): C \in \mathcal{C}_\text{sep}\}$ is laminar.

Indeed, take any two cycles  $C$ and $C'$ in $\mathcal{C}_\text{sep}$. 
Since they do not cross, either (i) $(C'\setminus C)^*\subseteq \text{in}(C)$ or,  (ii) $(C'\setminus C)^*\subseteq \text{out}(C)$. 
In case (i) we must have $\text{in}(C')\subseteq\text{in}(C)$. 
In case (ii), we have either (ii.a) $\text{in}(C)\subseteq\text{in}(C')$ or (ii.b) $\text{in}(C)\cap\text{in}(C')=\emptyset$, hence laminarity. 

Using the terminology in \cite{garg2020}, we say that a multiflow $f$ is \emph{laminar} 
if $\{C^* : C\in \mathcal{C},\, f_C>0\}=\{\delta(U) : U\in \mathcal{L}\}$ where $\mathcal{L}$ is a laminar family (of subsets of $V^*$). 
Thus, $\overline{f}(\mathcal{C}_\text{sep})$ is laminar and we can apply the following result to get $f^\text{half}$.

\begin{theorem}(\cite{garg2020})
If $f$ is a laminar multiflow, then there exists a laminar half-integral multiflow $f'$ 
such that $\mathcal{C}(f')\subseteq \mathcal{C}(f)$ of value $|f'|\ge \frac{1}{2}|f|$. 
Such a multiflow can be computed in polynomial time. 
\label{theorem:halfinteger}
\end{theorem}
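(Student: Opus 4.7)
My plan is to argue that the laminarity of $f$ reduces the task to a tree-path packing problem, and then to import the half-integral rounding procedure of \cite{garg2020}.

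First, I would build a rooted forest $T$ encoding the laminar family $\mathcal{L}$: the nodes are the sets of $\mathcal{L}$ (together with an artificial super-root representing the outside), with parent--child relations given by innermost containment. Each dual vertex $v^* \in V^*$ is attached to its ``home'' node, namely the smallest $U \in \mathcal{L}$ containing $v^*$ (or the super-root if there is none). The key observation is that for any edge $e \in D\,\dot\cup\,E$ with dual $e^* = \{v^*, w^*\}$, a cycle $C \in \mathcal{C}(f)$ with $C^* = \delta(U_C)$ uses $e$ if and only if $U_C$ lies on the unique tree path in $T$ between the homes of $v^*$ and $w^*$. Consequently, restricting the maximum multiflow LP~\eqref{equ:multiFlowLP} to cycles in $\mathcal{C}(f)$ becomes a tree-path packing LP:
\[
\max \sum_{U \in \mathcal{L}} y_U \quad \text{s.t.} \quad \sum_{U \in P_e^T} y_U \le u(e) \ \forall e \in D\,\dot\cup\,E, \ y_U \ge 0,
\]
where $P_e^T$ denotes the tree path between the homes of the endpoints of $e^*$. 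Since $f$ itself is feasible for this LP, its optimum is at least $|f|$.

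Second, I would appeal to the half-integral rounding procedure of \cite{garg2020} applied to this LP. Their construction processes the laminar family and produces in polynomial time a half-integral feasible vector whose value is at least half of the LP optimum. Crucially, their argument is entirely combinatorial and depends only on the tree-path structure of the constraints, not on planarity of $G+H$: planarity is used in their paper only to justify \emph{why} the support of the optimal fractional multiflow is automatically laminar (every cycle in a plane graph is separating). In our setting that laminarity is instead supplied by combining Lemma~\ref{lem:uncrossing} with Proposition~\ref{prop:crossing_separating}, so their rounding transfers verbatim. The resulting vector yields a half-integral multiflow $f^{\text{half}}$ with $\mathcal{C}(f^{\text{half}})\subseteq \mathcal{C}(f)$ and $|f^{\text{half}}|\ge |f|/2$, and it remains laminar since its support is a subset of $\mathcal{C}(f)$.

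The main difficulty I anticipate is expository rather than mathematical: one must verify carefully that every step of the combinatorial rounding in \cite{garg2020} is phrased in terms of the abstract laminar family of dual cuts alone, independently of the ambient surface, rather than relying on any planar dual duality. Since the abstract laminar family of subsets of $V^*$ is the same object whether the cycles are embedded in the plane or on $\Sg$, once this is checked the proof carries over to the bounded-genus setting without modification.
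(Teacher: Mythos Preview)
The paper does not prove this theorem; it is imported directly from \cite{garg2020} as a black box, precisely because (as you correctly observe) the argument there depends only on the laminarity of the dual-cut family and not on planarity of $G+H$. Your tree-path packing reformulation and the transfer of the rounding are consistent with the paper's own one-line summary in Section~\ref{sec:relatedwork} that \cite{garg2020} ``construct[s] a certain network matrix, which is known to be totally unimodular,'' so your sketch is essentially an elaboration of what the paper leaves implicit.
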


 \subsection{Obtaining an integral multiflow}

In this section we show the following result, which is an extension of a result from \cite{huangetal20,garg2020}, 
who proved it for planar graphs. 

\begin{lemma}
Let $(G,H,u)$ be an instance of the maximum multiflow problem such that $G+H$ has genus $g$, 
and let $f^{\text{half}}$ be a laminar half-integral multiflow whose support $\mathcal{C}(f^{\textnormal{half}})$ contains only separating cycles. 
Then there exists an integral multiflow $f'$ of value $|f'|\ge 2|f^{\textnormal{half}}|/\chi_g$ (such that $\mathcal{C}(f')\subseteq \mathcal{C}(f^{\textnormal{half}})$).
Such a multiflow can be found in polynomial time.
\label{lemma:integralflow}
\end{lemma}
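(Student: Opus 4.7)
The plan is to reduce the rounding problem to a vertex-coloring problem on an auxiliary graph of genus at most $g$, and then invoke the map color theorem (Theorem~\ref{thm:mapcolor}).

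First, I would pass to a multiset of cycles by taking each $C \in \mathcal{C}(f^{\textnormal{half}})$ with multiplicity $2f^{\textnormal{half}}_C$, which is a nonnegative integer since $f^{\textnormal{half}}$ is half-integral. Call the resulting multiset $\mathcal{S}$; then $|\mathcal{S}| = 2|f^{\textnormal{half}}|$, and since $f^{\textnormal{half}}_C \geq 1/2$ for every $C$ in the support, the number $|N(e)|$ of elements of $\mathcal{S}$ traversing any supply edge $e$ satisfies $|N(e)| \leq 2u(e)$. Next I would build an auxiliary multigraph $K$ on vertex set $\mathcal{S}$ as follows. Since the cycles of $\mathcal{C}(f^{\textnormal{half}})$ are separating and the dual cuts $\{U_C\}$ form a laminar family $\mathcal{L}$, together with Proposition~\ref{prop:crossing_separating} this forces the cycles to be pairwise non-crossing on $\Sg$. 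Hence they can be slightly perturbed (one perturbation per multiset copy, using the depth in $\mathcal{L}$ to set the displacement) into a family of pairwise disjoint simple closed curves $\{\gamma_v\}_{v \in \mathcal{S}}$, nested according to $\mathcal{L}$. For each supply edge $e$ with $|N(e)| \geq 2$, the $|N(e)|$ strands $\{\gamma_v : v \in N(e)\}$ appear in a natural cyclic order inside a small disk around $e$, and I would add to $K$ a cycle of length $|N(e)|$ through $N(e)$ in that cyclic order. Any independent set in $K$ is in particular an independent set in each such local cycle, and so meets $N(e)$ in at most $\lfloor |N(e)|/2 \rfloor \leq u(e)$ elements; every color class of a proper coloring of $K$ therefore corresponds to a feasible integer multiflow in $(G,H,u)$.

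The main obstacle will be to show that $K$ has genus at most $g$, so that Theorem~\ref{thm:mapcolor} applies and yields a proper coloring of $K$ with at most $\chi_g$ colors. For this I would consider the graph $H'$ drawn on $\Sg$ obtained from the disjoint curves $\{\gamma_v\}_{v\in\mathcal{S}}$ by adding, for each supply edge $e$ with $|N(e)|\geq 2$, the short connecting arcs inside a small disk around $e$ that realize the local cycle of $K$ at $e$. Because the $\gamma_v$'s are pairwise disjoint and because the connecting arcs for distinct supply edges live in disjoint small neighborhoods, $H'$ is drawn without crossings on $\Sg$. Contracting each curve $\gamma_v$ to a single vertex then yields $K$, and since contractions do not increase genus, $K$ embeds on $\Sg$. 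The slightly delicate point here is to specify the cyclic order at $e$ consistently with the nesting on both sides of $e$, but this is dictated uniquely by the embedding together with $\mathcal{L}$.

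Finally, Theorem~\ref{thm:mapcolor} gives in polynomial time a proper coloring of $K$ with at most $\chi_g$ colors. Picking the largest color class $\mathcal{S}' \subseteq \mathcal{S}$, one has $|\mathcal{S}'| \geq |\mathcal{S}|/\chi_g = 2|f^{\textnormal{half}}|/\chi_g$, and defining $f'_C := |\{v \in \mathcal{S}' : v \text{ is a copy of } C\}|$ yields an integer multiflow satisfying $\mathcal{C}(f') \subseteq \mathcal{C}(f^{\textnormal{half}})$, $\sum_{C \ni e} f'_C = |\mathcal{S}' \cap N(e)| \leq u(e)$ for every supply edge $e$, and $|f'| = |\mathcal{S}'| \geq 2|f^{\textnormal{half}}|/\chi_g$, which is the desired bound.
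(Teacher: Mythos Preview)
Your high-level strategy---build an auxiliary graph on the multiset of half-integral cycles, bound its genus by $g$, and apply the map color theorem---is sound and close in spirit to the paper's proof. However, there is a genuine gap in the embedding step.

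The strands $\{\gamma_v : v\in N(e)\}$ in a small disk around $e$ lie in a \emph{linear} order, not a cyclic one: $e$ borders two faces, and the perturbed curves run parallel to $e$, stacked from one face to the other. Closing your local cycle therefore requires an arc from the outermost strand back to the innermost one. But whenever three of the curves in $N(e)$ are nested in $\mathcal{L}$---say $\gamma_1$ inside $\gamma_2$ inside $\gamma_3$---the separating curve $\gamma_2$ disconnects $\Sg$ with $\gamma_1$ on one side and $\gamma_3$ on the other, so \emph{no} arc from $\gamma_1$ to $\gamma_3$ can avoid $\gamma_2$, not even one that leaves the disk. Hence $H'$ as you describe it cannot be embedded in $\Sg$, and the genus bound for $K$ fails as written.

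The fix is easy: replace the cycle gadget at each $e$ by the \emph{path} $\gamma_1\text{--}\gamma_2\text{--}\cdots\text{--}\gamma_{|N(e)|}$. Those arcs embed in the strips between consecutive strands, so $H'$ does embed in $\Sg$, and contracting each $\gamma_v$ exhibits $K$ as a minor. An independent set then meets $N(e)$ in at most $\lceil |N(e)|/2\rceil\le u(e)$ elements, which is exactly what feasibility needs. (You must also add gadgets at demand edges $d\in D$, since the constraint $\sum_{C\ni d}f'_C\le u(d)$ has to be verified as well.)

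With these fixes your argument is correct and takes a somewhat different route from the paper's. The paper first strips off the integer part of $f^{\textnormal{half}}$ and replaces each edge by unit-capacity parallel copies, so that at most two cycles use any edge; the auxiliary graph is then simply the edge-intersection graph, and its genus bound is proved by induction on the laminar structure (Lemmas~\ref{lemma:half} and~\ref{lemma:intersectiongraph}). Your path-gadget construction handles arbitrary capacities directly and establishes the genus bound by an explicit perturb-and-contract embedding, which is arguably more transparent than the induction.
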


Our proof follows the same outline as the proof of Theorem 1 of Fiorini et al.\ \cite{FioriniHRV2007}. 
Let $\mathcal{C}^{\text{half}}:=\mathcal{C}(f^{\text{half}})$ be the set of $D$-cycles $C$ such that $f^{\text{half}}_C>0$. 
We first reduce the problem to the case where all cycles in  $\mathcal{C}^{\text{half}}$ have flow value $\frac{1}{2}$ and every edge has capacity 1. 
To do that, we reduce the flow $f^{\text{half}}_C$ by $\lfloor f^{\text{half}}_C \rfloor$ for each cycle $C\in \mathcal{C}^{\text{half}}$, 
and reduce edge capacities accordingly. 
Then, since now $f^{\text{half}}$ is small, we can further reduce demands and capacities to 
$u'(e)=\min \{ u(e), |\mathcal{C}(f^{\text{half}})| \}$ for each $e\in E\cup D$, 
so that $\sum_{e\in D\dot\cup E}u(e)$ is polynomially bounded. 
We can then replace each edge $e$ by $u(e)$ parallel edges of unit capacity. 
Given a cycle $C$ such that $f^{\text{half}}_C=\frac{1}{2}$, we replace each edge $e\in C$ by one of its parallel edges. This can be done while ensuring that the resulting flow is still feasible and laminar. 
To facilitate the proof, we still denote this graph by $G+H$ and keep all other notations. 

Recall that cycles in $\mathcal{C}^{\text{half}}\subseteq \mathcal{C}_\text{sep}$ are separating and do not cross each other, 
so that the family $\{\text{in}(C), C\in \mathcal{C}^{\text{half}}\}$ is laminar. We partially order $\mathcal{C}^{\text{half}}$ with the following relation: $C\prec C'$ if $\text{in}(C) \subset \text{in}(C')$. We have the following simple property:

\begin{lemma}
If $C_1,C_2,C'\in \mathcal{C}^{\textnormal{half}}$ are such that $C_1\prec C'$ and $C_2\nprec C'$, then $C_1$ and $C_2$ are edge-disjoint.
\label{lemma:half}
\end{lemma}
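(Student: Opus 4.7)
The plan is to exploit two things already established: the laminar structure of $\mathcal{L} = \{\text{in}(C) : C \in \mathcal{C}^{\text{half}}\}$, and the reduction performed just before the lemma, after which every edge has unit capacity and every cycle in $\mathcal{C}^{\text{half}}$ carries flow exactly $1/2$. The consequence of that reduction that I will use is the following \emph{capacity bound}: no edge of $G+H$ can belong to three distinct cycles of $\mathcal{C}^{\text{half}}$, since three contributions of $1/2$ would violate the unit capacity.

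First I would observe that $C_1 \neq C_2$, because otherwise $C_1 \prec C'$ and $C_1 \nprec C'$ simultaneously. Now I would apply laminarity to the pair $\text{in}(C_2), \text{in}(C')$. Since $C_2 \nprec C'$, the relation $\text{in}(C_2) \subsetneq \text{in}(C')$ is excluded (equality is excluded too, since $C_2 \neq C'$ by $C_1 \prec C'$, $C_2 \nprec C'$), so we are left with exactly two cases: either (i) $\text{in}(C_2) \cap \text{in}(C') = \emptyset$, or (ii) $\text{in}(C') \subseteq \text{in}(C_2)$.

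The core step is then to argue, in each case, that any edge $e \in C_1 \cap C_2$ would necessarily lie in $C'$ as well. Recall that $C^* = \delta(\text{in}(C))$ for every $C \in \mathcal{C}^{\text{half}}$, so $e \in C_1 \cap C_2$ means the dual edge $e^*$ has one endpoint in $\text{in}(C_1)$ and the other in $\text{out}(C_1)$, and similarly for $C_2$. In case (i), $\text{in}(C_1) \subseteq \text{in}(C')$ and $\text{in}(C_2) \subseteq \text{out}(C')$, so $e^*$ has one endpoint in $\text{in}(C')$ and one in $\text{out}(C')$, hence $e \in C'$. In case (ii), $\text{in}(C_1) \subsetneq \text{in}(C') \subseteq \text{in}(C_2)$, so the laminarity of $\text{in}(C_1), \text{in}(C_2)$ forces $\text{in}(C_1) \subseteq \text{in}(C_2)$; then the endpoint of $e^*$ in $\text{in}(C_1)$ lies in $\text{in}(C')$, while the endpoint of $e^*$ in $\text{out}(C_2) \subseteq \text{out}(C')$ lies in $\text{out}(C')$, so again $e \in C'$. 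In both cases the edge $e$ belongs to the three distinct cycles $C_1, C_2, C'$, contradicting the capacity bound and thus proving that $C_1 \cap C_2 = \emptyset$ as edge sets.

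The only mildly delicate point, and the one I would double-check, is the case analysis of where $e^*$'s endpoints land relative to $\text{in}(C')$; everything else is bookkeeping with laminarity. In particular the proof relies crucially on the preprocessing that normalized capacities and flow values, so I would make sure to invoke that setup explicitly when deriving the contradiction $3 \cdot \tfrac12 > 1$.
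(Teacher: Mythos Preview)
Your proof is correct and follows essentially the same approach as the paper's: use laminarity of $\{\text{in}(C)\}$ together with $C_2\nprec C'$ to reduce to two cases, show in each that a hypothetical shared edge $e\in C_1\cap C_2$ must also lie on $C'$, and then contradict the unit-capacity bound via $3\cdot\tfrac12>1$. One tiny quibble: your deduction ``$C_2\neq C'$ by $C_1\prec C'$, $C_2\nprec C'$'' is not valid, since $\prec$ is irreflexive and so $C_2=C'$ would still satisfy $C_2\nprec C'$; however, the paper's own case split tacitly makes the same assumption, and in the lemma's only application the three cycles are distinct by construction.
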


\begin{proof}(Lemma \ref{lemma:half})
Assume, for a contradiction, that $C_1$ and $C_2$ share an edge $e$. Let $e^*=\{u_{\text{in}}^*,u_{\text{out}}^*\}$ denote its dual edge, such that $u_{\text{in}}^*\in \text{in}(C_1)$ and $u_{\text{out}}^*\in \text{out}(C_1)$. 

Since $C_2\nprec C'$, by laminarity either $C'\prec C_2$ or $\text{in}(C')\cap \text{in}(C_2)=\emptyset$.

In the first case we have $C_1\prec C'\prec C_2$ and then: 
$$ u_{\text{in}}^*\in \text{in}(C_1)\subseteq \text{in}(C')\subseteq \text{in}(C_2) \text{ and } u_{\text{out}}^*\in \text{out}(C_2)\subseteq\text{out}(C'),$$
so $e\in C'$.

In the second case we have $C_1\prec C'$ and $\text{in}(C')\cap \text{in}(C_2)=\emptyset$ and then:
$$ u_{\text{in}}^*\in \text{in}(C_1)\subseteq \text{in}(C')\subseteq\text{out}(C_2) \text{ and } u_{\text{out}}^*\in \text{in}(C_2)\subseteq\text{out}(C'),$$
so $e\in C'$. See Figure~\ref{fig:claimhalf}.

Thus in both cases $e$ belongs to $C'$ as well as to $C_1$ and $C_2$. Since these three $D$-cycles are in the support of a half-integral multiflow, this implies that the flow along this edge is at least $\frac{3}{2}$, contradicting feasibility.

\begin{figure}[htbp]
\centering
  \includegraphics[width=0.68\linewidth]{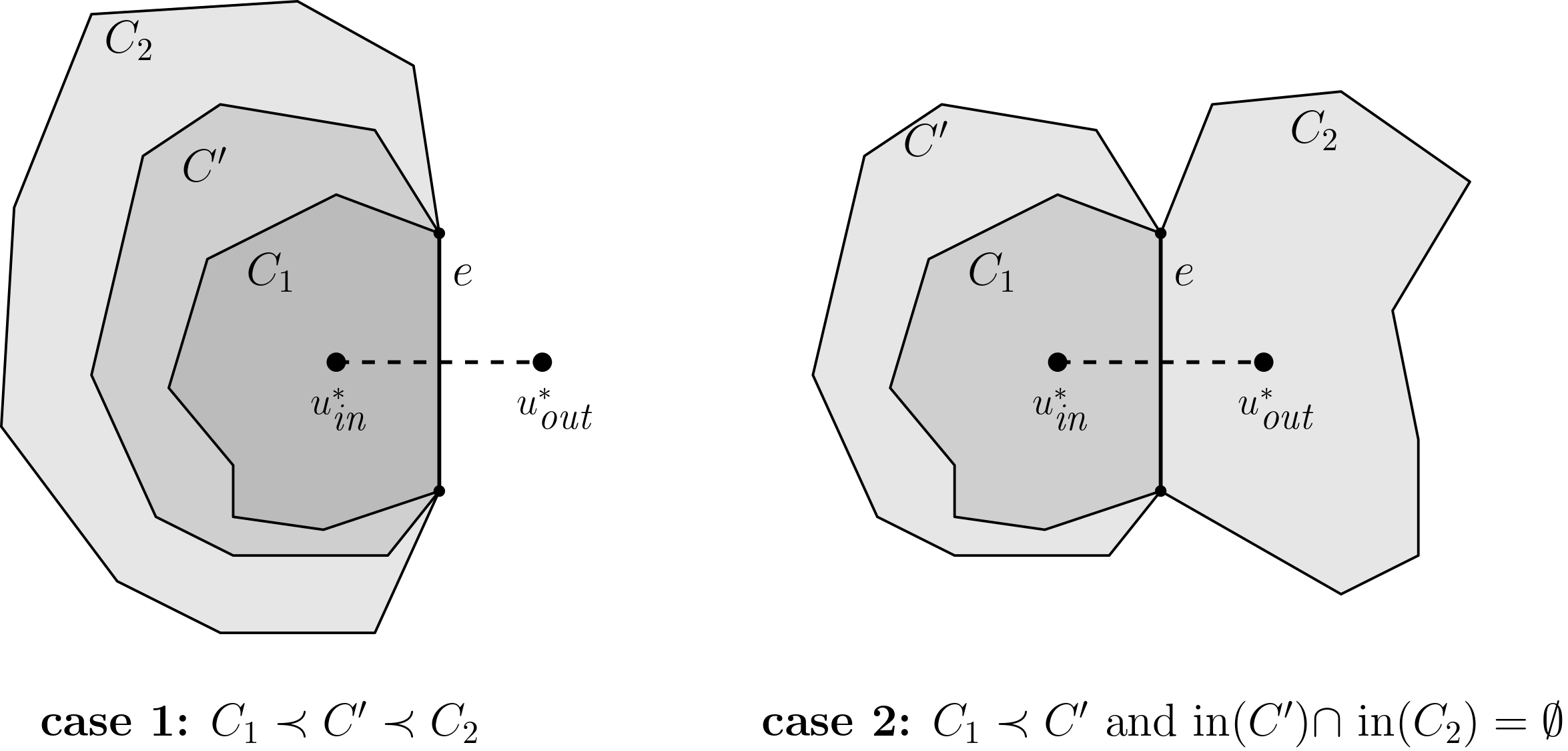}
  \caption{Proof of Lemma~\ref{lemma:half}.}
  \label{fig:claimhalf}
\end{figure}
\end{proof}

Our goal is to get a large subset $\mathcal{C}'\subseteq \mathcal{C}^{\text{half}}$ such that any two cycles in $\mathcal{C}'$, are edge-disjoint. 
This is equivalent to finding a large independent set in a properly defined  graph $\text{Int}(\mathcal{C}^{\text{half}})$ with vertex set $\mathcal{C}^{\text{half}}$ and such that two cycles are adjacent if they share at least one edge. 
Using Lemma \ref{lemma:half} we can show: 

\begin{lemma}
Given a graph embedded in $\Sg$, 
let $\mathcal{C}^{\text{half}}$ be 
defined as above. 
Let $\text{Int}(\mathcal{C}^{\text{half}})$ be the graph with vertex set $\mathcal{C}^{\text{half}}$ and such that two cycles are adjacent if they share at least one edge. Then $\text{Int}(\mathcal{C}^{\text{half}})$ is a genus-$g$ graph. 
\label{lemma:intersectiongraph}
\end{lemma}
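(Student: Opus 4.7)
The plan is to embed $\text{Int}(\mathcal{C}^{\textnormal{half}})$ directly on $\Sg$ by placing one vertex inside each cycle $C \in \mathcal{C}^{\textnormal{half}}$ and routing each edge through a shared primal edge, relying on the laminar structure of $\mathcal{C}^{\textnormal{half}}$. The first step is to sharpen Lemma~\ref{lemma:half} into the structural statement that if two distinct cycles $C_1, C_2 \in \mathcal{C}^{\textnormal{half}}$ share an edge then they are $\prec$-incomparable and have identical sets of strict $\prec$-ancestors --- equivalently, they are either both roots of $\mathcal{L}$ or siblings with the same parent. Indeed, taking $C' := C_2$ in Lemma~\ref{lemma:half} rules out $C_1 \prec C_2$, and symmetrically for $C_2 \prec C_1$; for arbitrary $C'$, the same lemma applied both ways gives $C_1 \prec C' \Leftrightarrow C_2 \prec C'$. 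In particular, any $C' \in \mathcal{C}^{\textnormal{half}}$ with $C \prec C'$ is edge-disjoint from $C$.

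Next, for each $C \in \mathcal{C}^{\textnormal{half}}$, define $R_C$ as $\text{in}(C)$, viewed as a connected open subsurface of $\Sg$ bounded by $C$, minus the closures of $\text{in}(C')$ for all children $C'$ of $C$ in $\mathcal{L}$. Since the children's closed regions are pairwise disjoint compact subsurfaces of $\text{in}(C)$, each $R_C$ is a connected open subsurface, and by laminarity the regions $\{R_C\}_{C \in \mathcal{C}^{\textnormal{half}}}$ are pairwise disjoint. Place $v_C$ at an arbitrary point of $R_C$. For each adjacency $\{C_1, C_2\}$ of $\text{Int}(\mathcal{C}^{\textnormal{half}})$, pick one shared edge $e \in C_1 \cap C_2$ with adjacent faces $f_1, f_2$ of $G+H$. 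Because $e \in C_i$ forces exactly one endpoint of $e^*$ into $\text{in}(C_i)$, and $\text{in}(C_1) \cap \text{in}(C_2) = \emptyset$ by the first step, after relabeling we may assume $f_i^* \in \text{in}(C_i)$. Moreover $f_1^*$ cannot lie in $\text{in}(C')$ for any child $C'$ of $C_1$: otherwise $f_2^* \in \text{out}(C_1) \subseteq \text{out}(C')$ would give $e \in C'$, contradicting the edge-disjointness derived above. Thus $f_i \subseteq R_{C_i}$, and I draw the $\text{Int}$-edge as a curve from $v_{C_1}$ to the midpoint of $e$ inside $R_{C_1}$ and from there to $v_{C_2}$ inside $R_{C_2}$.

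Finally I verify non-crossing. The curves emanating from a given $v_C$ all lie inside the connected region $R_C$ and end at distinct midpoints on its boundary component $C$; a standard topological argument (take a small disjoint fan of arcs at $v_C$ and extend each arm greedily using path-connectedness) draws them pairwise disjointly even when $R_C$ has higher genus. Curves from different cycles live in disjoint regions $R_C$ and can meet only at midpoints of shared primal edges, and the unit-capacity reduction ensures every primal edge lies in at most two cycles of $\mathcal{C}^{\textnormal{half}}$, so distinct $\text{Int}$-edges pass through distinct midpoints. This yields an embedding of $\text{Int}(\mathcal{C}^{\textnormal{half}})$ on $\Sg$. The main conceptual hurdle is the star-drawing inside each $R_C$: the laminar structure is precisely what supplies the connectedness of $R_C$ needed for that topological step to go through.
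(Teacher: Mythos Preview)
Your direct-embedding approach is different from the paper's, which argues by induction on $g+|\mathcal{C}^{\textnormal{half}}|$: in the base case (the sets $\text{in}(C)$ pairwise disjoint, or all contained in one maximal cycle) it contracts each $\text{in}(C)$ in the dual to exhibit $\text{Int}(\mathcal{C}^{\textnormal{half}})$ as a minor of $(G+H)^*$; in the inductive step it cuts $\Sg$ along a suitable separating $C$ into pieces of genera $g_{\text{in}}+g_{\text{out}}=g$, recurses on each side, and glues the two resulting embeddings at a single vertex, with Lemma~\ref{lemma:half} ensuring no $\text{Int}$-edge runs across.

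Your argument, however, has a genuine gap. The assertion that ``the children's closed regions are pairwise disjoint'' is false: by your own sharpening of Lemma~\ref{lemma:half}, \emph{siblings} are exactly the pairs that are allowed to share edges, so two children $C'_1,C'_2$ of $C$ sharing an edge $e$ satisfy $e\subseteq\overline{\text{in}(C'_1)}\cap\overline{\text{in}(C'_2)}$. Worse, the conclusion that $R_C$ is connected can fail. If several children share edges cyclically so that $\bigcup_j\overline{\text{in}(C'_j)}$ forms a ring, a face $h$ trapped inside that ring lies in $\text{out}(C'_j)$ for every $j$ (hence in $R_C$), yet every face adjacent to $h$ lies in some $\text{in}(C'_j)$; then $h$ is an isolated component of $R_C$, and ``place $v_C$ at an arbitrary point of $R_C$'' can strand $v_C$ with no route to any midpoint on $C$. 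The approach is probably salvageable --- e.g.\ by always placing $v_C$ in the component of $R_C$ that borders $C$ (every face of $\text{in}(C)$ incident to an edge of $C$ does lie in $R_C$, since children are edge-disjoint from $C$) and arguing that all needed midpoints are reachable inside that component --- but this is not the argument you wrote, and making it precise still requires care, particularly when a child shares a vertex (though no edge) with $C$.
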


\begin{proof}
We prove the statement by induction on $g+|\mathcal{C}^{\text{half}}|$. When $g+|\mathcal{C}^{\text{half}}|\le 2$, it is trivial. 
Otherwise let $G$ be a connected genus-$g$ graph, embedded on $\Sg$, and $\mathcal{C}^{\text{half}}$ a family as described above. 

Suppose first that $\{\text{in}(C)~|~C \in \mathcal{C}^{\text{half}}\}$ are pairwise disjoint. 
Then, contract in $G^*$ each set in$(C)$ into a single node. Two cycles $C$ and $C'$ share an edge if and only if in this contracted graph, the nodes corresponding to in$(C)$ and in$(C')$ are adjacent. This means that $\text{Int}(\mathcal{C}^{\text{half}})$ is a minor of $G^*$, and in particular has genus less than or equal to the genus of $G^*$. 

The case where there is one cycle $\bar C$ such that $C\prec \bar C$ for all $C\in\mathcal{C}^{\text{half}}\setminus \bar C$ 
and $\{\text{in}(C)~|~C \in \mathcal{C}^{\text{half}}\setminus \bar C\}$ are pairwise disjoint 
works similarly; here we contract out$(\bar C)$.

Otherwise there exists a triple $C_1,C_2,C\in \mathcal{C}^{\text{half}}$ such that $C_1\prec C$ and $C_2\nprec C$. 
The separating cycle $C$ divides $\mathbb{S}_g$ into two sides. 
Each side can be closed --- by identifying the boundary of a disk with the boundary form by $C$ --- so that they are homeomorphic to $\mathbb{S}_{g_\text{in}}$ and $\mathbb{S}_{g_\text{out}}$, respectively. 
The connected sum of these two  surfaces is homeomorphic to $\Sg$, and in particular we have $g_{\text{in}}+g_{\text{out}}=g$. 
This equality can easily be checked with Euler's formula. 

Let $G_\text{in}$ (\emph{resp.} $G_\text{out}$) be the subgraph of $G$ induced by the vertices embedded on the side corresponding to $\mathbb{S}_{g_\text{in}}$ (\resp $\mathbb{S}_{g_\text{out}}$),  such that both contain $C$. 
The embedding of $G$ in $\Sg$ induces an embedding of $G_\text{in}$ in $\mathbb{S}_{g_\text{in}}$ and an embedding of $G_\text{out}$ in $\mathbb{S}_{g_\text{out}}$. Thus, 
$
\text{genus}(G_\text{in}) + \text{genus}(G_\text{out}) \le g
$.

Now we define $\mathcal{C}_{\preceq C}^{\text{half}}:=\{C'\in\mathcal{C}^{\text{half}} | C'\prec C\}\cup\{C\}$ and $\mathcal{C}_{\nprec C}^{\text{half}}:=\{C'\in\mathcal{C}^{\text{half}} | C'\nprec C\}\cup\{C\}$. 
The choice of $C$ implies that these two families are proper subsets of $\mathcal{C}^{\text{half}}$. 
Since the cycles in $\mathcal{C}^{\text{half}}$ do not cross, we have
$\{C\in\mathcal{C}^{\text{half}}: C\subseteq G_\text{in}\} = \mathcal{C}_{\preceq C}^{\text{half}}$ and 
$\{C\in\mathcal{C}^{\text{half}}: C\subseteq G_\text{out}\} = \mathcal{C}_{\nprec C}^{\text{half}}$. 

By the induction hypothesis, $\text{Int}(\mathcal{C}_{\preceq C}^{\text{half}})$ and $\text{Int}(\mathcal{C}_{\nprec C}^{\text{half}})$ can be embedded on $\mathbb{S}_{g_\text{in}}$ and $\mathbb{S}_{g_\text{out}}$, respectively. 
By Lemma \ref{lemma:half}, the graph $\text{Int}(\mathcal{C}^{\text{half}})$  
arises from $\text{Int}(\mathcal{C}_{\preceq C}^{\text{half}})$ and $\text{Int}(\mathcal{C}_{\nprec C}^{\text{half}})$ 
by identifying the two vertices that correspond to $C$. 

Finally we prove that $\text{Int}(\mathcal{C}^{\text{half}})$ can be embedded on a surface genus $g_\text{in}+g_\text{out}\le g$.  
To see that, remove small disks $D_\text{in}$ and $D_\text{out}$ in $\mathbb{S}_{g_\text{in}}$ and $\mathbb{S}_{g_\text{out}}$, respectively, 
around the point that corresponds to vertex $C$ and that intersects only edges incident to $C$, and glue them together by identifying boundaries of $D_\text{in}$ and $D_\text{out}$. The surface obtained is homeomorphic to $\mathbb{S}_{g_\text{in}+g_\text{out}}$ It is easy to see that $C$, and the edges incident to $C$, can be re-embedded in this surface without intersecting any other edges. 
This terminates the proof of Lemma \ref{lemma:intersectiongraph}.
\end{proof}

Using Theorem \ref{thm:mapcolor}, this lemma ensures that one can compute in polynomial time a subset $\mathcal{C'}\subseteq \mathcal{C}^{\text{half}}$ of at least $|\mathcal{C}^{\text{half}}|/\chi_g$ pairwise edge-disjoint $D$-cycles. 
From this set, we define an integral multiflow by setting $f'_C = 1$ for $C\in \mathcal{C}'$ and $f'_C=0$ for $C\in\mathcal{C}\setminus\mathcal{C}'$. 
It is easy to check that $f'$ is a multiflow that satisfies the properties of Lemma \ref{lemma:integralflow}.

\section{Non-separating cycles: routing an integral multiflow (Step~\ref{step4}) \label{sec:nonseparating}}

If the separating cycles contribute less than half to the total value of the multiflow $\overline{f}$ obtained by Lemma \ref{lem:uncrossing},
we consider the non-separating cycles in the support of $\overline{f}$.
We first partition them into \emph{free homotopy classes}. 
The next theorem gives an upper bound on the number of such classes. 

\begin{theorem}(\cite{greene2018curves})
Let $\mathbb{S}_g$ be an orientable surface of genus $g$.
Then there are at most $O(g^2\log g)$ topological cycles 
such that any two of them are in different free homotopy classes and cross each other at most once.
\label{theorem:topologycurves}
\end{theorem}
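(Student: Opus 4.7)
The plan is to adopt a hyperbolic-geometric viewpoint and then dichotomize by length, a strategy following the tradition of Przytycki's earlier $O(g^3)$ bound but aiming for a tighter accounting. I would begin by endowing $\mathbb{S}_g$ with a complete hyperbolic metric; by Gauss--Bonnet its total area is $4\pi(g-1)$. Because every essential free homotopy class on a hyperbolic surface admits a unique simple closed geodesic representative, and because the geometric intersection number between two distinct classes is realised by the geodesic representatives, I may replace each topological cycle in the given family $\mathcal{F}$ by the corresponding geodesic without changing either the number of free homotopy classes or the "cross at most once" hypothesis. Let $N := |\mathcal{F}|$ be the quantity I want to bound.

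Next I would fix a threshold $L = c \log g$ and split $\mathcal{F} = \mathcal{F}_{\text{short}} \cup \mathcal{F}_{\text{long}}$ by geodesic length. The long part is the easier one: the collar lemma attaches to each simple closed geodesic of length $\ell \geq 1$ an embedded annular neighbourhood of hyperbolic area bounded below by a constant, and the single-intersection condition forces any two such collars to overlap only in a region of bounded area. A straightforward area comparison against $4\pi(g-1)$ caps $|\mathcal{F}_{\text{long}}|$ by roughly $g^{2}/\log g$, well within the target budget. So the whole problem reduces to bounding the short geodesics.

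For the short part, the main obstacle, I would analyse the arrangement of short geodesics inside the thin part of $\mathbb{S}_g$, which by the Margulis lemma decomposes into at most $3g-3$ disjoint tubes around very short geodesics. The structural idea is that a family that is pairwise non-homotopic and pairwise singly crossing behaves like the set of edges of a triangulation-type structure; heuristically the number of such edges should scale like the first Betti number, namely $2g$, times a combinatorial factor. To obtain the logarithmic factor I would try a recursive peel-off: pick a shortest $\gamma \in \mathcal{F}_{\text{short}}$, cut $\mathbb{S}_g$ along $\gamma$, and recurse on the resulting surface of strictly lower complexity, charging to $\gamma$ all cycles that become non-essential, homotopic, or multiply-crossing after the cut. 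If one can show that the charge at each step is $O(g \log g)$ and that the recursion has depth $O(g)$, the claimed bound follows.

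The hard part will be quantifying precisely the charge in the recursive step. Naively, cutting along $\gamma$ can create many cycles that now cross one of its two sides more than once, so a direct argument risks losing a linear factor in $g$ and yielding $O(g^3)$ rather than $O(g^{2}\log g)$. Closing the logarithmic gap seems to require a subtle double-counting argument that simultaneously controls (i) the number of intersections of each $\gamma' \in \mathcal{F}$ with a short closed curve, (ii) the area of Margulis tubes, and (iii) the drop in Euler characteristic at each peel-off; I would expect this to be where the real work lies, and where one must invoke deeper results on the geometry of the curve graph of $\mathbb{S}_g$ rather than purely elementary combinatorics.
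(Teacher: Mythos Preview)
The paper does not prove this theorem at all: it is quoted verbatim from Greene~\cite{greene2018curves} and used as a black box (see the sentence immediately following the statement, which passes directly to Corollary~\ref{cor:homotopypolytime}). So there is no ``paper's own proof'' to compare your proposal against; the authors simply import the result.

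As for the proposal itself, it is an honest outline of the Przytycki-style hyperbolic approach, but by your own admission it does not close. The long-curve half via Gauss--Bonnet and collars is fine and standard; the short-curve half is where the content lies, and the recursive peel-off you describe is exactly what gives $O(g^3)$ in~\cite{Piotr15}, not $O(g^2\log g)$. Your hope that the charge per cut is $O(g\log g)$ is not substantiated, and the vague appeal to ``deeper results on the geometry of the curve graph'' is a placeholder rather than an argument. Greene's actual proof is quite different in flavour: it fixes a hyperbolic metric, passes to geodesic representatives, and then runs a short probabilistic/packing argument that directly controls the number of curves of each length scale, avoiding the recursion entirely. If you want to turn your sketch into a proof you would need to replace the inductive cut-and-recurse by a global counting argument; as written, the proposal reproduces Przytycki's bound at best.
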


\begin{corollary}\label{cor:homotopypolytime}
The $D$-cycles in the support of $\overline{f}$
can be partitioned into $O(g^2\log g)$ free homotopy classes in polynomial time.
\end{corollary}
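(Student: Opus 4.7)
The plan is a direct combination of Lemma~\ref{lem:uncrossing}, which makes any two cycles in $\mathcal{C}(\overline{f})$ cross at most once, with the quantitative topological upper bound of Theorem~\ref{theorem:topologycurves}.

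For the counting part, I pick one $D$-cycle as a representative of each free homotopy class occurring in $\mathcal{C}(\overline{f})$, and let $\mathcal{R}$ be the resulting set. Viewed as topological cycles via their embeddings in $\mathbb{S}_g$, the members of $\mathcal{R}$ are pairwise in distinct free homotopy classes by construction and pairwise cross at most once by Lemma~\ref{lem:uncrossing}. Theorem~\ref{theorem:topologycurves} then yields $|\mathcal{R}| = O(g^2\log g)$, where a null-homotopic class, if present, contributes at most one extra representative and does not affect the asymptotic bound. Hence the number of distinct free homotopy classes meeting $\mathcal{C}(\overline{f})$ is $O(g^2\log g)$.

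For the algorithmic step, it suffices to decide in polynomial time whether two given cycles $C, C' \in \mathcal{C}(\overline{f})$ are freely homotopic on $\mathbb{S}_g$; a union--find aggregation over the $O((|E||D|)^2)$ pairs in the support then produces the desired partition. Free homotopy of closed curves on an orientable surface is equivalent to conjugacy in the surface group $\pi_1(\mathbb{S}_g)$, and for a fixed combinatorial embedding this can be decided in polynomial time: for instance, via Dehn's algorithm applied to the standard presentation of $\pi_1(\mathbb{S}_g)$, which satisfies a small cancellation condition, or by the classical polynomial-time algorithms from computational topology for homotopy testing of curves on surfaces. We invoke such an algorithm as a black box.

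The main obstacle, in so far as there is one, is merely the appeal to the homotopy-testing oracle; once this is in place, the counting follows immediately from the uncrossing lemma and the cited topological result, giving both the $O(g^2\log g)$ bound and a polynomial-time computation of the partition.
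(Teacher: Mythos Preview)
Your proposal is correct and follows essentially the same approach as the paper: the paper's proof simply invokes polynomial-time homotopy testing (citing \cite{EricksonW13, LazarusR12}) for the algorithmic part, with the $O(g^2\log g)$ bound implicit from the placement immediately after Theorem~\ref{theorem:topologycurves}. Your write-up is in fact more explicit than the paper's two-line proof, spelling out the representative-set argument and the union--find aggregation.
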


\begin{proof}
Take pairs of cycles in the support of $\overline{f}$
and check whether they are freely homotopic, for example as in~\cite{EricksonW13, LazarusR12}.
\end{proof}

\subsection{Greedy algorithm}\label{sec:greedyalgorithm}

Let $\mathcal{H}$ be a free homotopy class of non-separating cycles whose total flow value  $|\overline{f}(\mathcal{H})|$ is largest. 
We will run the following simple greedy algorithm (Algorithm \ref{alg:greedy}) on $\mathcal{H}$ to get an integral multiflow.
\begin{algorithm}[htbp]
\KwIn{
a sequence  $C_1, \dots, C_k$ of $D$-cycles of $\mathcal{C}(\overline{f})$. 
}
\KwOut{an integral multiflow $f$.}

$f\leftarrow$ the all-zero multiflow\;
\For{$i=1$ to $k$}{
    Set $f_{C_i}$ to be the greatest integer such that $f$ remains feasible.
    }
    \caption{Greedy algorithm for integral multiflows.}
    \label{alg:greedy}
\end{algorithm}

The value of the integral multiflow returned by this algorithm depends on the order of the $D$-cycles in the input. 
If it is ordered according to the following definition, then we show that we lose only a constant fraction of the flow value.

\begin{definition}
A family of cycles $\{C_1, C_2, \dots, C_k\}$ is \emph{cyclically ordered}, or has a \emph{cyclic order} if, whenever two cycles $C_a$ and $C_b$ share an edge, where $a<b$, then this edge is:
\begin{enumerate}
    \item shared by all cycles $C_a,C_{a+1},\dots, C_{b-1},C_b$,  
    \item or shared by all cycles $C_b,C_{b+1}, \dots, C_k, C_1, \cdots, C_{a-1},C_a$. 
\end{enumerate}
\label{def:cyclicorder}
\end{definition}

The following lemma 
establishes the approximation ratio of Algorithm \ref{alg:greedy} on cyclically ordered input.
 
\begin{lemma}
Let $\overline{f}$ be a  multiflow and $\mathcal{H}=\{C_1, C_2, \dots, C_k\}$ a cyclically ordered family of $\mathcal{C}(\overline{f})$. 
Then Algorithm~\ref{alg:greedy} returns in polynomial time an integral multiflow of value at least $|\overline{f}(\{C_1,\ldots,C_k\})|/2$.
\label{lemma:apxratiogreedy}
\end{lemma}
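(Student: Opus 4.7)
The plan is a charging argument that leverages the cyclic interval structure of edge--cycle incidences. Write $\beta_i := \overline{f}_{C_i}$ and $f_i := f_{C_i}$; the goal is $\sum_i f_i \ge \tfrac{1}{2}\sum_i \beta_i$. The starting point is a combinatorial consequence of Definition~\ref{def:cyclicorder}: for every edge $e \in \bigcup_i C_i$, the set $I_e := \{i \in \{1,\ldots,k\} : e \in C_i\}$ is a cyclic interval, i.e., either of the form $\{a,a{+}1,\ldots,b\}$ or of the wrap-around form $\{1,\ldots,b\}\cup\{a,\ldots,k\}$ for some indices $a,b$. This is immediate by applying the two cases of the cyclic order condition to the minimum and maximum elements of $I_e$.

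Next I analyse the greedy step. Integrality of capacities and of all previously set $f_j$ makes $f_i$ a nonnegative integer, and its maximality yields a bottleneck edge $e_i \in C_i$ saturated after step $i$, so that
\[
\sum_{j \le i,\ j \in I_{e_i}} f_j \;=\; u(e_i) \;\ge\; \sum_{j \in I_{e_i}} \beta_j,
\]
where the last inequality uses feasibility of $\overline{f}$. Rearranging gives, for every $i$,
\[
\sum_{j < i,\ j \in I_{e_i}} f_j \;\ge\; (\beta_i - f_i) \;+\; \sum_{j > i,\ j \in I_{e_i}} \beta_j,
\]
so whenever $\beta_i > f_i$, the deficit $\beta_i-f_i$ is dominated by earlier greedy flow through $e_i$, with additional slack provided by the future LP flow through $e_i$.

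To turn these local bounds into the global $2$-approximation, I would set up a charging that pays each deficit $\beta_i-f_i$ out of the predecessors $\{f_j : j<i,\ j \in I_{e_i}\}$, and argue that each $f_j$ receives total charge at most $f_j$, which yields $\sum_i \max(0,\beta_i-f_i) \le \sum_j f_j$, equivalently $\sum_i \beta_i \le 2\sum_i f_i$. The cyclic interval structure is exactly what makes the charging controllable: non-wrapping intervals $I_{e_i}$ behave as in the linear case, and in the linear analog the constraint matrix is totally unimodular so greedy is already exact. The factor of~$2$ arises entirely from the possibility that $I_{e_i}$ wraps around, which can couple one predecessor $f_j$ to two deficient cycles lying on opposite sides of the cut. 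The main obstacle is thus carrying out a careful case analysis for wrap-around intervals and showing that the future-part term $\sum_{j>i,\ j\in I_{e_i}} \beta_j$ appearing in the key inequality above absorbs any such double-coupling. Polynomial running time is immediate, since greedy performs one pass over the $k$ cycles and each iteration requires only a minimum over the edges of one cycle.
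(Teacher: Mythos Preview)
Your setup---the cyclic-interval structure of each $I_e$ and the saturation inequality $\sum_{j\le i,\,j\in I_{e_i}} f_j = u(e_i) \ge \sum_{j\in I_{e_i}} \beta_j$---is correct and is exactly what the paper uses. The gap is the charging step: the inequality you aim for, $\sum_i \max(0,\beta_i-f_i)\le \sum_j f_j$, is strictly stronger than the desired $\sum_i\beta_i\le 2\sum_i f_i$ (the two are \emph{not} equivalent), and it is false in general. Take the tight instance with $k\ge 3$: cycles $C_1,\ldots,C_{2k-1}$, an edge $e_1$ of capacity $k$ lying on $C_1,\ldots,C_k$, an edge $e_2$ of capacity $k$ lying on $C_{k+1},\ldots,C_{2k-1},C_1$, and $\beta_i=1$ for all $i$. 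Greedy sets $f_1=k$ and $f_i=0$ for $i\ge 2$, so $\sum_i\max(0,\beta_i-f_i)=2k-2>k=\sum_j f_j$. Every deficient cycle $i\ge 2$ has $f_1$ as its \emph{only} nonzero predecessor in $I_{e_i}$, so any scheme that pays deficits from predecessors must load $f_1$ with $2k-2$ units of charge, nearly twice its value. The future-part slack you point to does not rescue this: it consists of $\beta_j$'s rather than $f_j$'s, and your sketch offers no mechanism to convert future fractional flow into additional budget for $f_1$. The real reason the lemma holds in this example is the large \emph{surplus} $f_1-\beta_1=k-1$, which the $\max(0,\cdot)$ throws away.

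The paper's proof avoids charging altogether. It singles out the first index $i_0$ at which some saturating edge $e\in C_{i_0}$ has an interval $\mathcal H^e$ containing all of $\{C_{i_0},\ldots,C_k\}$, and proves two things: (i) by induction on $i<i_0$, $\sum_{j\le i}f_j\ge\sum_{j\le i}\beta_j$, because for $i<i_0$ the bottleneck interval must omit some index in $\{i{+}1,\ldots,k\}$, which forces it either to be contained in a prefix (so the induction hypothesis on a shorter prefix applies) or to contain the whole prefix $\{1,\ldots,i\}$ (so the saturation inequality alone suffices); and (ii) the wrap-around edge at $i_0$ gives $|\overline f(\{C_{i_0},\ldots,C_k\})|\le u(e)\le |f|$ directly. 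Adding (i) at $i=i_0-1$ to (ii) yields $|\overline f(\mathcal H)|\le 2|f|$. This route keeps the surplus at early cycles in play and never passes through your false intermediate inequality.
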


\begin{proof}
Let $\overline{f}$ be a  multiflow and $\mathcal{H}=\{C_1, C_2, \dots, C_k\}$ a cyclically ordered family of $\mathcal{C}(\overline{f})$. 
It is clear that Algorithm \ref{alg:greedy} runs in polynomial time and returns an integral multiflow. Let $f$ be this flow.  We show that its value is at least $|\overline{f}(\mathcal{H})|/2$.

Let us define $\mathcal{H}_{a,b}=\{C_a, C_{a+1}, \dots, C_{b-1}\}$ and $\mathcal{H}_{b,a}=\{C_b, C_{b+1}, \dots,C_k,C_1,\dots, C_{a-1}\}$ for all $1\le a\le b \le k$. Additionally, for all edges $e\in \bigcup_{C\in \mathcal{H}}C$,  we define $\mathcal{H}^e:= \{C\in \mathcal{H} \mid e \in C\}$. Since we assumed that $\mathcal{H}$ is cyclically ordered, we know that for each $e\in \bigcup_{C\in \mathcal{H}}C$, there are indexes $1\le a,b\le k$, such that $\mathcal{H}^e=\mathcal{H}_{a,b}$.

We call $i_0$ the smallest index $1\le i\le k$ such that there exists an edge $e\in C_i$ such that $f(\mathcal{H}_{1,i+1})(e)=u(e)$ and $\mathcal{H}_{i, 1}\subseteq \mathcal{H}^e$. Remark that in particular, for all $i> i_0$, we must have $f_{C_i}=0$, and thus $|f|=|f(\mathcal{H}_{1,i_0+1})|$.  

We first show by induction that for all $1\le i<i_0$ we have $|f(\mathcal{H}_{1,i+1})|\ge|\overline{f}(\mathcal{H}_{1,i+1})|$. 
For $i=1$, we have $|f(\mathcal{H}_{1,i+1})|=|f(\mathcal{H}_{1,2})|=f_{C_1}=\min\{u(e) | e\in C_1\}\ge \overline{f}_{C_1}= |\overline{f}(\mathcal{H}_{1,2})|$. 

Assume now that at some iteration $1<i<i_0$  
of the algorithm we set $f_{C_i}=x$. 
By the choice of $x$, we know that there is an edge $e\in C_i$ such that $u(e)=f(\mathcal{H}_{1,i+1})(e)$.  In particular, notice that  $|f(\mathcal{H}^e)|= |f(\mathcal{H}^e\cap \mathcal{H}_{1,i+1})|=u(e)$. 
By feasibility of $\overline{f}$, we have 
\begin{equation}
|f(\mathcal{H}^e\cap \mathcal{H}_{1,i+1})|=u(e)\ge |\overline{f}(\mathcal{H}^e)|.
    \label{eq:ab}
\end{equation}



Now, let $a,b$ be the two indexes such that 
$\mathcal{H}_{a,b}= \mathcal{H}^e$. 
Since we assumed that $i<i_0$, we must have $i<b\le k$. There are two cases: either $1\le a\le i < b$ or $1<i< b < a$. 

If $1\le a\le i < b$, then equation (\ref{eq:ab}) becomes $|f(\mathcal{H}_{a,i+1})|\ge |\overline{f}(\mathcal{H}^e)|\ge |\overline{f}(\mathcal{H}_{a,i+1})|$. Together with the induction hypothesis we obtain:
\begin{equation*}
|f(\mathcal{H}_{1,i+1})|=|f(\mathcal{H}_{1,a})| + |f(\mathcal{H}_{a,i+1})|
\ge |\overline{f}(\mathcal{H}_{1,a})|+ |\overline{f}(\mathcal{H}_{a,i+1})|= |\overline{f}(\mathcal{H}_{1,i+1})|. 
\end{equation*}


Otherwise if $1<i< b < a$, then $\mathcal{H}_{1, i+1}\subseteq \mathcal{H}^e$, and thus the inequality claimed follows directly from equation (\ref{eq:ab}). We have established the induction.  In particular, 
we have proved that $|f|=|f(\mathcal{H}_{1, i_0+1})|\ge |f(\mathcal{H}_{1, i_0})|\ge|\overline{f}(\mathcal{H}_{1, i_0})|$. To conclude the proof of Lemma \ref{lemma:apxratiogreedy}, it remains to show that $|f|\ge |\overline{f}(\mathcal{H}_{i_0, 1})|$. 

By definition of $i_0$, we know that there exists an edge $e\in C_{i_0}$ such that $f(e)=u(e)$ and such that $\mathcal{H}_{i_0, 1}\subseteq \mathcal{H}^e$. By feasibility of $\overline{f}$, we deduce that $|\overline{f}(\mathcal{H}_{i_0, 1})|\le u(e)=f(e)\le |f|$. This concludes the proof. 
\end{proof}

{\bf Remark.} The analysis of Algorithm \ref{alg:greedy} for cyclically ordered inputs is tight. 
To see this, imagine that $\mathcal{H}=\{C_1, \dots, C_{2k-1}\}$, and there are two edges $e_1,e_2$, both of capacity $k$, such that  $\{C \in \mathcal{H} \mid e_1 \in C\}= \{C_1, \dots, C_k\}$ and $\{C \in \mathcal{H} \mid e_2 \in C\}= \{C_{k+1}, \dots, C_{2k-1}, C_1\}$.  
Then Algorithm \ref{alg:greedy} may only set $f_{C_1}=k$ while $\overline{f}$ could be such that $\overline{f}_C=1$ for all $C\in \mathcal{H}$, for a total value $2k-1$.\\

\subsection{Computing a cyclic order}

Lemma \ref{lem:cyclicOrder}, the second main result of the section,  states that a family $\mathcal{H}$ of pairwise freely homotopic cycles crossing at most once can be cyclically ordered in polynomial time. One key ingredient in the proof is that cycles in $\mathcal{H}$ are pairwise non-crossing. 
This fact uses the assumption that the surface is orientable. In a non-orientable surface, two freely homotopic cycles may cross exactly once. 

Recall that $\overline{f}$ denotes the minimally-crossing multiflow obtained by Lemma \ref{lem:uncrossing}.

\begin{lemma}
Two freely homotopic cycles in $\mathcal{C}(\overline{f})$ do not cross.
\label{lemma:homotopicdonotcross}
\end{lemma}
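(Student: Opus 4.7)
The plan is to proceed by contradiction: assume that $C_1, C_2 \in \mathcal{C}(\overline{f})$ are freely homotopic yet cross. By Lemma~\ref{lem:uncrossing} we then have $cr(C_1,C_2) = 1$, so they cross exactly once.

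A first step is to reduce to the non-separating case. Being separating is a homotopy invariant (it is equivalent to the homology class vanishing in $H_1(\mathbb{S}_g; \mathbb{Z}/2\mathbb{Z})$), so either both cycles are separating or both are non-separating. If both are separating, Proposition~\ref{prop:crossing_separating} immediately forces $cr(C_1,C_2)$ to be even, contradicting the single crossing. Hence one may assume $C_1$ and $C_2$ are both non-separating.

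The substantive case is thus the non-separating one. Here the plan is to appeal to the skew-symmetry of the algebraic intersection pairing $H_1(\mathbb{S}_g;\mathbb{Z})\times H_1(\mathbb{S}_g;\mathbb{Z})\to\mathbb{Z}$ available on the orientable surface $\mathbb{S}_g$. After choosing an orientation of $\mathbb{S}_g$ and compatible orientations of $C_1, C_2$ provided by the free homotopy, we have $[C_1]=[C_2]$ in $H_1(\mathbb{S}_g;\mathbb{Z})$, so
\[
[C_1]\cdot [C_2] \;=\; [C_1]\cdot [C_1] \;=\; 0.
\]
On the other hand, the combinatorial crossing of Definition~\ref{def:crossing} was engineered precisely so that after contracting the common subpath $P$ to a single vertex, the two curves meet transversally at that vertex in the surface; this single combinatorial crossing therefore contributes $\pm 1$ to the algebraic intersection number, while any shared but non-crossing subpath contributes $0$. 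This yields $|[C_1]\cdot[C_2]|=1$, a contradiction, and concludes the proof.

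The main obstacle I anticipate is the careful translation between the purely combinatorial notion of crossing from Definition~\ref{def:crossing} and the topological algebraic intersection number. Concretely, one must verify that (i)~perturbing $C_1$ slightly off $C_2$ along a shared non-crossing subpath produces only cancelling pairs of transverse intersections and hence contributes $0$ to $[C_1]\cdot[C_2]$, and (ii)~a shared subpath where the alternation condition of Definition~\ref{def:crossing} holds resolves, after perturbation, into exactly one transverse intersection of well-defined sign. Both verifications are local and reduce to a check in a small disk neighborhood of the contracted vertex, where the cyclic alternation of the four edges determines the relative orientation of the two tangent directions.
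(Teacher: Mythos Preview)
Your proof is correct and takes a genuinely different route from the paper's.

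The paper argues via $\mathbb{Z}/2$-homology and Fact~\ref{fact:homotopic}. It perturbs $C$ and $C'$ to topological cycles $\gamma,\gamma'$ meeting transversally at a single point $v$, and then shows by an elementary walking argument (using orientability) that the four local regions around $v$ are all in the same component of $\mathbb{S}_g\setminus(\gamma\cup\gamma')$. Hence $\gamma\cup\gamma'$ does not disconnect the surface, contradicting Fact~\ref{fact:homotopic}, which says that the symmetric difference of freely homotopic cycles must be a dual cut. You instead invoke the $\mathbb{Z}$-coefficient intersection form: skew-symmetry forces $[C_1]\cdot[C_1]=0$, while a single combinatorial crossing resolves to a single transverse intersection of sign $\pm 1$. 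Both arguments hinge on orientability at exactly the same spot, the paper when it asserts that walking once around $\gamma$ connects $w_i$ to $w_{i+1}$ rather than to $w_{i+2}$, and you when you use that the pairing is alternating rather than symmetric.

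Two minor remarks. First, your reduction to the non-separating case is unnecessary: the intersection-form argument already covers separating cycles, since then $[C_1]=0$ and the pairing vanishes trivially. Second, the local verification you flag as the main obstacle (that a non-crossing shared subpath contributes $0$ and a crossing one contributes $\pm 1$) is exactly the content of the paper's perturbation step, where $\gamma,\gamma'$ are chosen in a small neighbourhood of $C,C'$ so that they meet in a single point; so the paper effectively carries out that translation as well, just without naming the intersection form. What your approach buys is a one-line punchline once the form is in hand; what the paper's buys is self-containment, since it only appeals to Fact~\ref{fact:homotopic} already stated in the preliminaries.
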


\begin{proof}(Lemma \ref{lemma:homotopicdonotcross})
By construction of $\overline{f}$, if two cycles $C$ and $C'$ in $\mathcal{C}(\overline{f})$ cross, then they cross at exactly one path $P$. To simplify, let us take two topological cycles $\gamma$ and $\gamma'$, freely homotopic to $C$ and $C'$,  that are in a small neighborhood around $C$ and $C'$, respectively, and such that $\gamma$ and $\gamma'$ only cross at a single point $v$ of the surface. 
We show that $\gamma\cup \gamma'$ do not disconnect the orientable surface.
By Fact \ref{fact:homotopic} this implies that $C$ and $C'$ are not freely homotopic. 

\vspace{2pt}
\noindent\begin{minipage}{0.6\linewidth}
\paragraph*{}To see that $\gamma\cup \gamma'$ do not disconnect the surface, pick four points $w_1, w_2, w_3, w_4$ in a small neighborhood of $v$, each one of them being on a different of the four sections of this neighborhood delimited by $\gamma\cup \gamma'$. 
If $(w_i)_{1\le i\le 4}$ are in clockwise order around $v$, then $w_i$ and $w_{i+1}$ are still connected for $i=1,\ldots,4$ (where $w_5:=w_1$), 
because we can walk all along $\gamma$ (or $\gamma'$). Notice that here we use the property that the surface is orientable (otherwise, $w_i$ might be connected to $w_{i+2}$ instead of $w_{i+1}$). By transitivity, we conclude that $\gamma\cup \gamma'$ do not disconnect the surface. 
\end{minipage}
~
\begin{minipage}{0.39\linewidth}

    \centering
    \includegraphics[width=4.2cm]{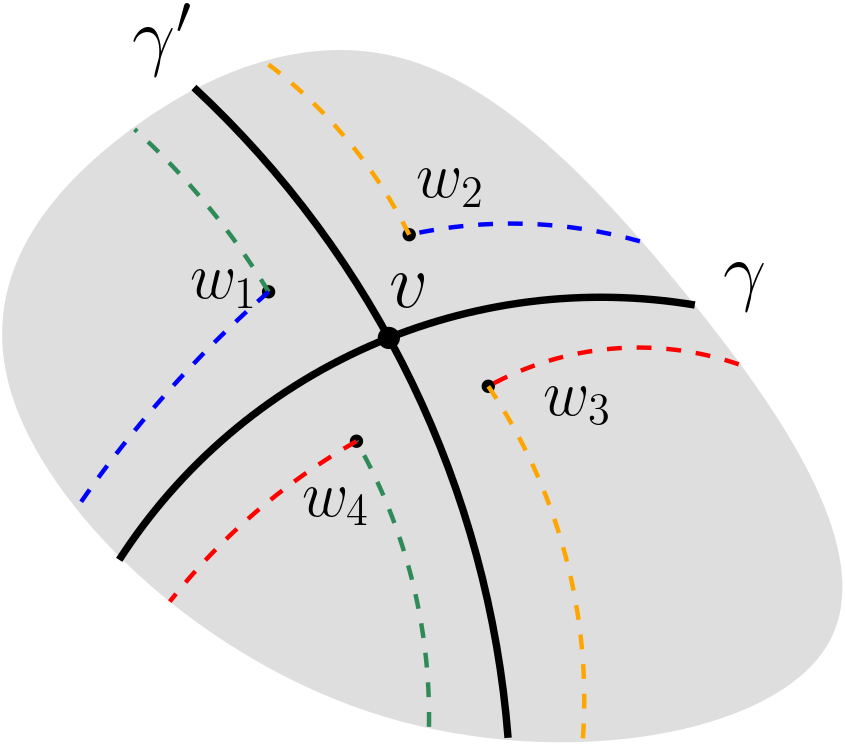}

\end{minipage}
\end{proof}

\begin{lemma} 
A family of non-separating, pairwise non-crossing and freely homotopic cycles of a graph embedded in an orientable surface can be cyclically ordered. 
Such a cyclic order can be found in polynomial time. 
\label{lem:cyclicOrder}
\end{lemma}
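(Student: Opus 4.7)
The plan is to use the fact that on an orientable surface, a family of pairwise non-crossing, freely homotopic non-separating simple closed curves can be simultaneously isotoped to a family of pairwise disjoint parallel curves, which inherits a canonical cyclic order. Concretely, I would first replace each graph cycle $C_i$ by a topological simple closed curve $\tilde\gamma_i$ in a thin tubular neighborhood of $C_i$, with $\tilde\gamma_i$ freely homotopic to $C_i$ and the $\tilde\gamma_1,\ldots,\tilde\gamma_k$ pairwise disjoint. This is achievable because the $C_i$ are pairwise non-crossing on an orientable surface: at every shared edge or vertex, the cyclic order of incident edges combined with the non-crossing property dictates a consistent local perturbation of each cycle to one side or the other.

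Once the $\tilde\gamma_i$ are pairwise disjoint and freely homotopic, iterating Fact \ref{fact:homotopic} gives that any two of them cobound an annulus, hence the family is a set of parallel non-separating curves on $\Sg$. Choose a topological simple closed curve $\tau$ on $\Sg$ that intersects each $\tilde\gamma_i$ exactly once (a transversal), which exists because the $\tilde\gamma_i$ are non-separating and pairwise parallel. Traversing $\tau$ in a fixed direction visits each $\tilde\gamma_i$ exactly once, defining a cyclic order $\pi$ on $\{1,\ldots,k\}$. To verify Definition \ref{def:cyclicorder}, I would show that for every edge $e$, the set $\mathcal{H}^e := \{C_i : e \in C_i\}$ is a cyclic interval of $\pi$. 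Choose the transversal $\tau$ so that it passes through the midpoint of $e$; in a small neighborhood of this midpoint, the only graph edge present is $e$ itself, so the only perturbations $\tilde\gamma_i$ that come near are those of cycles containing $e$. The curves of $\mathcal{H}^e$ therefore appear consecutively along $\tau$ as disjoint parallel arcs on a short segment transverse to $e$, yielding a cyclic arc in $\pi$.

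The most delicate step, and the main obstacle, is constructing the simultaneous perturbation in a combinatorially consistent way. When several cycles pairwise share edges, one cannot always choose ``opposite sides'' for each pair: the local conflict graph can be non-bipartite, as witnessed by three cycles pairwise sharing a common edge. Instead, at each edge $e$ the curves in $\mathcal{H}^e$ must be arranged as disjoint parallel arcs split between the two sides of $e$ in a pattern compatible across every shared edge simultaneously. I would define this arrangement combinatorially, using the rotation system and the non-crossing assumption to compute, at each vertex, how the $\tilde\gamma_i$'s thread through, then assemble these local decisions into a global cyclic arrangement. Polynomial running time follows immediately, since each local choice involves only a constant amount of work per vertex or edge, and both the cycle family and the embedded graph have polynomial size.
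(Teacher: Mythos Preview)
Your strategy mirrors the paper's: reduce to pairwise disjoint curves, then read off a cyclic order from the resulting topology. The paper carries out the reduction combinatorially---replacing each shared edge by parallel copies and then iteratively splitting shared vertices, guided at each step by the non-crossing condition---yielding an auxiliary embedded graph $Q$ in which the cycles are vertex-disjoint. It then proves that the bipartite incidence graph $B$ between the cycles and the connected components of $Q^*\setminus\bigcup_{C\in\mathcal{H}} C^*$ is itself a single cycle; this gives the cyclic order, and the verification of Definition~\ref{def:cyclicorder} becomes immediate from the parallel-edge construction of $Q$.

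Your version has two soft spots. First, the simultaneous perturbation is exactly the step you yourself flag as the main obstacle, and you do not actually construct it; the paper's parallel-edges-plus-vertex-splitting is precisely such a construction, and it is where the non-crossing hypothesis does its work (it dictates, at each split vertex, which incident edges go to which side). Second, your verification of the cyclic-interval property is not well-formed as written: you define $\pi$ via one fixed transversal $\tau$, but then for each edge $e$ you re-choose $\tau$ to pass through the midpoint of $e$. To repair this you would need to argue that any two transversals induce the same cyclic order up to reversal---equivalently, that the cycles and complementary regions already form a cycle graph, which is exactly what the paper proves directly. Without that step, you have only shown that for each $e$ there exists \emph{some} cyclic order making $\mathcal{H}^e$ an interval, not that a single order works for all $e$ simultaneously.
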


This result holds more generally for a family of \emph{non-contractible}\footnote{all cycles that are not freely homotopic to a point on the surface.}, pairwise non-crossing and freely homotopic cycles. For simplicity, we only consider the special case of non-separating cycles, which is sufficient for our main result. 

\begin{proof}
Let $\mathcal{H}$ be a set of non-separating, pairwise freely homotopic and non-crossing cycles. 
We first order the cycles in $\mathcal{H}$ and then prove that this is a cyclic order. 
We assume that $|\mathcal{H}|\ge 3$, otherwise any order on $\mathcal{H}$ is a cyclic order. 

In topology it is usually more convenient to work with disjoint cycles. 
If two (graph) cycles do not cross, but may share common edges, it is possible to continuously deform by free homotopy one of them, into an arbitrarily small open neighborhood so that the two resulting (topological) cycles are now disjoint.

In the context of graph cycles, we now give a reduction from the setting of Lemma~\ref{lem:cyclicOrder} to the special case where the cycles are disjoint.
Initially, $Q=G+H$. 
\begin{description}
\item[Step 1:] If an edge is shared by $s$ cycles,  
replace it $s$ parallel edges. Each of these edges corresponds to a different cycle so that the resulting set of cycles is still pairwise non-crossing. 
Now the cycles are pairwise edge-disjoint but may still share some vertices. 
\item[Step 2:] Let $v$ be a vertex shared by two cycles $C$ and $C'$. Edges incident to $v$ are embedded around $v$ in the cyclic order $e_1, a_1, \dots, a_i, e_2, b_1, \dots, b_j$ where $C\cap \delta(v)=\{e_1,e_2\}$. 
Since $C$ and $C'$ do not cross, we have $C'\cap \delta(v)\subseteq \{a_1, \dots, a_i\}$ or $C'\cap \delta(v)\subseteq \{b_1, \dots, b_j\}$. 
Then replace $v$ by two adjacent vertices $v',v''$ and distribute the incident edges so that 
$\delta(v')=(e_1, a_1, \dots, a_i, e_2, \{v',v''\})$ and $\delta(v'')=(\{v',v''\}, b_1, \dots, b_j)$. Repeat step 2 until all cycles are vertex-disjoint.
\end{description}

It is easy to see that this graph is connected (since $G+H$ is connected) and can be embedded in the same surface $\Sg$. Figure \ref{fig:disjointness} illustrates the construction of $Q$. 
Moreover, a cyclic ordering of the resulting cycles naturally induces a cyclic ordering of {\cal H}. This completes the reduction. 
For simplicity, let us also call $\mathcal{H}$ the family of cycles in $Q$.

In the dual $Q^*$, 
let ${\cal K}$ denote the set of connected components of $Q^*\setminus \left(\bigcup_{C\in \mathcal{H}} C^*\right)$. They correspond to the connected components of $\Sg\setminus \left( \bigcup_{C\in \mathcal{H}}C\right)$. 
We say that a cycle $C\in \mathcal{H}$ is \emph{incident} to a connected component $K\in {\cal K}$ if there is an edge in $C^*$ with one endpoint in $K$. 
Consider the bipartite graph $B$ that has a vertex for each cycle in $\mathcal{H}$ and a vertex for each element of ${\cal K}$, and whose edges represent the incidence relation. 
Next we show that the graph $B$ is a cycle, 
and we order the $D$-cycles in $\mathcal{H}$ according to the cyclic order induced by $B$.

\begin{figure}
    \centering
    \includegraphics[width=.7\textwidth]{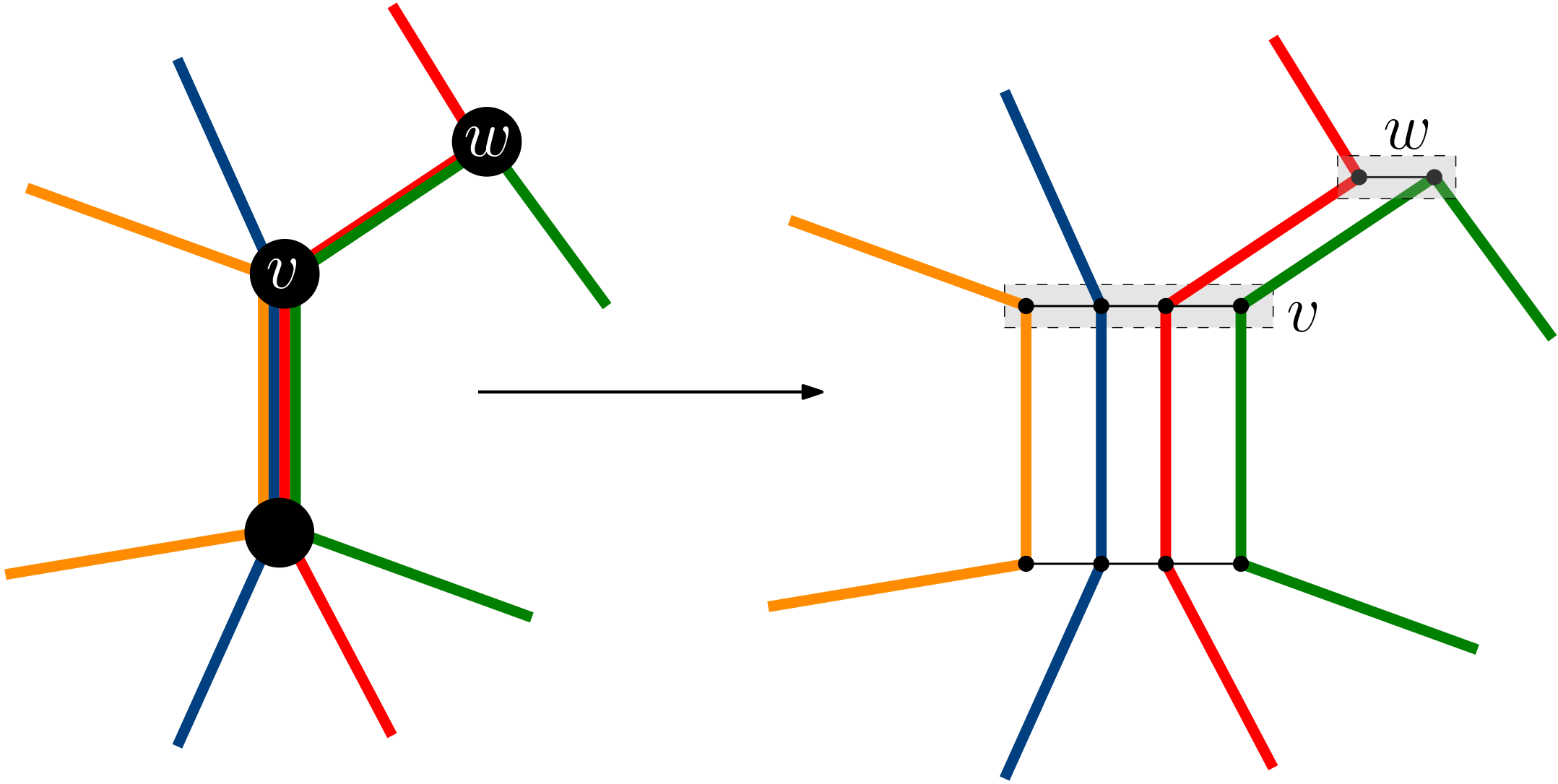}
    \caption{Construction of $Q$.
    }
    \label{fig:disjointness}
\end{figure}

\begin{claim}
$B$ is a cycle.
\label{claim:kcomponents}
\end{claim}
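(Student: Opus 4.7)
The plan is to show that $B$ is connected and $2$-regular; being bipartite with balanced sides and (by assumption $|\mathcal{H}|\ge 3$) enough vertices, it will then be forced to be a simple cycle of length $2|\mathcal{H}|$.

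First I would verify that every cycle-vertex has degree $2$ in $B$. Since $\Sg$ is orientable and each $C\in\mathcal{H}$ is simple, a tubular neighborhood of $C$ in $\Sg$ is homeomorphic to an annulus (rather than a Möbius band), so $C$ has exactly two local sides. Each side is contained in a unique connected component of $\Sg\setminus\bigcup_{C'\in\mathcal{H}}C'$, contributing exactly one incidence in $B$.

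Next I would show that every component-vertex has degree $2$, by induction on $|\mathcal{H}|$. For the base case $|\mathcal{H}|=1$ the complement of a single non-separating cycle is one connected piece bordered by that cycle on both sides. For the inductive step, remove some $C_k\in\mathcal{H}$: by induction, each component of $\Sg\setminus\bigcup_{i\neq k}C_i$ is bordered by exactly two cycles of $\mathcal{H}\setminus\{C_k\}$. The cycle $C_k$ lies in one of these components, call it $K^*$, whose frontier is some pair $\{C_a,C_b\}$. By Fact~\ref{fact:homotopic}, the disjoint freely homotopic pair $\{C_a,C_k\}$ cobounds an annulus in $\Sg$; I would argue this annulus must lie inside $\overline{K^*}$, since otherwise it would leave $K^*$ across $C_a$ or $C_b$ and could only return to $C_k$ by crossing one of them again, contradicting the disjointness of the cycles in $\mathcal{H}$. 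Hence $C_k$ is parallel to $C_a$ inside $K^*$ and separates $K^*$ into the annulus bordered by $C_a\cup C_k$ and a second piece bordered by $C_b\cup C_k$, each with exactly two bordering cycles. All other components of the decomposition are unchanged, completing the induction. I expect this step to be the main technical difficulty, because it requires carefully locating the annulus guaranteed by free homotopy within the existing decomposition and justifying that it cannot protrude outside $K^*$.

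To see that $B$ is connected, suppose for contradiction $V(B)=V_1\sqcup V_2$ with both parts nonempty and no edge between them. Then the two sets $\Sigma_j:=\bigcup_{C\in V_j\cap\mathcal{H}}C\cup\bigcup_{K\in V_j\cap\mathcal{K}}\overline{K}$ for $j=1,2$ are disjoint, closed, and cover $\Sg$, contradicting its connectedness (which was reduced to in the paragraph preceding the claim). Combining the three steps, $B$ is a connected $2$-regular bipartite multigraph with $2|\mathcal{H}|\ge 6$ vertices, and therefore a simple cycle.
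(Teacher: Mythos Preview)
Your overall plan—show $B$ is connected and $2$-regular—matches the paper, and your treatment of connectivity and of cycle-vertex degree is fine. The genuine gap is in the inductive step for component-vertex degree $2$. First, Fact~\ref{fact:homotopic} only says $C_a\cup C_k$ is a simple dual cut, not that the pair cobounds an embedded annulus. More importantly, even granting an embedded annulus $A$ with $\partial A=C_a\cup C_k$, the containment $A\subseteq\overline{K^*}$ can fail. Take $g=2$ and three parallel non-separating curves in cyclic order $C_a,C_b,C_k$, so that $C_b$ lies between $C_a$ and $C_k$. After removing $C_k$, the component $K^*$ of $\Sg\setminus(C_a\cup C_b)$ containing $C_k$ is the genus-$1$ piece; but the unique annulus bounded by $C_a\cup C_k$ contains $C_b$ and hence the small annulus between $C_a$ and $C_b$, which lies \emph{outside} $K^*$. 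Your justification (``leave $K^*$ \ldots\ return to $C_k$ by crossing one of them again'') does not apply here: a radial arc of $A$ from $C_k$ to $C_a$ crosses $C_b$ exactly once and never needs to return.

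The paper avoids induction for this step. Assuming some region borders three cycles $C,C',C''$, it restricts to $\Sg\setminus(C\cup C'\cup C'')$, which has two or three components; three is impossible since one of the remaining components would then be bounded by a single cycle (forcing that cycle to separate), and two is impossible since reinserting $C''$—whose two sides lie in different components by the cycle-degree argument—would make $\Sg\setminus(C\cup C')$ connected, contradicting Fact~\ref{fact:homotopic}. Your induction can be salvaged without the annulus-containment claim: since $K^*\setminus C_k\subseteq \Sg\setminus(C_a\cup C_k)$ and the two sides of $C_k$ lie in different components of the latter (by the paper's cycle-degree argument applied to the pair $\{C_a,C_k\}$), they lie in different components of $K^*\setminus C_k$; then observe that $C_a$ and $C_b$ must attach to different pieces, since otherwise the remaining piece would have boundary contained in $C_k$, making $C_k$ separating.
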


The connectivity of $B$ follows by construction from the connectivity of $Q$. Then it is enough to prove that this graph is $2$-regular. 

We first prove that 
each vertex of $B$ that corresponds to a 
cycle in $\mathcal{H}$ has degree two in $B$. 
Since the cycles in $\mathcal{H}$ are disjoint, each cycle $C$ has one component on its left, and one on its right, when we walk along the cycle. 
Assume, for a contradiction, that they are the same component:  $C$ is incident to only one component of $\Sg\setminus \left(\bigcup_{C\in \mathcal{H}} C\right)$. 
This cycle is also incident to only one component of $\Sg\setminus (C\cup C')$ where $C'$ is any other cycle in $\mathcal{H}$.
By Fact \ref{fact:homotopic}, we know that $\Sg\setminus (C\cup C')$ has two connected components. But since $C$ is incident to only one connected component of $\Sg\setminus (C\cup C')$, $\Sg\setminus C'$ must also have two connected components, which contradicts the assumption that $C'$ is non-separating. 
Thus, each cycle in $B$ must have degree two.

Now we prove that each element of $\mathcal{K}$ has degree two. For a contradiction, assume that an element of ${\cal K}$ is incident to three cycles $C, C', C''$ or more.
Then one component of $Q^*\setminus (C\cup C' \cup C'')$ is also incident to $C, C'$ and $C''$,
and $Q^*\setminus (C\cup C' \cup C'')$ has two or three components in total. 
If it has three components, then one of the other two components would be incident to exactly one cycle, 
which would mean that this cycle is separating, a contradiction. 
If $Q^*\setminus \left(C\cup C'\cup C''\right)$ has exactly two connected components, then $Q^*\setminus(C\cup C')$ must be connected which contradicts Fact \ref{fact:homotopic}. 
Thus, each component is incident to exactly two cycles. This concludes the proof of the claim.

It remains to show that the order induced by $B$ satisfies the property of Definition \ref{def:cyclicorder}.  
If an edge $e=\{u,v\}$ of $G+H$ is shared by some cycles $C'_1, \dots, C'_\ell$, then the vertex $v$ can be mapped to a path $P=(v_1,\dots,v_\ell)$ in $Q$, so that $C'_i\cap P=\{v_i\}, 1\le i \le \ell$. See Figure \ref{fig:disjointness}. 
It follows that for all $1\le i \le \ell-1$, $C'_i$ and $C'_{i+1}$ are both incident the same connected component of $Q^*\setminus \left(\bigcup_{C\in\mathcal{H}} C\right)$ that contains the edge $\{v_i,v_{i+1}\}^*$. 
In particular, $C'_i$ and $C'_{i+1}$ are consecutive in the order induced by $B$. 
\end{proof}

\section{Proof of Theorem~\ref{thm:main}}\label{section:proofmain}

By construction, the output of the algorithm is a feasible solution. 
We now analyze the value of the output. 
Since~(\ref{equ:multiFlowLP}) is a relaxation of the maximum integral multiflow problem, $|f^*|\geq \text{OPT}$. 
By Lemma~\ref{lem:uncrossing}, $|\overline{f}|\geq (1-\epsilon) |f^*|$. 
For $\epsilon=\frac{1}{2}$ we have $|\overline{f}|\geq \frac{1}{2}|f^*|$.

Consider the multiflow restricted to separating cycles, $\overline{f}_{\text{sep}}$.
If $|\overline{f}_{\text{sep}}|\geq \frac{1}{2}|\overline{f}|$, 
then by Theorem~\ref{theorem:halfinteger}, Lemma~\ref{lemma:integralflow}, and Theorem~\ref{thm:mapcolor} 
we obtain an integral flow of value at least $|\overline{f}_{\text{sep}}|/\Theta(\sqrt{g})$.

Otherwise, by Theorem~\ref{theorem:topologycurves} there exists a 
free homotopy class $\mathcal{H}$ of non-separating cycles such that 
$|\overline{f}(\mathcal{H})|\geq |\overline{f}|/\Theta(g^2\log g)$. 
Use Lemmas \ref{lemma:apxratiogreedy} and \ref{lem:cyclicOrder} to obtain that the output has value at least $|f^*|/\Theta(g^{2}\log g)$.

Finally, we analyze the running time. As observed in Section~\ref{section:fractionalmultiflow}, an optimum fractional multiflow $f^*$ can be found in polynomial time. 
(Discretizing and) uncrossing is done in time polynomial in $|E||D|$ by Lemma \ref{lem:uncrossing}. 
Partitioning into free homotopy classes is done by Corollary \ref{cor:homotopypolytime}.
Finally, the operations of Theorem~\ref{theorem:halfinteger}, Theorem~\ref{thm:mapcolor}, Lemma~\ref{lemma:integralflow}, Lemma \ref{lemma:apxratiogreedy} and Lemma  \ref{lem:cyclicOrder} can all be done in polynomial time, hence polynomial running time overall. 
This concludes the proof of Theorem~\ref{thm:main}.

\paragraph{Lower bound on the integrality gap.} 
We note that 
the gap between an integral and a fractional multiflow can depend at least linearly on $g$.

For any $n\ge 1$, we define a graph $G_n$ as in \cite{Auslander63}.  This graph consists of $n$ concentric cycles (rings) and $4n$ radial line segments that intersect each cycles, and each has endpoint $s_i$ or $t_i$, for $1\le i\le 2n$. See Figure \ref{fig:high_genus}. We now define the demand edges $H_n=\{(s_i,t_i), 1\le i\le 2n\}$.  The graph $G_n+H_n$ can be embedded in the projective plane but cannot be embedded in an orientable surface of genus smaller than $n$; see  \cite{Auslander63} for a proof. 

\begin{figure}[h]
    \centering
    \includegraphics[width=0.8\textwidth]{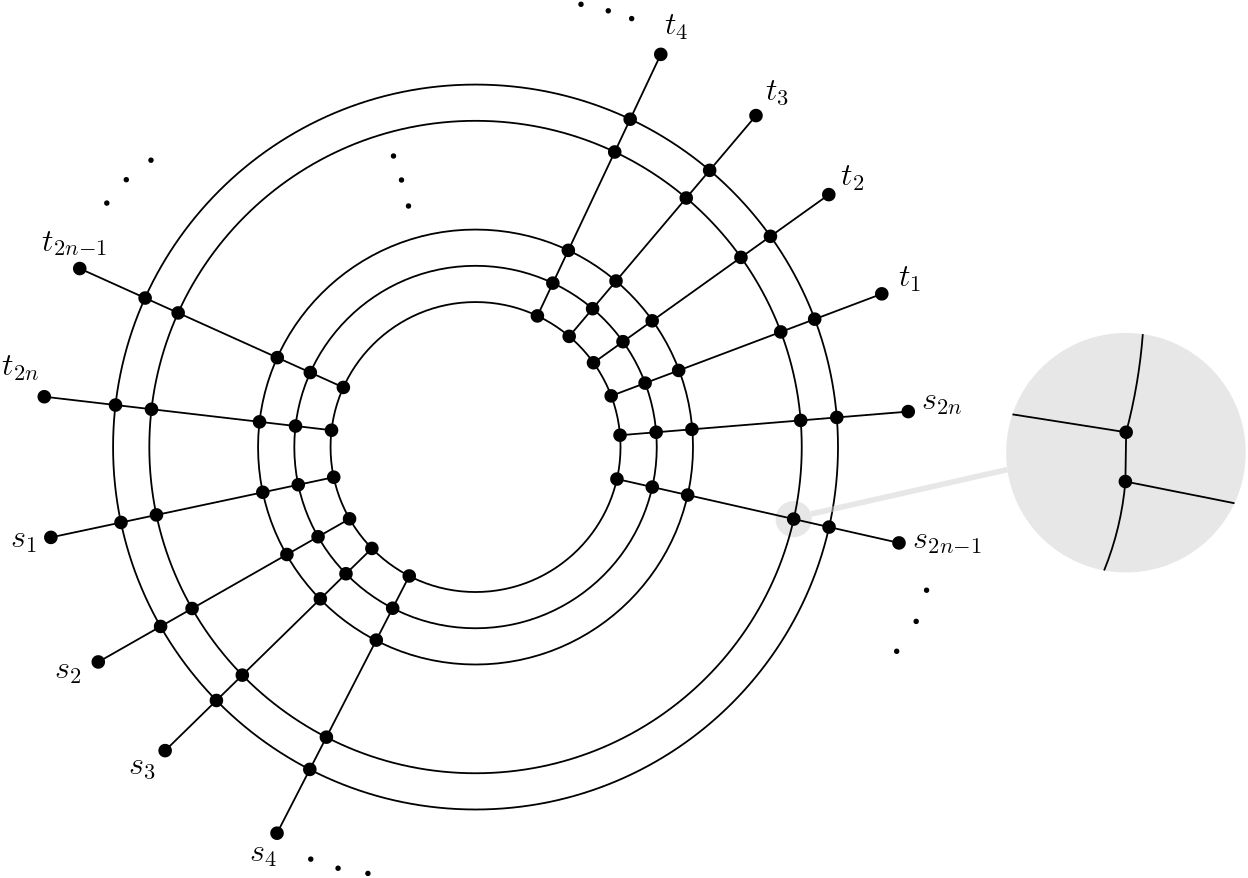}
    \caption{The graph $G_n$. To obtain the graph $G'_n$, we split each degree-4 vertex of $G_n$ into two vertices, joined by a new edge, such that two of the four incident edges are incident to each of the two new vertices.}
    \label{fig:high_genus}
\end{figure}

Now, to obtain a large integrality gap, we define a new graph $G'_n$ by splitting each degree-4 vertex of $G_n$ into two vertices, joined by a new edge, such that two of the four incident edges are incident to each of the two new vertices (similarly as in an example of \cite{Garg1997}).  
We have the following properties:
\begin{enumerate}
    \item[(1)] $G'_n+H_n$ has orientable genus at least $n$. This holds since $G_n+H_n$ is a minor of $G'_n+H_n$. 
    \item[(2)] In an integral solution we can satisfy only one demand: any $(s_i, t_i)$-path must cross with any $(s_j,t_j)$-path, for $j\neq i$. 
    \item[(3)] A fractional solution of value $n$ exists: two commodities share a ring, and for each commodity we route $1/4$ on that ring in each direction.
\end{enumerate}

\section{Proof of Corollary~\ref{cor:flowCutGap}\label{sec:conclusion}}

In this section, we observe how Corollary~\ref{cor:flowCutGap} follows from Theorem \ref{thm:main} and
the following result by Tardos and Vazirani~\cite{tardos1993} 
(based on work by Klein, Plotkin and Rao \cite{KleinPlotkinRao}). 

\begin{theorem}~\cite{tardos1993} 
Let $(G,H,u)$ be a multiflow instance and $\gamma>1$ 
such that the supply graph $G$ does not have a $K_{\gamma,\gamma}$ minor.
Then the minimum capacity of a multicut is $O(\gamma^3)$ times 
the maximum value of a (fractional) multiflow. 
\label{thm:TVtheorem}
\end{theorem}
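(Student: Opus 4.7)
}

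The plan is to combine LP duality with a region-growing rounding whose analysis exploits the $K_{\gamma,\gamma}$-minor-free structure of the supply graph.

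\textbf{Step 1 (LP duality).} By strong duality applied to the pair \eqref{equ:multiFlowLP}--\eqref{equ:multiCUTLP}, the maximum fractional multiflow equals the minimum fractional multicut. Hence it suffices to show that the minimum \emph{integral} multicut has capacity at most $O(\gamma^3)$ times the minimum fractional multicut. Let $y^*$ be an optimal fractional multicut and set $V^*=\sum_e u(e)y^*_e$. Interpret $y^*_e$ as a length on each edge $e\in E\,\dot\cup\,D$; the cycle constraints of~\eqref{equ:multiCUTLP} say exactly that every $D$-cycle has $y^*$-length at least $1$.

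\textbf{Step 2 (Short demand edges are cut directly).} Fix a parameter $\Delta\in(0,1)$ (eventually $\Delta=\tfrac12$). For every demand edge $d=\{s,t\}\in D$ with $y^*_d\geq 1-\Delta$, add $d$ to an edge set $F_D\subseteq D$. By construction $u(F_D)\leq \tfrac{1}{1-\Delta}\sum_{d\in D}u(d)y^*_d\leq \tfrac{V^*}{1-\Delta}$. Every remaining demand $d=\{s,t\}$ satisfies $y^*_d<1-\Delta$, so the cycle constraint forces every $s$--$t$ path in the supply graph $G$ to have $y^*$-length at least $\Delta$.

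\textbf{Step 3 (KPR-style decomposition of $G$).} The core technical step is to exhibit an edge set $F_E\subseteq E$ with $u(F_E)=O(\gamma^3)\cdot V^*$ such that every connected component of $G\setminus F_E$ has $y^*$-diameter strictly less than $\Delta$. Once this is done, the remaining demand pairs (those not cut in Step~2) are automatically separated by $F_E$, and $F:=F_E\cup F_D$ is an integral multicut of capacity $O(\gamma^3)\,V^*$.

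The decomposition is obtained by an iterated region-growing procedure in the spirit of Klein--Plotkin--Rao. Pick an arbitrary vertex $v$ and grow a ball $B_r(v)$ in the $y^*$-metric of $G$, scaled by the capacities $u$; by a standard threshold argument there exists a radius $r<\Delta/(2\gamma)$ such that the capacity of edges on the boundary $\partial B_r(v)$ is at most an $O(\gamma/\Delta)$ factor times the $y^*$-weighted volume of $B_r(v)$. Cut these boundary edges, recurse on the component containing $v$ (which now has small radius but might have large diameter), and recurse on the remainder. The key structural claim, inherited from Klein--Plotkin--Rao, is that after at most $\gamma$ nested ball growths one can always find a contractible subgraph whose further subdivision would force a $K_{\gamma,\gamma}$ minor in $G$; this bounds the recursion depth by $O(\gamma)$. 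Charging boundary cost against volume at each of the $O(\gamma)$ levels and summing the geometric series yields $u(F_E)=O(\gamma^2/\Delta)\cdot V^\ast$, which is $O(\gamma^2)V^\ast$ for $\Delta=\tfrac12$; the extra factor of $\gamma$ in the final $O(\gamma^3)$ bound accounts for the non-uniformity of capacities versus lengths when converting the ball-growing guarantee into a diameter guarantee (covering the component containing $v$ by $O(\gamma)$ balls, each of radius $\Delta/(2\gamma)$).

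\textbf{Main obstacle.} The multiflow/multicut rounding in Steps~1, 2, and~4 (assembling $F_E\cup F_D$) is routine once the decomposition is available, so the entire difficulty lies in Step~3. Concretely, the hard part is proving the Klein--Plotkin--Rao lemma: that any edge-weighted graph excluding $K_{\gamma,\gamma}$ as a minor admits, for every $\Delta>0$, a partition of its vertices into clusters of $y^*$-diameter less than $\Delta$ whose inter-cluster edge capacity is at most $O(\gamma^3)\Delta^{-1}$ times $\sum_e u(e)y^*_e$. Making the recursive ``contract a ball, build a branch of a potential $K_{\gamma,\gamma}$ minor'' argument work — especially handling the non-uniform capacities $u(e)$ in the threshold analysis and verifying that the contracted sets really form the two sides of a forbidden minor once the recursion bottoms out — is where all the topology/graph-minors content of the theorem sits. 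Everything else in the proof is bookkeeping against $V^*$.
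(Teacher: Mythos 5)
First, note that the paper does not prove Theorem~\ref{thm:TVtheorem} at all: it is imported verbatim from Tardos and Vazirani~\cite{tardos1993} (itself built on Klein--Plotkin--Rao~\cite{KleinPlotkinRao}), and Section~\ref{sec:conclusion} only \emph{uses} it, via Claim~\ref{claim:minor}, to derive Corollary~\ref{cor:flowCutGap}. So the right benchmark for your proposal is the cited proof, and there your outline does trace the correct route: reduce by LP duality between~\eqref{equ:multiFlowLP} and~\eqref{equ:multiCUTLP} to bounding the integral multicut against the fractional one, delete demand edges whose dual length is already close to $1$, and then decompose the supply graph, using the exclusion of a $K_{\gamma,\gamma}$ minor, into low-diameter clusters in the $y^*$-metric while cutting edges of total capacity $O(\gamma^3/\Delta)\cdot V^*$.

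However, as a proof the proposal has a genuine gap, and you essentially say so yourself: all of the content sits in Step~3, and Step~3 is asserted, not proved. The sentence ``after at most $\gamma$ nested ball growths one can always find a contractible subgraph whose further subdivision would force a $K_{\gamma,\gamma}$ minor'' is not an argument; the actual Klein--Plotkin--Rao induction contracts the grown balls at each of the $\gamma$ recursion levels and shows that if some component still contained a far-apart demand pair after $\gamma$ levels, the nested contracted balls together with connecting paths would yield the forbidden $K_{\gamma,\gamma}$ minor --- constructing that minor is the heart of the proof and is absent here. Two further inaccuracies would need repair even as an outline: (i) the decomposition does not give components of small \emph{strong} $y^*$-diameter as you claim, only small \emph{weak} diameter (distances measured in all of $G$), which is what one must use --- and which still suffices, since a surviving demand pair has $d_{y^*}(s,t)\ge\Delta$ in $G$; and (ii) the bookkeeping of where the three factors of $\gamma$ come from (``$O(\gamma)$ levels'' times ``$O(\gamma/\Delta)$ boundary-to-volume'' plus an extra $\gamma$ for ``non-uniform capacities'') does not match the actual charging in~\cite{KleinPlotkinRao,tardos1993}, where the $\gamma^3$ arises from the recursion depth $\gamma$ combined with the $O(\gamma^2)$ (weak-)diameter blow-up of the clusters; capacities versus lengths is handled by the standard volume argument and costs no extra factor. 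In short: correct strategy, correctly attributed, but the decomposition lemma --- the only non-routine step --- is left unproven, so the proposal does not stand on its own; for this paper's purposes the appropriate move is exactly what the authors do, namely cite~\cite{tardos1993}.
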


The following is well known.

\begin{claim} If a graph $G$ has genus at most $g$, where $g \geq 1$, 
then it has no $K_{\gamma,\gamma}$ minor for any $\gamma>2(\sqrt{g}+1)$. 
\label{claim:minor}
\end{claim}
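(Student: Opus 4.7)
The plan is to reduce the claim to a genus lower bound on $K_{\gamma,\gamma}$ itself, and then establish that lower bound by a direct Euler-characteristic computation exploiting the fact that $K_{\gamma,\gamma}$ is bipartite.

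First I would use the classical fact that the genus of a graph is minor-monotone: if $G$ has genus at most $g$ and contains $K_{\gamma,\gamma}$ as a minor, then $K_{\gamma,\gamma}$ also has genus at most $g$. Hence it suffices to show that the genus of $K_{\gamma,\gamma}$ strictly exceeds $g$ whenever $\gamma > 2(\sqrt{g}+1)$, and the claim follows by contrapositive.

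Next I would fix a cellular embedding of $K_{\gamma,\gamma}$ on a surface of some genus $g'$ and apply Euler's formula $V - E + F = 2 - 2g'$, with $V = 2\gamma$ and $E = \gamma^2$. Because $K_{\gamma,\gamma}$ is bipartite, every face boundary has length at least $4$, which gives the face bound $F \leq E/2 = \gamma^2/2$ from the incidence counting $\sum_f \mathrm{len}(f) = 2E$. Substituting yields
\[
2 - 2g' \;\leq\; 2\gamma - \gamma^2 + \tfrac{\gamma^2}{2} \;=\; 2\gamma - \tfrac{\gamma^2}{2},
\]
which rearranges to $g' \geq \bigl(\tfrac{\gamma}{2} - 1\bigr)^2$, i.e.\ $\gamma \leq 2(\sqrt{g'}+1)$. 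Consequently, any embedding of $K_{\gamma,\gamma}$ with $\gamma > 2(\sqrt{g}+1)$ forces $g' > g$, which is the desired strict genus bound.

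I do not anticipate a real obstacle here: the only subtlety is to invoke minor-monotonicity of genus correctly (it holds for orientable genus, which is the setting of the paper) and to remember that the face-length inequality uses only that $K_{\gamma,\gamma}$ is simple and bipartite, so it is valid in any $2$-cell embedding on an orientable surface.
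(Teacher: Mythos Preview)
Your proposal is correct and follows essentially the same approach as the paper: both argue by minor-monotonicity of the orientable genus and then apply Euler's formula to $K_{\gamma,\gamma}$, using bipartiteness to bound $F\le \gamma^2/2$ and conclude $\gamma\le 2(\sqrt{g}+1)$. Your write-up is slightly more explicit about the face-length counting, but the argument is identical in substance.
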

\begin{proof} Suppose that such a minor $K_{\gamma,\gamma}$ exists in $G$. As the three operations for obtaining 
a minor (deleting edges/vertices and contracting edges) do not increase the genus, $K_{\gamma,\gamma}$ has genus at most $g$. 
Furthermore, $K$ has $2\gamma$ vertices, $\gamma^2$ edges, 
and at most $\frac{\gamma^2}{2}$ faces (since there is no odd cycle in a bipartite graph). 
By Euler's formula, 
$2-2g \leq 2\gamma  - \gamma^2 + \frac{\gamma^2}{2}$,
which implies $\gamma\leq 2(\sqrt{g}+1)$. 
\end{proof}

By Claim~\ref{claim:minor} and Theorem~\ref{thm:TVtheorem},
the ratio between the minimum capacity of a multicut and the maximum value of a (fractional) multiflow is $O(g^{1.5})$. 
This, combined with Theorem~\ref{thm:main}, proves Corollary~\ref{cor:flowCutGap}.

\section{An improved approximation ratio (Proof of Theorem~\ref{thm:g-square-apx})}\label{sec:g-square-apx}

Theorem \ref{thm:main} yields an approximation ratio of $O(g^2\log g)$ for the maximum integer multiflow problem 
for instances where $G+H$ is embedded on an orientable surface of genus $g$.
Here we show how to improve this ratio to $O(g^2)$, proving Theorem~\ref{thm:g-square-apx}. 

Namely, after applying Corollary~\ref{cor:homotopypolytime},
consider the $O(g^2\log g)$ free homotopy classes of the non-separating cycles in the support of our uncrossed multiflow,
and take a representative cycle in each class.
Let $I$ be the graph whose vertices are these free homotopy classes
and whose edges correspond to pairs of classes whose representative cycles cross.
This definition does not depend on the choice of the representative cycles.

Now a theorem of Przytycki~\cite{Piotr15} says that this graph has maximum degree $O(g^2)$.


\begin{theorem}[\cite{Piotr15}]
There is a universal constant $\beta>0$ such that the following is true. 
Let $g\ge 1$ and let $\Gamma$ be a family of simple curves on $\Sg$ such that any two of them are not freely homotopic and cross at most once. 
Then, the maximum degree of the intersection graph of $\Gamma$ is at most $\beta g^2$. 
\label{thm:maxdegree_curves}
\end{theorem}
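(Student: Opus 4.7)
The plan is to fix an arbitrary curve $\gamma_0 \in \Gamma$ and bound the number of curves in $\Gamma$ that intersect $\gamma_0$. By the hypothesis that any two curves cross at most once, every neighbor of $\gamma_0$ in the intersection graph crosses $\gamma_0$ exactly once. The strategy is to reduce the closed-curve counting problem on $\mathbb{S}_g$ to an arc-counting problem on a surface of strictly smaller complexity, obtained by cutting along $\gamma_0$, and then to prove the arc bound by induction on the Euler characteristic.

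First I would perform the cutting reduction. If $\gamma_0$ is non-separating, cutting $\mathbb{S}_g$ along $\gamma_0$ yields a connected surface $\Sigma'$ of genus $g-1$ with two boundary circles $b_1, b_2$ (the two sides of $\gamma_0$). Each neighbor $\delta$ of $\gamma_0$ becomes a properly embedded simple arc $\tilde\delta$ in $\Sigma'$ with one endpoint on $b_1$ and one on $b_2$. If $\gamma_0$ is separating, cutting yields two surfaces of genera $g_1, g_2$ with $g_1+g_2=g$, each with one boundary circle, and each neighbor becomes a simple arc in one of the two pieces with both endpoints on the boundary. In both cases the collection of arcs obtained satisfies: (i) distinct neighbors give pairwise non-isotopic arcs rel boundary, because gluing two isotopic arcs back along $\gamma_0$ would produce freely homotopic closed curves, contradicting the assumption on $\Gamma$; (ii) any two of the arcs cross at most once in the interior of $\Sigma'$, since the unique crossing with $\gamma_0$ has been removed and all other crossings are preserved.

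The core combinatorial step is then the following \emph{arc lemma}: on any compact orientable surface with boundary of Euler characteristic $\chi < 0$, the number of pairwise non-isotopic simple arcs (with endpoints on prescribed boundary components) that can be arranged to pairwise cross at most once is $O(\chi^2)$. I would prove this by induction on $-\chi$. The base case of a pair of pants or an annulus is essentially trivial since only finitely many arc isotopy classes with a fixed endpoint pattern exist there. For the induction step, given a family $\mathcal{A}$ of such arcs, I would distinguish two regimes. If there exist two arcs $\alpha, \beta \in \mathcal{A}$ that are disjoint, then cutting along $\alpha$ strictly increases the Euler characteristic, and the remaining arcs descend to a family on the cut surface satisfying the same hypotheses, so the induction hypothesis gives the bound on $|\mathcal{A}| - 1$. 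If no two arcs are disjoint, then every pair in $\mathcal{A}$ crosses exactly once; in this ``sunflower'' regime, I would bound $|\mathcal{A}|$ directly by showing that the arcs are organized in a ``fan'' around a single boundary arc or intersection point, giving at most a linear-in-$\chi$ number of classes.

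The main obstacle is exactly the sunflower regime: showing that once every pair of arcs crosses, the family cannot be too large. The plausible tool here is a surgery/swap operation: given three pairwise crossing arcs, one can perform a surgery at one of the triple-pair crossings to produce a new arc that is disjoint from one of the originals, contradicting maximality, or else one derives a strong structural constraint on the endpoints. Formalizing this surgery while maintaining the ``pairwise crossing at most once'' invariant is delicate, and is the combinatorial heart of Przytycki's argument. Combining the arc lemma with the reduction above yields that $|N(\gamma_0)| = O((2-2(g-1)-2)^2) = O(g^2)$ in the non-separating case, and $O(g_1^2 + g_2^2) = O(g^2)$ in the separating case, proving the theorem with a universal constant $\beta$.
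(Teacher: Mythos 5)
First, note that the paper does not prove this statement at all: Theorem~\ref{thm:maxdegree_curves} is quoted from Przytycki~\cite{Piotr15}, so there is no internal proof to compare with, and your attempt has to stand on its own. Your high-level reduction (fix $\gamma_0$, cut $\Sg$ along it, turn each neighbour into a properly embedded arc, and invoke a quadratic bound on systems of arcs pairwise crossing at most once) is indeed the standard route by which the degree bound is derived from Przytycki's arc theorem. But as written it has genuine gaps. Your claim (i) is false: two curves that differ by a Dehn twist along $\gamma_0$ are not freely homotopic, each crosses $\gamma_0$ once, and they cross each other exactly once (since $i(T_{\gamma_0}(\delta),\delta)=i(\gamma_0,\delta)^2=1$), yet after cutting they become isotopic arcs, because the twist is supported in a collar of the new boundary and collar twists are absorbed by letting endpoints slide along the boundary. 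So the map from neighbours to arc isotopy classes is not injective; gluing isotopic arcs back recovers the curves only up to powers of $\gamma_0$ inserted at the crossing. This is repairable (curves $T_{\gamma_0}^m(\delta)$ and $T_{\gamma_0}^n(\delta)$ cross $|m-n|$ times, so at most two neighbours can share an arc class), but you need that argument, and you don't give it.

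The more serious problems are in your ``arc lemma''. In the induction step you say that if two arcs $\alpha,\beta$ are disjoint you cut along $\alpha$ and ``the remaining arcs descend to a family on the cut surface satisfying the same hypotheses''. They do not: every arc that crosses $\alpha$ (and in your setting a typical arc does) is severed into two pieces and does not descend to a single properly embedded arc on the cut surface; moreover, even arcs disjoint from $\alpha$ can become isotopic to each other or boundary-parallel after cutting, so the non-isotopy hypothesis is not preserved either. So the induction as described does not go through. Finally, the regime where all arcs pairwise cross exactly once is precisely where the quadratic bound requires real work, and you explicitly defer it (``the combinatorial heart of Przytycki's argument''), offering only a vague surgery heuristic with no invariant-preserving construction and no counting. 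Since that step is the theorem's actual content, the proposal is a plausible reduction plus a citation-shaped hole rather than a proof; to make it complete you would either have to carry out Przytycki's hyperbolic-geometry/combinatorial counting for arcs, or find a genuinely new argument for the pairwise-crossing case.
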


Hence we can color the vertices of this graph $I$ greedily with $O(g^2)$ colors so that the color classes are stable sets,
i.e., sets of cycles that do not cross. 
Hence there is a color class $\mathcal{K}$ whose cycles support an $\Omega(\frac{1}{g^2})$ fraction of the total flow value.

Next, we throw away all cycles outside $\mathcal{K}$ and
apply the greedy algorithm of Section~\ref{sec:greedyalgorithm} to each free homotopy class of this color class $\mathcal{K}$ separately,
but before, in each free homotopy class of $\mathcal{K}$, we reduce the capacity of every edge in the two \emph{extreme cycles}
to its total flow value in this class, rounded down. 


\begin{lemma}
Each free homotopy class $\mathcal{H}_i$ in $\mathcal{K}$ has two \emph{extreme cycles} $C_i^+$ and $C_i^-$ such that any cycle $C$ of another homotopy class in $\mathcal{K}$ that shares an edge with a cycle in $\mathcal{H}_i$ also shares an edge with $C_i^+$ or $C_i^-$. The set of extreme cycles can be computed in polynomial time. 
\label{lemma:extreme_cycles}
\end{lemma}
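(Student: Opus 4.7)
The plan is to read off the extreme cycles from the topological structure of the complement of $\bigcup\mathcal{H}_i$ in $\Sg$. First, since all cycles in $\bigcup_{\mathcal{H}\in\mathcal{K}}\mathcal{H}$ are pairwise non-crossing (within each class by Lemma~\ref{lemma:homotopicdonotcross}, across classes by the choice of $\mathcal{K}$ as a stable set of $I$), I apply the disjointification construction of the proof of Lemma~\ref{lem:cyclicOrder} to all of them simultaneously, producing an embedded graph $Q$ on $\Sg$ in which any two cycles of $\mathcal{K}$ are disjoint. Restricted to $\mathcal{H}_i$, Claim~\ref{claim:kcomponents} then gives a cyclic ordering $C_1,\dots,C_{k_i}$ of $\mathcal{H}_i$ together with a decomposition of $\Sg\setminus\bigcup_m C_m$ into exactly $k_i$ connected components, each bordered by two cycles consecutive in the cyclic order.

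Since these $k_i$ cycles are pairwise disjoint in $Q$, pairwise freely homotopic, and non-separating, the classical fact that two disjoint freely homotopic essential simple closed curves on an orientable surface cobound an annulus lets me arrange them as $k_i$ parallel circles inside a common annular neighborhood $A\cong S^1\times[0,1]$ of $\Sg$. Consequently $k_i-1$ of the complementary components are small annuli strictly between consecutive cycles inside $A$, while the single remaining ``big'' component $R^*$ --- obtained by gluing the two end-strips of $A\setminus\bigcup_m C_m$ through $S:=\Sg\setminus A$ (which is connected by Fact~\ref{fact:homotopic}) --- carries all the remaining topology, so in particular $\chi(R^*)<0$ whenever $g\ge 2$. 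I define $C_i^+$ and $C_i^-$ to be the two cycles of $\mathcal{H}_i$ bordering $R^*$; for $g=1$ every complementary component is an annulus, but then the conclusion of the lemma will be vacuous and any two cycles may be chosen.

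For correctness, let $C\in\mathcal{H}_j$ with $j\ne i$ and $\mathcal{H}_j\in\mathcal{K}$ share some edge with $\mathcal{H}_i$. Viewed inside $Q$, $C$ is connected and disjoint from $\bigcup_m C_m$, so $C$ lies entirely in a single complementary component. If that component were a small annulus, then $C$ would be freely homotopic inside it to its boundary cycles, hence to cycles of $\mathcal{H}_i$, contradicting $\mathcal{H}_j\ne\mathcal{H}_i$; therefore $C\subseteq R^*$. In the disjointification each edge shared by some run of cycles is replaced by parallel copies arranged in the cyclic order of the sharing cycles, and since $C$ does not cross any $C_m$ its copy of every shared edge must lie on the side of the stack that faces $R^*$, i.e.\ adjacent to the copy of whichever cycle of $\mathcal{H}_i$ borders $R^*$ on that side. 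Hence every edge of the original graph that $C$ shares with $\mathcal{H}_i$ belongs to $C_i^+$ or $C_i^-$, which gives the lemma.

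Algorithmically, $Q$ and the cyclic order are produced in polynomial time by Lemma~\ref{lem:cyclicOrder}; I then enumerate the $k_i$ components of $Q^*\setminus\bigcup_m (C_m)^*$, compute each component's Euler characteristic by counting its vertices, edges and faces, and output as $C_i^+,C_i^-$ the two cycles of $\mathcal{H}_i$ on the boundary of the unique component with $\chi<0$ (or any two cycles of $\mathcal{H}_i$ in the genus-$1$ case). The main obstacle I foresee is making the topological picture fully rigorous --- that the $k_i$ pairwise disjoint, freely homotopic, non-separating cycles cobound a common annular neighborhood, that the complement $S=\Sg\setminus A$ is connected, and that $R^*$ is bordered by exactly two cycles of $\mathcal{H}_i$ --- but each of these follows from Fact~\ref{fact:homotopic} combined with standard surface topology and the cyclic incidence structure of Claim~\ref{claim:kcomponents}.
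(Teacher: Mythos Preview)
Your proposal is correct and follows essentially the same route as the paper: cut the surface along the (disjointified) cycles of $\mathcal{H}_i$, observe that all but one of the resulting components are annuli while the remaining component $R^*$ (the paper's $K$) has positive genus and is bounded by exactly two cycles $C_i^{\pm}$, and conclude that any cycle from another class in $\mathcal{K}$, being non-homotopic to the $\mathcal{H}_i$ cycles, must lie in $R^*$ and hence can only touch $\mathcal{H}_i$ along $C_i^+\cup C_i^-$. The paper's proof is simply terser---it assumes $g\ge 2$ up front and omits the explicit disjointification and the Euler-characteristic computation you use for the algorithmic identification of $R^*$.
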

Intuitively, for each class, the extreme cycles correspond to the pair of cycles that delimits the maximal annulus among all pairs in this class. 
Notice that when a class consists of a single cycle $C$, we have $C_i^+=C_i^-=C$.


\begin{proof}
We can assume that $g\ge 2$, otherwise $\mathcal{K}$ has at most one free homotopy class and the statement is trivially true. Additionally, if $\mathcal{H}_i$ contains exactly one cycle, the statement is also trivially true. 
Then, let $\mathcal{H}$ be a free homotopy class of size at least two. 
Cutting along cycles in $\mathcal{H}_i$ might separate the surface into several components that are all homeomorphic to annuli or disks except one component $K$ that has genus at least one.  Its boundary is contained in the union of two cycles, which we call $C_i^+$ and $C_i^-$.  All other cycles in $\bigcup\mathcal{K}\setminus \mathcal{H}_i$ are contained in $K$. Thus, if a cycle in $\bigcup\mathcal{K}\setminus \mathcal{H}_i$ shares an edge $e$ with a cycle in $\mathcal{H}_i$, this edge must be on $K$'s boundary, and in particular $e\in C_i^+\cup C_i^-$. 
\end{proof}

Thus, for each homotopy class $\mathcal{H}_i$ in $\mathcal{K}$ and each edge $e$ that is contained in an extreme cycle of $\mathcal{H}_i$, we reduce its capacity to $\lfloor f(\mathcal{H}_i)(e)\rfloor$. 
This is sufficient to make the multiflow problems of the free homotopy classes independent of each other 
because any edge that lies on two cycles from two distinct classes must also lie on one of the extreme cycles of the corresponding classes. 
The rounding down loses an additive constant of at most $2|\mathcal{K}|$ 
(at most two per free homotopy class); by Corollary \ref{cor:homotopypolytime}, 
this is $O(g^2\log g)$.
Losing this additive constant can be afforded since this loses only a constant factor unless the optimum value is $OPT=O(g^2\log g)$.

To cover this case, we can guess the value of an optimum integral flow $f^{OPT}(d)$ through each demand edge $d\in D$. 
For each guess, we create an instance of the edge-disjoint paths problem by replacing each demand edge $d\in D$ by $f^{OPT}(d)$ parallel demand edges (of unit capacity), and each supply edge $e\in E$ by $\min(u(e),OPT)$ parallel supply edges (of unit capacity). 
Since $OPT=O(g^2\log g)$, this new graph has polynomial size. Since the number of demand edges in the edge-disjoint paths instance is bounded by a constant, we can apply the polynomial-time algorithm by Robertson and Seymour \cite{ROBERTSON199565} (whose running time $O(n^3)$ was later improved to $O(n^2)$ by \cite{Kawarabayashi12}, with $n$ referring to the number of vertices in the graph) to decide whether this instance is feasible or not. 
Since we need to enumerate only $|D|^{O(g^2\log g)}$ guesses, we can compute an optimal solution $f^{OPT}$ to the original maximum integral multiflow instance in polynomial time, assuming that $OPT=O(g^2\log g)$. 
This concludes the proof of Theorem~\ref{thm:g-square-apx}.

However, due to the last step of this algorithm, this does not imply a stronger bound on the integrality gap shown in Theorem \ref{thm:main} 
or the max-multiflow-min-cut ratio shown in Corollary \ref{cor:flowCutGap}.

\section*{Acknowledgments.} The authors would like to thank Arnaud de Mesmay for useful suggestions. This work was partially funded by the grant ANR-19-CE48-0016 from the French National Research Agency (ANR).

\newpage

\bibliographystyle{siam}
\bibliography{bibliography}



\end{document}